\newtheorem{theorem}{Theorem}[section]
\newtheorem{lemma}{Lemma}[section]
\newtheorem{proposition}{Proposition}[section]
\newtheorem{remark}{Remark}[section]
\newtheorem{define}{Definition}[section]
\newcommand\approxsim{\mathchoice
  {\@approxsim {\displaystyle}      {1ex} }
  {\@approxsim {\textstyle}         {1ex} }
  {\@approxsim {\scriptstyle}       {.7ex}}
  {\@approxsim {\scriptscriptstyle} {.5ex}}}
\newcommand\@approxsim[2]{%
  \mathrel{%
    \ooalign{%
      $\m@th#1\sim$\cr
      \hidewidth$\m@th#1.$\hidewidth\cr
      \hidewidth\raise #2 \hbox{$\m@th#1.$}\hidewidth\cr
    }%
  }%
}
\def\BibTeX{{\rm B\kern-.05em{\sc i\kern-.025em b}\kern-.08em
    T\kern-.1667em\lower.7ex\hbox{E}\kern-.125emX}}
\begin{document}
\title{On Fast Attitude Filtering Using Matrix Fisher Distributions with Stability Guarantee}
\author{Shijie Wang, Haichao Gui, and Rui Zhong
\thanks{This work was supported in part by the National Natural Science
Foundation of China under Grant 12422214.}
\thanks{Shijie Wang is with School of Astronautics, Shenyuan Honors College, Beihang University, Beijing, 100191, China (e-mail: wangsj2001@buaa.edu.cn). }
\thanks{Haichao Gui is with School of Astronautics and the MOE Key Laboratory of Spacecraft Design, Optimization and Dynamic Simulation, Beihang University, Beijing, 100191, China (e-mail: hcgui@buaa.edu.cn, corresponding author). }
\thanks{Rui Zhong is with School of Astronautics, Beihang University, Beijing, 100191, China (e-mail: zhongruia@163.com). }}

\maketitle
\begin{abstract}
This paper addresses two interrelated problems of the nonlinear filtering mechanism and fast attitude filtering with the matrix Fisher distribution (MFD) on the special orthogonal group. By analyzing the distribution evolution along Bayes' rule, we reveal two essential properties that enhance the performance of Bayesian attitude filters with MFDs, particularly in challenging conditions. Benefiting from the new understanding of the filtering mechanism associated with MFDs, two closed-form filters with MFDs are then proposed. These filters avoid the burdensome computations in previous MFD-based filters by introducing linearized error systems with right-invariant errors but retaining the two advantageous properties. The proposed filter with right-invariant error is proven to be almost globally asymptotically stable for any trajectory on $SO(3)$ leveraging its closed-form iteration and global uncertainty representation with MFDs. 
Moreover, we further prove the local exponential stability of the filter for single-axis rotations to reveal the effect of the two properties on the convergence rate.  These stability results support the performance of the proposed filter with large initial error from a theoretical viewpoint, which to our knowledge, is not achieved
by existing directional statistics-based filters.  Numerical simulations demonstrate that proposed filters are as accurate as recent MFD-based Bayesian filters in challenging circumstances but consume far less computation time (about 1/5 to  1/100 of previous MFD-based attitude filters).
\end{abstract}

\begin{IEEEkeywords}
 Bayesian filters, attitude estimation, matrix Fisher distribution, nonlinear filtering, special orthogonal group
\end{IEEEkeywords}

\section{Introduction}
\label{sec:introduction}
 \IEEEPARstart{T}{he} problem of attitude estimation with uncertain noisy measurements is a core task in the navigation of mobile robotics, aeronautics, astronautics, and other realms. In these realms, many systems, such as unmanned aerial vehicles and nanosatellites, have limited computational resources and low-cost sensors. Meanwhile, the unique fact that attitude evolves on the special orthogonal group $SO(3)$, makes it difficult for attitude estimators to accurately handle large errors and uncertainties with low computational complexity. As a result, it is valuable and challenging to design efficient attitude estimators on $SO(3)$ with limits of computational resources.

Attitude observers, as deterministic approaches, are designed directly on $SO(3)$, avoiding singularities of local coordinates and achieving theoretical guarantees of stability and convergence speed. In \cite{mahony2008nonlinear}, non-linear complementary filters were proposed and proved to be almost globally stable with the Lyapunov analysis. Higher-order nonlinear complementary filters were proposed in \cite{zlotnik2018higher} for estimation on Lie group. More recently, hybrid observers were designed for attitude estimation, providing stronger forms of stability, such as global asymptotic stability \cite{wu2015globally} and global exponential stability \cite{berkane2017hybrid}. The hybrid observers were further developed for even more formulations, e.g., simultaneous attitude and gyro bias estimation with global exponential stability \cite{berkane2017hybridbias}, pose estimation using landmark measurements \cite{wang2020hybrid}, attitude estimation using intermittent and multirate vector measurements \cite{wang2023nonlinear}, etc. A notable advantage of these approaches is that they admit rigorous Lyapunov-based proofs of stability and convergence rates, which makes them highly reliable and theoretically sound. Instead of deterministic observers, this paper focuses on stochastic estimation techniques. Compared to deterministic observers, they take stochastic properties into account and can describe the uncertainty of estimated states in terms of probability density.

For attitude estimation, the classic multiplicative extended Kalman filter (MEKF) was developed in \cite{leffertsKalmanFilteringSpacecraft1982} and \cite{markley2003attitude}, representing attitude by quaternion and uncertainty distribution around the mean attitude by a Gaussian distribution on the three-dimensional Euclidean space. More recently, the invariant extended Kalman filter (IEKF) was developed in \cite{barrauIntrinsicFilteringLie2015,barrauInvariantExtendedKalman2017,barrauStochasticObserversLie2018,barrau2016invariant,barrauExtendedKalmanFiltering2020,barrau2022geometry} by utilizing more geometric and algebraic properties of Lie groups and was applied to estimate attitude for spacecraft navigation, 
 \cite{guiQuaternionInvariantExtended2018}, SLAM \cite{barrau2015ekf,liuInGVIOConsistentInvariant2023,songRightInvariantExtended2022} and other industrial applications. In the IEKF, the Gaussian distributions in exponential coordinates (a coordinate system with 
the exponential map near the group identity \cite{chirikjian2011stochastic}) are used to represent the uncertainty of states \cite{barrauInvariantKalmanFiltering2018} and the linearized error system is adopted to propagate uncertainty. Another Gaussian distribution in exponential coordinates, referred to as the concentrated Gaussian distribution (CGD) \cite{Wang2006ErrorPO, wang2008nonparametric, wolfe2011bayesian} is also used to represent 
the uncertainty of states and to construct Kalman filters (KF) on Lie groups, such as CD-LR-EKF  \cite{Bourmaud2014ContinuousDiscreteEK}. 
When applying these two distributions to derive variants of the EKF, the resultant filtering gain tuning and covariance update still depend on the linearized error system and the properties of Gaussian distributions on the n-dimensional Euclidean space, which remain the same as the KF. This implies that the state error is still treated as a random variant following Gaussian distributions. This strategy is effective for systems with sufficient \textit{a priori} information and small measurement errors but limits the performance of the IEKF and other similar filters in challenging circumstances with large initial errors and measurement uncertainties. For example, it is difficult for the Gaussian distribution in local coordinates to characterize the attitude with high uncertainty due to wrapped errors
 \cite{wang2021matrix}.

To address the shortcomings brought by the structure of filters with Gaussian distributions in the Euclidean space, the robotics community has increasingly recognized the importance of properly defining distributions on Lie groups \cite{barrauInvariantKalmanFiltering2018}. A series of notable approaches to construct filters with distributions on $SO(3)$ are assumed density filters with directional statistics, a branch of statistics dealing with observations lying on the nonlinear manifold \cite{mardiaDirectionalStatistics1999,leyModernDirectionalStatistics2017,leyAppliedDirectionalStatistics2019}. The assumed density filters for attitude estimation make use of the structure of Bayesian filters and the distributions provided by directional statistics to approximate the underlying posterior distribution of attitude, such as the matrix Fisher distribution (MFD) on Stifel manifolds \cite{downsOrientationStatistics1972,khatriMisesFisherMatrixDistribution1977} and the Bingham distribution on the n-dimensional sphere \cite{binghamAntipodallySymmetricDistribution1974}. In \cite{leeBayesianAttitudeEstimation2018}, Bayesian attitude filters were designed with MFDs on $SO(3)$, using the first moment to propagate distributions along the kinematics. Meanwhile, several attitude filters have been constructed based on Bingham distributions on the three-dimensional sphere \cite{kurzUnscentedMisesFisher2016,wangBinghamGaussianDistributionBackslash2021}. Recently, attitude filters were proposed with a newly defined matrix Fisher-Gaussian distribution on $SO(3) \times \mathbb{R}^n$ in \cite{wang2020matrix} and \cite{wang2021matrix}, with methodologies similar to those in \cite{leeBayesianAttitudeEstimation2018}.

A range of assumed density filters with the MFDs demonstrate superiorities compared to the classic MEKF in challenging circumstances but have computational burdens that far exceed the MEKF. Although approximate solutions were presented in \cite{leeBayesianAttitudeEstimation2018b} to accelerate computations, these filters still spend much more computation time than variants of the EKF, as shown in Section \ref{sec: Simulation}. Furthermore, the reason why filters with MFDs perform better than variations of the EKF such as the MEKF in challenging circumstances, still needs more in-depth investigations. In \cite{wang2021matrix}, it was intuitively attributed to the reasons that these works do not use linearization to construct estimators and avoid the wrapped errors caused by Gaussian distributions on local coordinates. There is, however, a lack of comprehensive theoretical explanations on how wrapped errors affect attitude estimation and why the uncertainties in attitude estimation are increased by the MFD-based estimator when initial errors are large
 and the initial confidence is falsely high, as observed in \cite{leeBayesianAttitudeEstimation2018,wang2020matrix}. What mechanisms make the MFD-based filters perform better and how to design fast MFD-based filters are two closely interrelated problems. Once the crucial properties associated with the advantageous performance of MFD-based filters are identified mathematically, new filters can be designed by preserving these properties and streamlining the rest for faster computations. 

In this paper, we reveal two properties intrinsic in MFDs that improve the performance of MFD-based attitude filters, and then propose two fast MFD-based filters. The proposed filters retain the two properties while introducing linearized error systems to significantly enhance computational speed. As a result, they are almost as accurate as previous MFD-based attitude filters but consume far less computational time. The proposed filter with right-invariant error is endowed with almost global asymptotic stability for any trajectory on $SO(3)$ and local exponential stability for the case of single-axis rotations. To the best of our knowledge, there are no results in the literature about estimators with directional statistics achieving any stability properties. The detailed methodologies and contributions are summarized as follows.

In Section \ref{sec:filter_mec}, the evolution of the distribution on $SO(3)$ along Bayes' rule is studied, which indicates how the statistical characteristics of the prior and the likelihood distributions impact the posterior distribution. To further characterize the distribution on $SO(3)$, we define two new parameters for both MFDs and CGDs, which are referred to as the mean angle and concentration parameter respectively. The nonlinear recurrence relation of the two parameters describe how the differences in mean attitudes affect the posterior distribution.  By comparing their expressions for the posterior distribution derived from the Bayesian filter with MFDs and CGDs, two key properties of the MFD-based Bayesian filter are identified mathematically:
\begin{itemize}
    \item  Due to the special correlation of the two parameters, the difference in the mean attitudes of the prior and the likelihood distributions raises the uncertainty of the posterior distribution for rotations around a specific axis.
    \item  Because of the nonlinear recurrence relations of the parameters, the mean angle of the posterior distribution from the filter with MFDs is always closer to a higher-confidence mean angle (between the likelihood and the prior distributions) than that from the filter with CGDs.

\end{itemize}
The above properties endows MFD-based filters with robustness to initial errors. The two properties are further connected with the exponential convergence rate for single-axis rotation in Section \ref{sec:stability}.


These two properties are intrinsic to MFDs and independent of the accuracy of the method used to approximate the underlying prior and likelihood distributions (such as the first-moment matching or the unscented transformation). Thus, in Section \ref{sec:Filter}, we construct two fast nonlinear filters with MFDs on $SO(3)$, which utilize linearization to approximate the prior and the likelihood distributions but retain the two key properties. We establish two probability models with left- and right-invariant errors respectively to represent the uncertainty of attitude, and both two invariant errors are characterized by MFDs, instead of Gaussian distributions in exponential coordinates as adopted by the IEKF. In the filtering procedure, the parameters of MFDs are calculated through linearized error equations to approximate the underlying prior and likelihood distributions with first-order precision. 
Then the prior and the likelihood distributions, approximated by MFDs, are fused with Bayes' rule to obtain the posterior distribution on $SO(3)$. Due to the new understanding of the properties of Bayesian filters with MFDs, the two filters retain the key factors enabling higher estimation accuracy while avoiding solving nonlinear equations, as required by previous MFD-based estimators \cite{leeBayesianAttitudeEstimation2018}, for uncertainty propagation, which significantly reduces the computational complexity of the proposed filters. Then, the stability properties of the proposed filter with right-invariant error as a deterministic observer are studied in Section \ref{sec:stability}. The proposed filter is proven to be almost globally asymptotically stable on $SO(3)$, presenting the first stability results for estimators based on directional statistics. Additionally, the one-dimensional form of the proposed filter is derived for a single-axis rotation case, leading to an upper bound on the exponential convergence rate, which establishes the direct relation between the two properties and the convergence rate.

In challenging circumstances with large initial error and measurement uncertainty, numerical simulations presented in Section \ref{sec: Simulation} illustrate that the two proposed filters exhibit almost identical accuracy as the estimator in \cite{leeBayesianAttitudeEstimation2018}. Moreover, the proposed filters only spend $1/5\sim 1/100$ of the previous MFD-based attitude filters in \cite{leeBayesianAttitudeEstimation2018} and \cite{leeBayesianAttitudeEstimation2018b}. The conclusion is drawn in Section \ref{sec: Conclusion}.

\section{Preliminaries}
\label{sec:Preliminaries}
\subsection{Notations and Some Facts}
Throughout this paper,  $\mathbb{R}^n$ denotes the $n$-dimensional Euclidean space, and $I_n$ denotes the $n \times n$ identity matrix. Consider a rigid body with its body-fixed frame $\mathcal{F}_B = \{b_1, b_2, b_3\}$, where the vector $b_i \in \mathbb{R}^3$ is the $i$-th basis of $\mathcal{F_B}$. Similarly, we define an inertial frame and its basis $\mathcal{F}_I = \{e_1, e_2, e_3\}$.  Then the attitude of the rigid body is the orientation of $\mathcal{F_B}$ relative to $\mathcal{F_I}$, which can be represented by a rotation matrix $R\in \mathbb{R}^{3\times 3}$ given by $R_{ij}=e_i\cdot b_j$, where $R_{ij}$ is the element of the $i$-th row and $j$-th column of $R$. When both $\mathcal{F_B}$ and $\mathcal{F_I}$ are right-handed orthonormal frames, the rotation matrices with matrix multiplication form the special orthogonal group
\[
SO(3) = \{ R\in \mathbb{R}^{3 \times 3} | R^TR=I_3, \text{det}[R]=1 \},
\]
where $\text{det}[\cdot]$ gives the determinant of a matrix. The tangent space at $R \in SO(3)$ is denoted by $T_RS O(3)$. The Lie algebra of $SO(3)$ is denoted by $\mathfrak{so}(3) = \left\{ { X \in \mathbb{R} ^{3\times 3} | X = -X^T } \right\}$,  and the map: $\wedge : \mathbb{R}^3 \rightarrow \mathfrak{so}(3)$ is an isometry between $\mathfrak{so}(3)$ and $\mathbb{R} ^{3}$. The hat map is defined such that $x^{\wedge}\ y = x\times y$ for any $x,y\in \mathbb{R}^3$, and its inverse map is denoted by $\vee$. For $n\times n$ symmetric matrices $A$ and $B$, $A\preceq B$ denotes that $B-A$ is positive semi-definite, and the notation $A \prec B$ means $B-A$ is positive definite.

The notations $\tilde{a}$ and $\hat{a}$ denote a measured variable and an estimated variable respectively. The subscript $k$ denotes the $k$-th time-step of the time sequence $\{ t_0, t_1, \dots \}$ and the superscript $i$ denotes the $i$-th term of the set $\{ a^1, a^2, \dots \}$. The trace of a matrix is denoted by $\text{tr}(\cdot)$, and the function $e^{\text{tr}(\cdot)}$ is abbreviated as $\text{etr}(\cdot)$. The sign function is denoted by ${\rm sign(\cdot)}$. The Frobenius norm of a matrix $M\in \mathbb{R}^{3\times 3}$ is defined as $\Vert M \Vert_F = \sqrt{\text{tr}\left( M^TM \right)}$. 


The following equations involving the map $\wedge$ and trace will be repeatedly used. For any $A\in \mathbb{R}^{3\times3}$, $R\in SO(3)$, and $a,b \in \mathbb{R}^{3}$, we have 
\begin{align}
    \left(Ra\right)^{\wedge} &= Ra^{\wedge}R^T \label{eq:crossTr1} \\
    \left( a^{\wedge}A+A^Ta^{\wedge} \right) &= [\left({\rm tr}\left(A\right)I_3-A\right)a]^{\wedge}, \label{eq:crossTr2} \\
    \left(ba^T- ab^T\right)^{\vee} &= a^{\wedge}b. \label{eq:crossTr3}
\end{align}
\subsection{Distributions on SO(3)}

\subsubsection{The matrix Fisher distribution}
The MFD is a compact exponential density on the Stifel manifold. In recent years, the MFD has been applied to $SO(3)$ to represent the uncertainty of attitude, which was introduced initially in \cite{leeBayesianAttitudeEstimation2018}. The probability density function (PDF) of a random rotation matrix $R \sim \mathcal{M}(F)$ is defined as
\begin{equation}
    p(R\in SO(3);F) = \frac{1}
{{c\left( F \right)}}\text{etr}\left[ {F^T R} \right],\label{eq:de_MF}
\end{equation}
where $F \in \mathbb{R}^{3\times3}$ is a parameter and $c(F)\in \mathbb{R}$ is the normalized constant. $c(F)$ is calculated by $c(F) = \int_{SO(3)} {\text{etr}\left[{F^TR}\right]dR}$, where $dR$ is the bi-invariant Haar measure. The parameter $F$ can be decomposed via the proper singular value decomposition (SVD) \cite{markleyAttitudeDeterminationUsing1988} as follows
\begin{equation}
    F=USV^T,\label{eq:usv}
\end{equation}
where $U,V \in SO(3)$ and $S = \text{diag}[s_1, s_2, s_3]\in \mathbb{R}^{3\times3}$ with $s_1 \geq s_2 \geq |s_3| \geq 0$. Given a random rotation matrix $R \sim \mathcal{M}(F)$, the \textit{mean attitude} of $R$ is defined as the attitude that maximizes the density function and minimizes the Frobenius mean squared error, given by $M = UV^T \in SO(3)$.

The $i$-th principal axis of the MFD is studied in \cite{leeBayesianAttitudeEstimation2018}, which is $Ue_i$ when resolved in $\mathcal{F}_I$, or $Ve_i$ when resolved in $\mathcal{F}_B$. The parameter $s_j + s_k$ illustrates the concentration of dispersion relative to the $i$-th principal axis. We can also apply the proper right polar decomposition to $F$, which gives
\begin{equation}
    F = MK,\label{eq:polar_r}
\end{equation}
where $M\in SO(3)$ is called the polar component  and $K\in \mathbb{R}^{3 \times 3}$ is a symmetric positive-definite matrix referred to as the elliptical component. Compared with (\ref{eq:usv}),  it follows that $M=UV^T$ and $K=VSV^T$, which represent the center of the distribution, and the dispersion relative to the mean attitude, respectively. Similarly, the proper left polar decomposition of $F$ is given by
\begin{equation}\label{eq:polar_l}
    F=K^{\prime} M^{\prime},
\end{equation}
where $K^{\prime} = USU^T$ and $M^{\prime}=UV^T$.

Next, we present the transformation of MFDs under rotation and transpose.
\begin{lemma}\label{le:rotation}
    Given fixed matrices $R_l,R_r \in SO(3)$, and $R \sim \mathcal{M}(F)$, the random rotation matrix $R^{\prime} = R_l R R_r \in SO(3)$ follows a MFD on $SO(3)$ with
    \begin{equation}
        R^{\prime} \sim \mathcal{M}\left( R_l F R_r \right).\label{eq:trans}
    \end{equation}
\end{lemma}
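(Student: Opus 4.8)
The plan is to combine the bi-invariance of the Haar measure $dR$ on $SO(3)$ with the cyclic invariance of the trace. Fix $R_l,R_r\in SO(3)$ and consider the map $\phi:SO(3)\to SO(3)$, $\phi(R)=R_lRR_r$, which is a diffeomorphism with inverse $\phi^{-1}(R')=R_l^{T}R'R_r^{T}$. First I would argue that $\phi$, being a composition of a left translation and a right translation on the group, is measure-preserving with respect to $dR$; this is exactly the bi-invariance of the Haar measure, so the Jacobian factor in the change-of-variables (push-forward) formula is $1$. Consequently the density of $R'=\phi(R)$ at an arbitrary point $R'\in SO(3)$ is simply the density of $R$ evaluated at $\phi^{-1}(R')$, namely
\[
p(R';\cdot)=p\!\left(R_l^{T}R'R_r^{T};F\right)=\frac{1}{c(F)}\,\text{etr}\!\left[F^{T}R_l^{T}R'R_r^{T}\right].
\]

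Next I would rewrite the exponent using $\text{tr}(AB)=\text{tr}(BA)$: since $\text{tr}\!\left(F^{T}R_l^{T}R'R_r^{T}\right)=\text{tr}\!\left(R_r^{T}F^{T}R_l^{T}R'\right)=\text{tr}\!\left((R_lFR_r)^{T}R'\right)$, we obtain $p(R';\cdot)=\frac{1}{c(F)}\,\text{etr}\!\left[(R_lFR_r)^{T}R'\right]$, which already has the functional form of $\mathcal{M}(R_lFR_r)$. It then remains to check that the constant $c(F)$ coincides with the normalizing constant $c(R_lFR_r)$ of the claimed distribution. Substituting $R=\phi(Q)$ in the integral defining $c(R_lFR_r)$ and invoking bi-invariance once more ($dR=dQ$) gives
\[
c(R_lFR_r)=\int_{SO(3)}\text{etr}\!\left[(R_lFR_r)^{T}R_lQR_r\right]dQ=\int_{SO(3)}\text{etr}\!\left[F^{T}Q\right]dQ=c(F),
\]
using $R_l^{T}R_l=R_rR_r^{T}=I_3$ and cyclicity of the trace in the middle step. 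Hence $p(R';\cdot)=\frac{1}{c(R_lFR_r)}\text{etr}\!\left[(R_lFR_r)^{T}R'\right]$, which is precisely the PDF \eqref{eq:de_MF} with parameter $R_lFR_r$, so $R'\sim\mathcal{M}(R_lFR_r)$.

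The only delicate point — and thus the ``main obstacle,'' though it is a mild one — is the rigorous justification that simultaneous left and right translation on $SO(3)$ preserves $dR$ and therefore contributes a unit Jacobian; everything after that is bookkeeping with the cyclic property of the trace. I would also note in passing that no admissibility restriction on the parameter is required here, since every matrix in $\mathbb{R}^{3\times3}$ is a legitimate MFD parameter (the proper SVD in \eqref{eq:usv} exists for any such matrix), so $R_lFR_r$ automatically qualifies; the transpose case mentioned in the surrounding text can be handled by the same change-of-variables argument applied to $R\mapsto R^{T}$, which likewise preserves $dR$.
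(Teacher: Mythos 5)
Your proposal is correct and follows essentially the same route as the paper: bi-invariance of the Haar measure gives a unit Jacobian under $R\mapsto R_lRR_r$, and cyclicity of the trace turns the exponent into $\mathrm{tr}\left[(R_lFR_r)^TR'\right]$. The paper simply works up to proportionality and leaves the normalizing-constant identity $c(R_lFR_r)=c(F)$ implicit, whereas you verify it explicitly, which is a harmless elaboration rather than a different argument.
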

\begin{proof}
    Since $dR$ is the bi-invariant Haar measure, we can derive
    \begin{align*}
        p(R^{\prime}) &\propto {\rm etr}\left( F^T R \right)={\rm etr}\left( F^T R_l^T R^{\prime} R_r^T \right)\\
             &=       {\rm etr}\left((R_l F R_r)^T R^{\prime}  \right),
    \end{align*}
    which shows (\ref{eq:trans}).
\end{proof}

\begin{proposition}
    Suppose that $R \sim \mathcal{M}(N)$ and $N \in \mathbb{R}^{3 \times 3}$ is a symmetric matrix. The random rotation matrix $R_T=R^T$ is characterized by the following probability,
    \begin{equation}\label{eq:trans2}
        p(R_T) = \frac{1}{c(N)} {\rm etr}\left( N R_T \right),
    \end{equation}
    i.e., $R_T \sim \mathcal{M}(N)$ as well.
\end{proposition}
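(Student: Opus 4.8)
The plan is to read off the law of $R_T=R^T$ from that of $R$ via the change of variables $R\mapsto R^T$ on $SO(3)$, in the same spirit as the proof of Lemma~\ref{le:rotation} but with the inversion map in place of left/right translations. The single fact required beyond elementary matrix algebra is that the bi-invariant Haar measure $dR$ is \emph{also} invariant under $R\mapsto R^{-1}=R^T$; this is automatic because $SO(3)$ is compact, hence unimodular, so $\int_{SO(3)}g(R^T)\,dR=\int_{SO(3)}g(R)\,dR$ for every integrable $g$. (Should one prefer not to invoke unimodularity, the same statement restricted to the relevant integrand follows directly from $c(N)=\int_{SO(3)}\text{etr}(N^TR)\,dR$ and the symmetry of $N$.)

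Granting this, the first step is the change-of-variables computation: for every $Q\in SO(3)$, the density of $R_T$ at $Q$ equals the density of $R$ at $Q^T$, i.e. $\tfrac{1}{c(N)}\text{etr}(N^TQ^T)$, since the Jacobian of the transpose map is absorbed into the measure-invariance just noted. The second step is a one-line trace identity, using invariance of $\text{tr}(\cdot)$ under transposition and cyclic permutation together with $N=N^T$: $\text{tr}(N^TQ^T)=\text{tr}(NQ^T)=\text{tr}(QN)=\text{tr}(NQ)$. Hence the density of $R_T$ at $Q$ is $\tfrac{1}{c(N)}\text{etr}(NQ)$, which is exactly (\ref{eq:trans2}) after renaming $Q$ as $R_T$; and since $N$ is symmetric this also equals $\tfrac{1}{c(N)}\text{etr}(N^TR_T)$, the PDF of $\mathcal{M}(N)$. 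In particular the normalizing constant is unchanged, so $R_T\sim\mathcal{M}(N)$.

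I do not expect any real obstacle: the statement is essentially a built-in symmetry of the MFD family. The only point needing care is the invariance of $dR$ under transposition; note that, unlike the translations handled in Lemma~\ref{le:rotation}, the transpose is an anti-automorphism of $SO(3)$, so it is worth stating explicitly that it still preserves the Haar measure. As indicated this is either a standard compact-group fact or can be checked by hand on the exponential-trace integrand, and everything else is the trace manipulation above, analogous to (\ref{eq:crossTr1})--(\ref{eq:crossTr3}).
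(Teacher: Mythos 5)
Your proposal is correct and follows essentially the same route as the paper: a change of variables under $R\mapsto R^T$ followed by the trace/symmetry manipulation $\text{tr}(N^TR^T)=\text{tr}(NR_T)$. The only difference is that you justify the measure-preservation of the transpose map via inversion-invariance of Haar measure on a compact group, which is a more careful rendering of the paper's terse ``Jacobian $\partial R/\partial R_T=I_3$'' step, not a different argument.
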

\begin{proof}
    Using the Jacobian matrix $\partial R / \partial R_T = I_3$, the density of $R_T$ is
    \begin{equation*}
        p(R_T) \propto {\rm etr}\left( N^T R \right) = {\rm etr}\left( R_T N \right) = {\rm etr}\left( N R_T \right),
    \end{equation*}
    which shows (\ref{eq:trans2}).
\end{proof}


\subsubsection{The concentrated Gaussian distribution}
The CGD \cite{Wang2006ErrorPO, wang2008nonparametric, wolfe2011bayesian,Bourmaud2014ContinuousDiscreteEK} is a generalization of the Gaussian distribution on Lie groups. Following \cite{wolfe2011bayesian}, the PDF of a random rotation matrix $R$ following CGD is defined as
\begin{subequations}
    \begin{align}\label{eq:left_CGD}
        &p(R\in SO(3);M,P) \nonumber\\
        &= \frac{1}{c} \exp \left\{-\frac{1}{2}  \left[ \log \left(M^TR\right) ^{\vee}\right]^T (P)^{-1} \left[ \log \left(M^TR\right) ^{\vee}\right] \right\},
    \end{align}
or
\begin{align}\label{eq:right_CGD}
        &p(R\in SO(3);M,P) \nonumber\\
        &= \frac{1}{c} \exp \left\{-\frac{1}{2}  \left[ \log \left(RM^T\right) ^{\vee}\right]^T (P)^{-1} \left[ \log \left(RM^T\right) ^{\vee}\right] \right\},
    \end{align}
\end{subequations}
where $c$ is the normalized constant, $M \in SO(3)$ is the mean attitude and $P \in \mathbb{R}^{3 \times 3}$ is a symmetric matrix. The rotation matrices following (\ref{eq:left_CGD}) or (\ref{eq:right_CGD}) are denoted by $R\sim\mathcal{N}_L(M,P)$ or $R\sim\mathcal{N}_R(M,P)$. 

\subsection{System Models}
Let $R \in SO(3)$ represent the attitude of a rigid body with respect to $\mathcal{F}_I$. The continuous attitude kinematics without noise in terms of rotation matrix is written as
\begin{equation}
    \label{eq:AttitudeKm}
    R_t^T dR_t =\omega_tdt,
\end{equation}
where $\omega_t \in \mathbb{R}^3$ is the angular velocity relative to  $\mathcal{F}_I$ expressed in $\mathcal{F}_B$. In practical attitude estimation tasks, the angular velocity is often measured by rate gyroscopes, and its typical measurement model without bias is given by
\begin{equation}
\label{eq:gyro}
    \tilde{\omega}_t = \omega_t - HdW,
\end{equation}
where $dW$ is a 3-dimensional Wiener process and $H \in \mathbb{R}^{3 \times 3}$ describes the magnitude of noise. The kinematics (\ref{eq:AttitudeKm}) and (\ref{eq:gyro}) can be discretized with time sequence $\{ t_0, t_1, \dots, t_k, \dots \}$ and the fixed time step $h=t_k-t_{k-1}$ as
\begin{equation}\label{eq:AttKiDe}
    R_{k} = R_{k-1} \exp\left\{ (h\tilde{\omega}_k + w_k)^{\wedge} \right\},
\end{equation}
where $w_k = H_k\Delta W$ is a centered Gaussian with a covariance matrix $Q_k = hH_kH_k^T$. Additionally, vector measurements  are usually used to correct the accumulation error of gyros, and the corresponding measurement model is given by
\begin{equation}\label{eq:AttMea}
    \tilde{b}^i_k = R_k^T e_k^i + v^i_k, i=1,2,\dots,n
\end{equation}
where $e_k^i \in \mathbb{R}^3$ is a known vector expressed in $\mathcal{F}_I$, $\tilde{b}_k^i$ is the corresponding measurement expressed in $\mathcal{F}_B$ and $v^i_k$ is a centered Gaussian noise with a covariance matrix $G_k^i$.

\section{Nonlinear Filtering Mechanism Caused by MFDs }
\label{sec:filter_mec}
In this section, we analyze the evolution of the distribution on $SO(3)$ along Bayes' rule, which leads to the differences in the filtering mechanism between MFD-based Bayesian filters and CGD-based Bayesian filters. 

\subsection{Distributions on the subset of SO(3)}\label{sec:para}

We focus on a special subset containing the mean attitudes of both the prior and the likelihood distributions, rather than the entire $SO(3)$, such that how the difference in mean attitudes of the prior and the likelihood distributions influences the posterior distribution are highlighted. The following defines a subset generated by the mean attitude rotating around a specific axis.

\begin{define}\label{de:subset}
  Consider a deterministic attitude represented by $M$. The subset of $SO(3)$ generated by $M$ rotating around a specific axis $w_0$ is denoted by $\mathcal{S}_M (w_0)$, which is defined as
    \begin{equation}\label{eq:def_subset}
       \mathcal{S}_M (w_0)\! \triangleq\! \{ R(\theta)\!\in\! SO(3) | R\!(\theta)\! = \!M \exp (\theta \omega_0 ^{\wedge}),\!\ \theta \!\in\! [ -\pi,\pi )\}.
    \end{equation}
\end{define}

\begin{proposition}\label{the:subset_2}
    Consider any two rotation matrices $M_1,\ M_2 \in SO(3)$. There always exists a subset $\mathcal{S}_{M_0 }(w_0)$ such that $M_1,\ M_2 \in \mathcal{S}_{M_0 }(w_0)$, where $M_0$ and $w_0$ are given by
    \begin{equation*}
        M_0=M_1,\ w_0=\frac{\log\left(M_1^TM_2\right)^{\vee}}{\Vert 
\log\left(M_1^TM_2\right)^{\vee} \Vert}
    \end{equation*}
\end{proposition}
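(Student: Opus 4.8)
The plan is to show directly that both $M_1$ and $M_2$ arise as $R(\theta)$ in (\ref{eq:def_subset}) for the stated $M_0 = M_1$ and $w_0$, by exhibiting an explicit value of $\theta$ for each. For $M_1$ this is immediate: with $\theta = 0$ we get $R(0) = M_0\exp(0\cdot w_0^{\wedge}) = M_1$, regardless of the axis $w_0$. So the whole content of the proposition is that $M_2$ is reached by a single rotation of $M_1$ about the axis $w_0$ through some angle in $[-\pi,\pi)$.

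For that I would invoke the exponential/logarithm correspondence on $SO(3)$ (already used freely in Section \ref{sec:Preliminaries}): the relative rotation $M_1^T M_2 \in SO(3)$ can be written as $\exp(\theta_0 u^{\wedge})$ for a unit vector $u$ and an angle $\theta_0 \in [0,\pi]$, and when $M_1 \neq M_2$ one has $\theta_0 > 0$ together with $\log(M_1^T M_2)^{\vee} = \theta_0 u$, so that $u$ coincides with the normalized logarithm appearing in the statement and $\theta_0 = \Vert\log(M_1^T M_2)^{\vee}\Vert$. Then $M_2 = M_1\exp(\theta_0 w_0^{\wedge}) = M_0\exp(\theta_0 w_0^{\wedge})$, i.e., $M_2 = R(\theta_0) \in \mathcal{S}_{M_0}(w_0)$ as soon as $\theta_0 \in [-\pi,\pi)$.

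Two edge cases need a word, and these are where I expect the only mild friction. First, if $M_1 = M_2$ the displayed formula for $w_0$ is a $0/0$ expression; here the statement should be read as allowing any unit $w_0$, and $M_1 = M_2 = M_0\exp(0\cdot w_0^{\wedge})$ trivially. Second, if the rotation angle between $M_1$ and $M_2$ is exactly $\pi$, then $\theta_0 = \pi \notin [-\pi,\pi)$; but since rotations by $\pi$ and $-\pi$ about the same axis agree, $\exp(\pi w_0^{\wedge}) = \exp(-\pi w_0^{\wedge})$, so one takes the representative $\theta = -\pi \in [-\pi,\pi)$ instead. Dealing with the half-open interval at its boundary is the main thing to be careful about; everything else is a direct application of the surjectivity of $\exp$ onto $SO(3)$ and the definition of $\log$ on $SO(3)$.
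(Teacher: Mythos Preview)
Your proposal is correct and follows essentially the same approach as the paper: both set $\theta_0 = \Vert\log(M_1^T M_2)^{\vee}\Vert$ and verify $M_0\exp(\theta_0 w_0^{\wedge}) = M_1\exp(\log(M_1^T M_2)) = M_2$. You are in fact more careful than the paper, which does not address the degenerate cases $M_1 = M_2$ or $\theta_0 = \pi$ at all; your handling of these boundary issues is appropriate and adds rigor the original proof lacks.
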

\begin{proof}
    Substituting $\theta = \Vert 
\log\left(M_1^TM_2\right)^{\vee} \Vert$ into (\ref{eq:MF2vMF}), we have
\begin{equation*}
    R(\theta) = M_1 \exp\left( \log\left(M_1^TM_2\right) \right) = M_2,
\end{equation*}
which indicates $M_2 \in \mathcal{S}_{M_0 }(w_0)$.
\end{proof}
Definition \ref{de:subset} and Proposition \ref{the:subset_2} can be viewed as an equivalence of Euler's rotation theorem. The proposition implies that the mean attitudes of any two distributions can be included in the same subset. Besides, the difference between the mean attitudes of two distributions, such as the prior and the likelihood distributions, can be represented by a one-dimensional scalar $\theta$.

Then, the probability density functions of both MFDs and CGDs on the subset defined by Definition \ref{de:subset} are each proven to be single variable functions of $\theta$ with two parameters.

\begin{theorem}\label{the:subset_1}
    Consider a random attitude $R\sim \mathcal{M}(F)$ and the proper right polar decomposition of $F$ is given by $F=MK$. Let $\mathcal{S}_{M_0 }(w_0)$ be a subset of $SO(3)$ defined as (\ref{eq:def_subset}). If  the mean attitude $M$ belongs to $\mathcal{S}_{M_0 }(w_0)$, the probability density of  the elements in $\mathcal{S}_{M_0 }(w_0)$ can be rewritten as
    \begin{align}\label{eq:MF2vMF}
        p(R(\theta)) \propto \exp \left[ \left( {\rm tr}(K) - w_0^TKw_0 \right)\cos(\theta-\bar{\theta}) \right],
    \end{align}
    where $\bar{\theta} ={w_0 \cdot  \log(M_0M^T)^{\vee}}$.
\end{theorem}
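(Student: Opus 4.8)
The plan is to substitute the one-parameter family $R(\theta) = M_0\exp(\theta w_0^{\wedge})$ directly into the MFD density \eqref{eq:de_MF} and collapse the resulting trace to a single cosine. Using the proper right polar decomposition $F = MK$ with $K = K^{T}$, one has
\[
p(R(\theta)) \;\propto\; \mathrm{etr}\!\left(F^{T}R(\theta)\right) \;=\; \mathrm{etr}\!\left(K\,M^{T}M_0\exp(\theta w_0^{\wedge})\right).
\]
Since the mean attitude satisfies $M \in \mathcal{S}_{M_0}(w_0)$, write $M = M_0\exp(\bar\theta w_0^{\wedge})$; then $M^{T}M_0 = \exp(-\bar\theta w_0^{\wedge})$, and as this rotation shares its axis with $\exp(\theta w_0^{\wedge})$ the two commute and multiply to $\exp\bigl((\theta-\bar\theta)w_0^{\wedge}\bigr)$. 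Hence it suffices to evaluate $\mathrm{tr}\bigl(K\exp(\phi w_0^{\wedge})\bigr)$ with $\phi = \theta - \bar\theta$. (Equivalently, by Lemma~\ref{le:rotation} one may first apply the measure-preserving left translation by $M_0^{T}$ to reduce at once to the case $M_0 = I_3$, $M = \exp(\bar\theta w_0^{\wedge})$, whose polar component is still a rotation about $w_0$ and whose elliptical component is still $K$.)

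The core step is then purely algebraic: expand $\exp(\phi w_0^{\wedge}) = \cos\phi\,I_3 + (1-\cos\phi)\,w_0 w_0^{T} + \sin\phi\,w_0^{\wedge}$ by Rodrigues' formula (valid since $\|w_0\| = 1$) and take the trace termwise. The skew part contributes nothing, $\mathrm{tr}(K w_0^{\wedge}) = 0$, because $K$ is symmetric while $w_0^{\wedge}$ is skew-symmetric; and $\mathrm{tr}(K w_0 w_0^{T}) = w_0^{T}Kw_0$. This yields
\[
\mathrm{tr}\!\left(K\exp(\phi w_0^{\wedge})\right) \;=\; \bigl(\mathrm{tr}(K) - w_0^{T}Kw_0\bigr)\cos\phi \;+\; w_0^{T}Kw_0 ,
\]
and since the last summand is independent of $\theta$ it is absorbed into the normalizing constant, which is exactly \eqref{eq:MF2vMF} with $\phi = \theta - \bar\theta$.

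It remains to identify $\bar\theta$ with the stated closed form. From $M = M_0\exp(\bar\theta w_0^{\wedge})$ one gets $M_0^{T}M = \exp(\bar\theta w_0^{\wedge})$, so $\log(M_0^{T}M)^{\vee} = \bar\theta\,w_0$; taking the inner product with the unit vector $w_0$ isolates $\bar\theta$, which — after moving $M_0$ through the exponential via \eqref{eq:crossTr1}, i.e.\ up to a transpose — is the announced $\bar\theta = w_0\cdot\log(M_0 M^{T})^{\vee}$. Because $\cos$ is $2\pi$-periodic, any ambiguity in $\bar\theta$, or in whether $\theta-\bar\theta$ remains in $[-\pi,\pi)$, does not affect \eqref{eq:MF2vMF}.

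I do not anticipate a substantive obstacle: once the parametrization is inserted the argument is essentially bookkeeping. The only points needing care are (i) the commutativity of two rotations about the common axis $w_0$, which lets the angles add into $\theta-\bar\theta$, and (ii) the trace identities following Rodrigues' formula — in particular that the symmetry of the elliptical component $K$ annihilates the skew-symmetric part of $\exp(\phi w_0^{\wedge})$, which is precisely what reduces the full three-parameter dependence to a single cosine in $\theta$.
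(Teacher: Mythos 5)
Your proposal is correct and follows essentially the same route as the paper's proof: both reduce the density on $\mathcal{S}_{M_0}(w_0)$ to $\mathrm{etr}\bigl(K\exp((\theta-\bar\theta)w_0^{\wedge})\bigr)$ (you via $M^TM_0=\exp(-\bar\theta w_0^{\wedge})$ and commutation of coaxial rotations, the paper via writing $R(\theta)=M\exp((\theta-\bar\theta)w_0^{\wedge})$ directly), then apply Rodrigues' formula and use the symmetry of $K$ to kill the skew term, absorbing the $\theta$-independent part into the normalizing constant. The only weak spot is your final reconciliation of $\bar\theta$: from $M=M_0\exp(\bar\theta w_0^{\wedge})$ one gets $\bar\theta = w_0\cdot\log(M_0^{T}M)^{\vee}$, whereas "moving $M_0$ through the exponential via (\ref{eq:crossTr1}) up to a transpose" actually yields $w_0\cdot\log(M_0M^{T})^{\vee} = -\bar\theta\,w_0^{T}M_0w_0$, which is not $\bar\theta$ in general; the form stated in the theorem appears to be a misprint (the paper's own proof likewise only uses $M=M_0\exp(\bar\theta w_0^{\wedge})$ and never derives that closed form), so your derivation is fine but the hand-wave claiming exact agreement with the printed expression should be dropped or corrected to $\bar\theta = w_0\cdot\log(M_0^{T}M)^{\vee}$.
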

\begin{proof}
   Because $M\in\mathcal{S}_{M_0 }(w_0)$, we have $M = M_0 \exp (\bar{\theta} \omega_0 ^{\wedge})$.
    Then every $R$ belonging to $\mathcal{S}_{M_0 }(w_0)$ can be rewritten as
    \begin{align}\label{eq_M2M_0}
        R(\theta) = M \exp \left((\theta-\bar{\theta}) \omega_0 ^{\wedge}\right).
    \end{align}
    Substituting (\ref{eq_M2M_0}) into (\ref{eq:de_MF}) and using Rodrigues' rotation formula \cite{chirikjianEngineeringApplicationsNoncommutative2000}, we have
        \begin{align}\label{eq:derive_vmf}
  p(R(\theta)) &= \frac{1}
{{c(F)}}{\text{etr}}\left[ {KM^T M\exp \left( {(\theta  - \bar \theta )\omega _0^ \wedge  } \right)} \right] \nonumber \\
   &= \frac{1}
{{c(F)}}{\text{etr}}\left[ {K\cos(\theta  - \bar \theta ) + Kw_0^ \wedge  \sin(\theta  - \bar \theta )} \right. \nonumber \\
  & + Kw_0 w_0^T \left( {1 - \cos(\theta  - \bar \theta )} \right) \nonumber \\
   &= \frac{1}{{c(F)}}{\rm{exp}}\left[ {{\rm{tr}}(K)\cos(\theta  - \bar \theta ) + 0 + } \right. \nonumber \\
   & \left. {w_0^TK{w_0}\left( {1 - \cos(\theta  - \bar \theta )} \right)} \right] \nonumber\\
   &= \frac{{e^{w_0^T Kw_0 } }}
{{c(F)}}{\text{exp}}\left[ {\left( {{\text{tr}}(K) - w_0^T Kw_0 } \right)\cos(\theta  - \bar \theta )} \right], 
    \end{align}
    which gives (\ref{eq:MF2vMF}).
\end{proof}
The PDF on $\mathcal{S}_{M_0 }(w_0)$ given by (\ref{eq:MF2vMF}) is similar to the von-Mises distribution on $\mathbb{S}^2$ \cite{fisherDispersionSphere1953}, which is defined as
\begin{equation}\label{eq:vmf}
    p(\theta) \propto e^{\kappa \cos(\theta - \bar{\theta})},
\end{equation}
where the angle parameter $\bar{\theta}$ maximizes the probability $p(\theta)$  and $\kappa$ describes how 
the distribution concentrates about $\bar{\theta}$. Compared with (\ref{eq:vmf}), we can define two similar parameters to describe the statistical characterization of MFDs on $\mathcal{S}_{M_0 }(w_0)$ as follows.
\begin{define}\label{def:para_axies}
    Consider a random attitude $R\sim \mathcal{M}(F)$. If the probability of elements belonging to $\mathcal{S}_{M_0 }(w_0)$ can be rewritten as 
    \begin{equation*}
        p(R(\theta)) \propto e^{\kappa \cos(\theta - \bar{\theta})},
    \end{equation*}
then $\bar{\theta}$ is referred to as the mean angle of  $\mathcal{M}(F)$ on $\mathcal{S}_{M_0 }(w_0)$ and $\kappa$ is referred to as the concentration parameter of $\mathcal{M}(F)$ on $\mathcal{S}_{M_0 }(w_0)$.
\end{define}
According to Theorem \ref{the:subset_1}, the mean angle and concentration parameter of  $\mathcal{M}(F)$ on $\mathcal{S}_{M_0 }(w_0)$ such that $M\in\mathcal{S}_{M_0 }(w_0)$ are given by
\begin{equation*}
     \bar{\theta} = {w_0 \cdot  \log(M_0M^T)^{\vee}}\ and\ \kappa = \text{tr}(K) - w_0^TKw_0.
\end{equation*}
When $w_0$ is the $i$-th principal axis expressed in $\mathcal{F}_B$, i.e., $w_0 = Ve_i$, we have
    \begin{align*}
        \kappa &= \text{tr}(K) - w_0^TKw_0 \nonumber\\
               &= s_i +s_j +s_k - e_i^TV^TVSV^TVe_i = s_j +s_k,
    \end{align*}
    where $\{i,j,k \} = \{1,2,3 \}$. Moreover, if $M_0 = M$, (\ref{eq:derive_vmf}) can be rewritten as
    \begin{equation*}
        p(R(\theta)) = \frac{s_i}{c(F)} \text{exp} \left[ \left( s_j +s_k \right)cos(\theta) \right],
    \end{equation*}
which is the same as (42) in \cite{leeBayesianAttitudeEstimation2018}. Therefore, Definition \ref{def:para_axies} can be viewed as a generalization of the geometric interpretation about MFDs in \cite{leeBayesianAttitudeEstimation2018} and the parameters defined in Definition \ref{def:para_axies} can describe the dispersion of attitude generated by the mean attitude rotating around any specific axis $w_0 \in \mathbb{S}^2$, instead of merely principal axes.

A similar conclusion can be drawn for CGDs.
\begin{theorem}
    Consider a random attitude $R\sim \mathcal{N}_L(M,P)$. Let $\mathcal{S}_{M_0 }(w_0)$ be a subset of $SO(3)$ defined as (\ref{eq:def_subset}). If $M\in \mathcal{S}_{M_0 }(w_0)$, for $R(\theta)\in  \mathcal{S}_{M_0 }(w_0)$, the probability density can be rewritten as
    \begin{equation}\label{eq:one_d_CGD}
        p(R (\theta))\propto \exp\left(-\frac{1}{2} (\theta - \bar{\theta})w_0^T P_i^{-1} w_0(\theta - \bar{\theta}) \right),
    \end{equation}
    where $\bar{\theta} ={w_0 \cdot  \log(M_0M^T)^{\vee}}$.
\end{theorem}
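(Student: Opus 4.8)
The plan is to mirror the proof of Theorem~\ref{the:subset_1}, replacing the matrix Fisher density~(\ref{eq:de_MF}) by the left concentrated Gaussian density~(\ref{eq:left_CGD}). Since $M\in\mathcal{S}_{M_0}(w_0)$, I would first write $M = M_0\exp(\bar\theta\,\omega_0^\wedge)$ with $\bar\theta = w_0\cdot\log(M_0M^T)^\vee$, exactly as in the step preceding~(\ref{eq_M2M_0}). For an arbitrary $R(\theta)\in\mathcal{S}_{M_0}(w_0)$ we have $R(\theta) = M_0\exp(\theta\,\omega_0^\wedge)$ by Definition~\ref{de:subset}; because $\exp(\bar\theta\,\omega_0^\wedge)$ and $\exp(\theta\,\omega_0^\wedge)$ are rotations about the common axis $w_0$ they commute, and $\exp(X)^T=\exp(-X)$ for $X\in\mathfrak{so}(3)$, so
\begin{equation*}
    M^T R(\theta) = \exp(-\bar\theta\,\omega_0^\wedge)\,M_0^T M_0\,\exp(\theta\,\omega_0^\wedge) = \exp\!\left((\theta-\bar\theta)\,\omega_0^\wedge\right).
\end{equation*}

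Next I would pass to the logarithm: a rotation about the unit axis $w_0$ through an angle $\phi$ with $|\phi|<\pi$ has principal logarithm $\phi\,\omega_0^\wedge$, so $\log(M^T R(\theta))^\vee = (\theta-\bar\theta)\,w_0$. Inserting this vector into the quadratic form of~(\ref{eq:left_CGD}) and pulling the scalar $\theta-\bar\theta$ out of the bilinear form gives
\begin{equation*}
    p(R(\theta)) = \frac{1}{c}\exp\!\left(-\frac{1}{2}(\theta-\bar\theta)\, w_0^T P^{-1} w_0\,(\theta-\bar\theta)\right),
\end{equation*}
which is~(\ref{eq:one_d_CGD}) after absorbing $1/c$ into the proportionality. (The right-invariant density~(\ref{eq:right_CGD}) is handled by the same steps after using~(\ref{eq:crossTr1}) to get $R(\theta)M^T = \exp((\theta-\bar\theta)(M_0w_0)^\wedge)$.)

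The only delicate point, and the one worth isolating, is the branch of the matrix logarithm: the identity $\log\exp((\theta-\bar\theta)\omega_0^\wedge)^\vee = (\theta-\bar\theta)w_0$ holds only for $|\theta-\bar\theta|<\pi$, whereas $\theta$ ranges over $[-\pi,\pi)$, so near the point antipodal to $\bar\theta$ one must first reduce $\theta-\bar\theta$ modulo $2\pi$ into $(-\pi,\pi]$ before forming the exponent. This is precisely what distinguishes the concentrated Gaussian from the matrix Fisher case: in~(\ref{eq:MF2vMF}) the angle enters only through the $2\pi$-periodic factor $\cos(\theta-\bar\theta)$, so the wrapping is invisible, while here the quadratic exponent is not periodic and the wrapped angle genuinely appears --- the ``wrapped error'' phenomenon this section is meant to expose. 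I would therefore state the identity on the chart $|\theta-\bar\theta|<\pi$; beyond bookkeeping this branch, the proof is the routine substitution above and I anticipate no real obstacle.
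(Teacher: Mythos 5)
Your proposal is correct and follows essentially the same route as the paper's proof: write $R(\theta) = M \exp\left((\theta-\bar{\theta})\,\omega_0^{\wedge}\right)$ using the commutativity of rotations about the common axis $w_0$, then substitute into the CGD density so that $\log\left(M^T R(\theta)\right)^{\vee} = (\theta-\bar{\theta})\,w_0$ and the quadratic form follows. Your remark about the logarithm branch (the identity holding only for $|\theta-\bar{\theta}|<\pi$, with wrapping needed near the antipode) is a point the paper's proof passes over silently, so it is a careful refinement rather than a different approach.
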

\begin{proof}
    For $R (\theta)\in  \mathcal{S}_{M_0 }(w_0)$, we have $    R  (\theta)= M_0 \exp \left( \theta w_0^{\wedge} \right)$. Because $M$ also belongs to $\mathcal{S}_{M_0}(w_0)$, an arbitrary $R (\theta)$ belonging to $\mathcal{S}_{M_0 }(w_0)$ can be rewritten as
    \begin{align}\label{ga_M2M_0}
        R (\theta) &= M_0 \exp (\theta \omega_0 ^{\wedge})= M \exp (\bar{\theta} \omega_0 ^{\wedge})^T \exp (\theta \omega_0 ^{\wedge}) \nonumber \\
          &= M \exp \left((\theta-\bar{\theta}) \omega_0 ^{\wedge}\right).
    \end{align}
  
    Substituting $R (\theta) =  M \exp \left((\theta-\bar{\theta}) \omega_0 ^{\wedge}\right)$ into the PDF of $R$, we have
    \begin{equation*}
        p(R(\theta))\propto \exp\left(-\frac{1}{2}  (\theta - \bar{\theta})w_0^T P_i^{-1} w_0(\theta - \bar{\theta}) \right),
    \end{equation*}
    which verifies (\ref{eq:one_d_CGD}).
\end{proof}
The density function (\ref{eq:one_d_CGD}) of CGDs on the subset $\mathcal{S}_{M_0 }(w_0)$ is similar to the one-dimensional Gaussian distribution of $\theta \in \mathbb{R}$ with mean $\bar{\theta}$ and  variance $w_0^T P_i^{-1} w_0$. Thus the following two parameters are defined to characterize the CGD on $\mathcal{S}_{M_0 }(w_0)$.
\begin{define}\label{de:ga_para_def}
Consider a random attitude $R\sim \mathcal{N}_L(M,P)$. If the probability of elements belonging to  $\mathcal{S}_{M_0 }(w_0)$ can be rewritten as
\begin{equation}
     p(R(\theta)) \propto \exp\left[-\frac{(\theta -\bar{\theta})^2}{2\sigma^2} \right],
\end{equation}
where $R(\theta)\in \mathcal{S}_{M_0 }(w_0)$. Then $\bar{\theta}$ is referred to as the mean angle of  $ \mathcal{N}_L(M,P)$ on $\mathcal{S}_{M_0 }(w_0)$ and $1/\sigma^2$ is referred to as the concentration parameter of  $ \mathcal{N}_L(M,P)$ on $\mathcal{S}_{M_0 }(w_0)$.
\end{define}

\begin{remark}\label{re:appro}
    Although the two sets of parameters in Definitions \ref{def:para_axies} and \ref{de:ga_para_def} are defined via different methods, they have similar statistical meanings and magnitudes. In the density functions (\ref{eq:MF2vMF}) and (\ref{eq:one_d_CGD}), the mean angles maximize the probability density on the subset and the concentration parameters describe how distributions on the subset concentrates about 
the mean angle. Besides, using Taylor's formula, the density functions (\ref{eq:MF2vMF}) can be rewritten as
\begin{align*}
      p(R(\theta)) &\propto \exp \left[\kappa\cos(\theta-\bar{\theta}) \right]\\
                                   &\propto \exp \left[- \frac{\kappa (\theta-\bar{\theta})^2}{2} + \mathcal{O}(\Vert \theta-\bar{\theta}  \Vert^4) \right ]
\end{align*}
which indicates that the two concentration parameters, $\kappa$ and $1/\sigma^2$, are almost equivalent when the tails of distributions are small. Therefore, the two sets of parameters are comparable. 
\end{remark}

A numerical example is presented to illustrate these parameters more clearly. Consider a subset $\mathcal{S}_{M_0 }(w_0)$ with $M_0 = I_3$ and $w_0 = 1/\sqrt{2}\ [0, 1, 1]^T$ and three random attitudes characterized by the following MFDs with different parameters of $F_1 = 50I_3,\ F_3 = \exp{(\pi/4 \omega_0^{\wedge})} 50I_3$ and
\begin{gather*}
 F_2  = \left[ {\begin{array}{*{20}c}
   {2} & {0} & {0}  \\
   {0} & {74} & {72}  \\
   {0} & {72} & {74}  \\
 \end{array} } \right],  
\end{gather*}
respectively. The traces of $F_1$, $F_2$, and $F_3$ are the same, but their mean angles $\bar{\theta}_1,\ \bar{\theta}_2,\ \bar{\theta}_3$ and concentration parameters $\kappa_1,\ \kappa_2,\ \kappa_3$ are different. Note that $\kappa_1 = \kappa_3 = 100$, while $\kappa_2 = 4$. Since $\kappa_2 \leq \kappa_1 = \kappa_3$, the distribution $\mathcal{M}(F_2)$ on $\mathcal{S}_{M_0 }(w_0)$ is more dispersed than the other two distributions. Besides, $\bar{\theta}_1 = \bar{\theta}_2 = 0$, while $\bar{\theta}_3 = \pi/4$, which implies that the mean angle of $\mathcal{M}(F_3)$ offsets $25^{\circ}$ from those of $\mathcal{M}(F_1)$ and $\mathcal{M}(F_2)$. These characteristics are also shown in Fig. \ref{fig:1}.

\begin{figure}[htbp]
	\begin{center}
		\includegraphics[width=\linewidth]{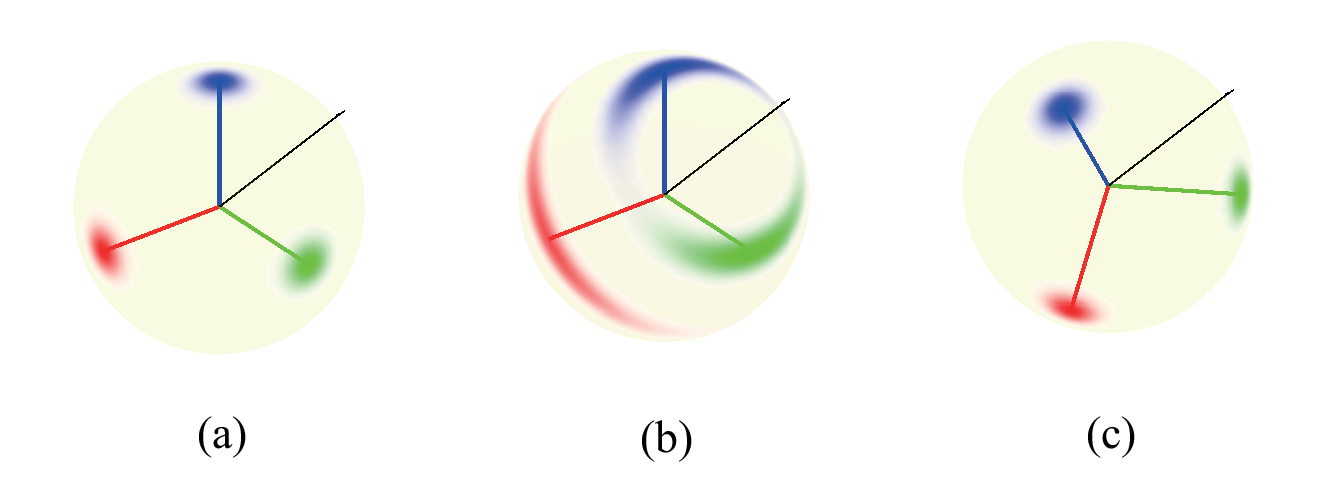}
		\caption{Matrix Fisher distributions with different parameters. The marginal distribution for each column of $R$ is shown on the unit sphere as red, green and blue shades, as presented in \cite{leeBayesianAttitudeEstimation2018}. The red, green and blue lines represent the first, second, and third columns of the mean attitude, respectively, and the black line represents the axis $w_0$. (a) $\mathcal{M}(F_1)$: $\kappa_1 = 100,\ \bar{\theta}_1=0$; (b) $\mathcal{M}(F_2)$: $\kappa_2 = 4,\ \bar{\theta}_2=0$; (c) $\mathcal{M}(F_3)$: $\kappa_3 = 100,\ \bar{\theta}_3=\pi/4$}   \label{fig:1}
	\end{center}
 
 \end{figure}

\subsection{The evolutions of MFDs and CGDs along Bayes' rule}


Next, the evolutions of MFDs and CGDs along Bayes’ rule are derived via the evolutions of the characteristic parameters defined in Definitions \ref{def:para_axies} and \ref{de:ga_para_def}. The evolution of MFDs is given first as follows.

\begin{theorem}\label{th:poster_anly}
Let the prior attitude distribution be characterized by $R\sim \mathcal{M}(F^-)$ and the likelihood distribution by $M^{m}|R\sim \mathcal{M}(RK^m)$ with $K^m \in \mathbb{R}^{3\times 3}$ and $K^m =\left( K^m \right)^T$. Denote the proper right polar decomposition of $F^-$ by $F^- = M^- K^-$ and a single sample of  $\mathcal{M}(RK^m)$ by $\tilde{M}^m$. The posterior probability density of  $R(\theta) \in \mathcal{S}_{M^{-} }(w_0)$ can be rewritten as
    \begin{equation}\label{eq:subset_3}
        p\left(R(\theta)|\tilde{M}^m\right) \propto  \exp \left[ \kappa^{+} \cos(\theta-\bar{\theta}^{+}) \right],
    \end{equation}
    where $\mathcal{S}_{M^- }(w_0)$ containing both $M^-$ and $\tilde{M}^m$ is given by Proposition \ref{the:subset_2}, and $\kappa^{+}$ and $\bar{\theta}^{+}$ are given by
     \begin{subequations}
        \begin{equation}\label{eq:kappa_3}
             \kappa^{+} = \sqrt{(\kappa^{-})^2 +(\kappa^{m})^2 + 2\kappa^{-} \kappa^{m} \cos(\bar{\theta}^{m})},
        \end{equation}
        \begin{equation}\label{eq:theta_3_MF}
            \bar{\theta}^+ = {\rm sign}(\bar{\theta}^m) \arccos\left( \frac{\kappa^- + \kappa^m\cos(\bar{\theta}^m)}{\kappa^+}\right).
        \end{equation}
    \end{subequations}
In (\ref{eq:kappa_3}) and (\ref{eq:theta_3_MF}), $\bar{\theta}^{-}$ and $\bar{\theta}^{m}$ are the mean angles of  $\mathcal{M}(F^-)$ and $\mathcal{M}(F^m)$ respectively, and $\kappa^-$ and $\kappa^m$ are the concentration parameters of  $\mathcal{M}(F^-)$ and $\mathcal{M}(F^m)$ respectively, where $F^m=\tilde{M}^mK^m$. It follows from the definition of $\mathcal{S}_{M^- }(w_0)$ that $\bar{\theta}^{-}=0$. 
\end{theorem}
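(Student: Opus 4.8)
The statement concerns fusing a prior MFD $\mathcal{M}(F^-)$ with a likelihood $\mathcal{M}(RK^m)$ via Bayes' rule and re-expressing the resulting posterior density, restricted to the subset $\mathcal{S}_{M^-}(w_0)$, in the von-Mises form $\exp[\kappa^+\cos(\theta-\bar\theta^+)]$. The key is that Bayes' rule here is just multiplication of densities, and Lemma~\ref{le:rotation} tells us how the likelihood parameter transforms. First I would write the posterior as $p(R\mid\tilde M^m)\propto p(R)\,p(\tilde M^m\mid R)$. By Lemma~\ref{le:rotation}, $M^m\mid R\sim\mathcal{M}(RK^m)$ means the density of the sample $\tilde M^m$ as a function of $R$ is $\propto\mathrm{etr}[(RK^m)^T\tilde M^m]=\mathrm{etr}[K^m R^T\tilde M^m]=\mathrm{etr}[(\tilde M^m K^m)^T R]$ using symmetry of $K^m$; so the likelihood, as a function of $R$, is itself an MFD $\mathcal{M}(F^m)$ with $F^m=\tilde M^m K^m$. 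Hence the posterior is $\propto\mathrm{etr}[(F^-+F^m)^T R]$, i.e.\ $\mathcal{M}(F^-+F^m)$. This is the conceptual heart: the product of two MFDs is again an MFD with added parameters.

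**Reducing to the circle.** Next I would invoke Theorem~\ref{the:subset_1} (the von-Mises reduction) applied to the posterior $\mathcal{M}(F^-+F^m)$ on the subset $\mathcal{S}_{M^-}(w_0)$, where $w_0$ is chosen per Proposition~\ref{the:subset_2} so that both $M^-$ and $\tilde M^m$ lie in $\mathcal{S}_{M^-}(w_0)$. The issue is that Theorem~\ref{the:subset_1} is stated in terms of the right polar decomposition of the parameter; rather than decomposing $F^-+F^m$ directly, the cleaner route is to add the two one-dimensional (von-Mises) densities obtained separately: on $\mathcal{S}_{M^-}(w_0)$, the prior contributes $\exp[\kappa^-\cos(\theta-\bar\theta^-)]$ with $\bar\theta^-=0$, and the likelihood $\mathcal{M}(F^m)$ contributes $\exp[\kappa^m\cos(\theta-\bar\theta^m)]$ with $\bar\theta^m=w_0\cdot\log(M^-(\tilde M^m)^T)^\vee$ — both via Theorem~\ref{the:subset_1}, since the mean attitudes $M^-$ and $\tilde M^m$ both belong to $\mathcal{S}_{M^-}(w_0)$. (One subtlety to check: $\mathrm{etr}[(F^-+F^m)^TR]$ restricted to $R(\theta)$ equals the product of the two restricted exponentials, which holds because $\mathrm{tr}$ is linear; so the posterior density on the subset is exactly the product of the two von-Mises factors.) The posterior exponent on the subset is therefore $\kappa^-\cos\theta+\kappa^m\cos(\theta-\bar\theta^m)$.

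**Trigonometric consolidation.** The remaining step is the standard identity: a sum of cosines $a\cos\theta+b\cos(\theta-\phi)$ collapses to $R\cos(\theta-\psi)$ with $R=\sqrt{a^2+b^2+2ab\cos\phi}$ and $\tan\psi=(b\sin\phi)/(a+b\cos\phi)$. Expanding both terms via the angle-subtraction formula, collecting the coefficients of $\cos\theta$ and $\sin\theta$, and matching amplitudes gives $\kappa^+=\sqrt{(\kappa^-)^2+(\kappa^m)^2+2\kappa^-\kappa^m\cos\bar\theta^m}$ — which is (\ref{eq:kappa_3}) — and $\cos\bar\theta^+=(\kappa^-+\kappa^m\cos\bar\theta^m)/\kappa^+$, with the sign of $\bar\theta^+$ matching the sign of the $\sin\theta$ coefficient, namely $\kappa^m\sin\bar\theta^m$, hence $\mathrm{sign}(\bar\theta^+)=\mathrm{sign}(\bar\theta^m)$ (as $\kappa^m>0$), yielding (\ref{eq:theta_3_MF}).

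**Main obstacle.** None of the individual steps is deep, so the main care-points are bookkeeping rather than a genuine obstacle: (i) verifying that the likelihood $\mathcal{M}(RK^m)$, viewed as a function of $R$ with $\tilde M^m$ fixed, really is $\mathcal{M}(\tilde M^m K^m)$ — this uses $\mathrm{etr}[K^m R^T\tilde M^m]=\mathrm{etr}[(\tilde M^m K^m)^T R]$ together with $K^m=(K^m)^T$; (ii) confirming that restricting to $\mathcal{S}_{M^-}(w_0)$ commutes with the product, i.e.\ that one may apply Theorem~\ref{the:subset_1} to each factor and multiply, which follows from linearity of trace inside $\mathrm{etr}$; and (iii) getting the branch/sign of the $\arccos$ right so that $\bar\theta^+\in[-\pi,\pi)$ and carries the sign of $\bar\theta^m$. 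The amplitude-phase identity in the last step is elementary but must be done carefully to land exactly on the stated forms of $\kappa^+$ and $\bar\theta^+$.
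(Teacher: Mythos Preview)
Your proposal is correct and follows essentially the same route as the paper: write the posterior as a product of two MFD densities, restrict to $\mathcal{S}_{M^-}(w_0)$ to obtain $\exp[\kappa^-\cos\theta+\kappa^m\cos(\theta-\bar\theta^m)]$, and collapse via the harmonic addition identity. The only cosmetic difference is that you invoke Theorem~\ref{the:subset_1} separately on each factor, whereas the paper redoes the Rodrigues-formula computation inline on the combined expression $\mathrm{etr}\bigl(K^-\exp(\theta\omega_0^\wedge)+K^m\exp((\theta-\bar\theta^m)\omega_0^\wedge)\bigr)$; the substance is identical.
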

\begin{proof}
    For $R\in \mathcal{S}_{M^- }(w_0)$, we have $R = M^- \exp (\theta \omega_0 ^{\wedge}).$
    Then, using Bayes' rule, we have
    \begin{align*}
       & p(R(\theta)|\tilde{M}^m) \propto p(R)p(M^m=\tilde{M}^m|R)\\
       &= \text{etr}\left( (K^- (M^-)^T + K^m (M^m)^T) M^- \exp (\theta \omega_0 ^{\wedge}) \right) \nonumber\\
       &= \text{etr}\left( (K^-  + K^m (M^m)^TM^-) \exp (\theta \omega_0 ^{\wedge}) \right). \nonumber\\
    \end{align*}
    According to Proposition \ref{the:subset_2}, $\tilde{M}^m \in \mathcal{S}_{M^- }(w_0)$ and thus we have $(\tilde{M}^m)^TM^- = \left( \exp (\bar{\theta}^m \omega_0 ^{\wedge}) \right)^T.$
    Therefore, the probability density of the elements belonging to $\mathcal{S}_{M_0 }(w_0)$ can be rewritten as
        \begin{align*}
        &p(R(\theta)|\tilde{M}^m) \propto \text{etr}\left( (K^-  + K^m (M^m)^TM^-) \exp (\theta \omega_0 ^{\wedge}) \right) \nonumber\\
               &= \text{etr}\left( K^-\exp (\theta \omega_0 ^{\wedge}) + K^m\exp ((\theta-\bar{\theta}^m) \omega_0 ^{\wedge})\right) \nonumber\\
                &\propto {\text{exp}}\left[ {\left( {{\text{tr}}(K^- ) - w_0^T K^- w_0 } \right)\cos(\theta ) + } \right.\nonumber \\
                &\left. {\left( {{\text{tr}}(K^m ) - w_0^T K^m w_0 } \right)\cos(\theta  - \bar{\theta}^m )} \right] \nonumber \\ 
                &= \text{exp} \left[ \left(\kappa^- + \kappa^m \cos(\bar{\theta}^m)\right) \cos(\theta) + \kappa^m \sin(\bar{\theta}^m) \sin(\theta) \right]\nonumber\\
                &= \text{exp} \left[ \kappa^+\cos(\theta-\bar{\theta}^+) \right],
    \end{align*}
where $\kappa^+$ and $\bar{\theta}^+$ are given by the induction formula of the cosine function. This verifies the conclusion.
\end{proof}
The above theorem indicates how the prior and the likelihood distributions affect the posterior distribution on $\mathcal{S}_{M^-}(w_0)$ when the underlying prior and likelihood distributions are approximated by MFDs. Theorem \ref{th:poster_anly} can be applied to the filtering procedure of any Bayesian attitude filter using MFDs to represent attitude uncertainty.

A similar theorem is given for the case that the underlying prior and likelihood distributions are approximated by CGDs.
\begin{theorem}\label{th:poster_anly_ga}
    Consider the prior attitude distribution characterized by $R\sim \mathcal{N}_L(M^{-},P^{-})$ and the likelihood distribution by $M^{m}|R\sim \mathcal{N}_L(R,P^{m})$. Denote a single  sample of  $\mathcal{N}_L(R,P^{m})$ by $\tilde{M}^m \in SO(3)$. The posterior probability density of  $R(\theta) \in \mathcal{S}_{M_0 }(w_0)$ can be rewritten as
    \begin{equation}\label{eq:one_d_ga_post}
        p(R(\theta)|\tilde{M}^{m}) \propto \exp\left[-\frac{(\theta -\bar{\theta}^{+})^2}{2(\sigma^{+})^2} \right],
    \end{equation}
The subset $\mathcal{S}_{M^- }(w_0)$ containing $M^-$ and $\tilde{M}^{m}$ is given by Proposition \ref{the:subset_2}, and $1/(\sigma^{+})^2$ and $\bar{\theta}^{+}$ are given by
    \begin{subequations}
        \begin{equation}\label{eq:kappa_ga_3}
            \frac{1}{(\sigma^{+})^2} = \frac{1}{(\sigma^{-})^2}+\frac{1}{(\sigma^{m})^2}
        \end{equation}
        \begin{equation}\label{eq:theta_ga_3}
            \bar{\theta}^{+} = \frac{(\sigma^{-})^2}{(\sigma^{m})^2 + (\sigma^{-})^2}\bar{\theta}^{m}.
        \end{equation}
    \end{subequations}
    In (\ref{eq:kappa_ga_3}) and (\ref{eq:theta_ga_3}), $\bar{\theta}^{-}$ and $\bar{\theta}^{m}$ are the mean angles of  $\mathcal{N}_L(M^{-},P^{-})$ and $\mathcal{N}_L(\tilde{M}^{m},P^{m})$ respectively and $1/(\sigma^{-})^2$, and $1/(\sigma^{m})^2$ are the concentration parameters of  $\mathcal{N}_L(M^{-},P^{-})$ and $\mathcal{N}_L(\tilde{M}^{m},P^{m})$ respectively. It follows from the definition of $\mathcal{S}_{M_0 }(w_0)$ that $\bar{\theta}^{-}=0$.
\end{theorem}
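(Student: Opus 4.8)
The plan is to mirror the proof of Theorem~\ref{th:poster_anly}, replacing the MFD densities by left-CGD densities and the trigonometric identity by an ordinary completion of the square. First I would invoke Proposition~\ref{the:subset_2} with $M_0=M^-$ to obtain a subset $\mathcal{S}_{M^-}(w_0)$ containing both $M^-$ and $\tilde{M}^m$, so that the prior mean angle is $\bar\theta^-=0$; by the CGD-on-subset theorem (the theorem preceding Definition~\ref{de:ga_para_def}), the prior restricted to $\mathcal{S}_{M^-}(w_0)$ is $p(R(\theta))\propto\exp\!\big(-\tfrac12\,\theta^2/(\sigma^-)^2\big)$ with $1/(\sigma^-)^2=w_0^{T}(P^-)^{-1}w_0$. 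Bayes' rule on the subset then gives $p(R(\theta)\,|\,\tilde{M}^m)\propto p(R(\theta))\,p(M^m=\tilde{M}^m\,|\,R(\theta))$, and it remains to evaluate the likelihood factor.

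For that factor I would write $R(\theta)=M^-\exp(\theta w_0^{\wedge})$ and, since $\tilde{M}^m\in\mathcal{S}_{M^-}(w_0)$, $\tilde{M}^m=M^-\exp(\bar\theta^m w_0^{\wedge})$ with $\bar\theta^m=w_0\cdot\log(M^-(\tilde{M}^m)^{T})^{\vee}$, which is exactly the mean angle of $\mathcal{N}_L(\tilde{M}^m,P^m)$ on the subset. Because exponentials about the common axis $w_0$ commute, $R(\theta)^{T}\tilde{M}^m=\exp\!\big((\bar\theta^m-\theta)w_0^{\wedge}\big)$, hence $\log\!\big(R(\theta)^{T}\tilde{M}^m\big)^{\vee}=(\bar\theta^m-\theta)w_0$; substituting into the left-CGD density (\ref{eq:left_CGD}) for $M^m\,|\,R$ yields $p(\tilde{M}^m\,|\,R(\theta))\propto\exp\!\big(-\tfrac12(\theta-\bar\theta^m)^2/(\sigma^m)^2\big)$ with $1/(\sigma^m)^2=w_0^{T}(P^m)^{-1}w_0$, matching Definition~\ref{de:ga_para_def}. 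The posterior on the subset is therefore the product of two one-dimensional Gaussians in $\theta$, and completing the square, $\tfrac{\theta^2}{(\sigma^-)^2}+\tfrac{(\theta-\bar\theta^m)^2}{(\sigma^m)^2}=\big(\tfrac{1}{(\sigma^-)^2}+\tfrac{1}{(\sigma^m)^2}\big)(\theta-\theta^\star)^2+\mathrm{const}$, immediately gives (\ref{eq:kappa_ga_3}) from the quadratic coefficient and (\ref{eq:theta_ga_3}) via the vertex $\theta^\star=\big((\sigma^-)^2/((\sigma^m)^2+(\sigma^-)^2)\big)\bar\theta^m$. This is precisely the classical product-of-Gaussians (information-form) update, so the algebra is routine.

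The one genuine subtlety — the step I would treat most carefully — is the well-definedness of the matrix logarithms $\log((M^-)^{T}R(\theta))^{\vee}$ and $\log(R(\theta)^{T}\tilde{M}^m)^{\vee}$ on the subset: these equal $\theta w_0$ and $(\bar\theta^m-\theta)w_0$ only when the associated rotation angles lie in $(-\pi,\pi)$, i.e. away from the cut locus, whereas $\theta-\bar\theta^m$ can in principle leave that range as $\theta$ sweeps $[-\pi,\pi)$. As with Theorem~\ref{the:subset_1} and its CGD counterpart, this is resolved by the concentration hypothesis on the CGDs: their tails near $\theta=\pm\pi$ are negligible, so the identification, and hence the Gaussian form, holds on the region carrying essentially all the probability mass. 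One must also keep the sign convention for $\bar\theta^m$ consistent with the $w_0$ fixed in Proposition~\ref{the:subset_2}; beyond that, the argument is direct substitution and elementary calculus.
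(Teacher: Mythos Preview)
Your proposal is correct and follows essentially the same approach as the paper: apply Bayes' rule, restrict both factors to $\mathcal{S}_{M^-}(w_0)$ by computing $\log\big((M^-)^TR(\theta)\big)^{\vee}=\theta w_0$ and $\log\big(R(\theta)^T\tilde{M}^m\big)^{\vee}=(\bar\theta^m-\theta)w_0$, and then complete the square in the product of the two one-dimensional Gaussians to read off $(\sigma^+)^2$ and $\bar\theta^+$. Your explicit discussion of the cut-locus caveat is more careful than the paper, which simply writes the logarithms without comment; otherwise the arguments coincide.
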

\begin{proof}
    According to Bayes' rule and the definition of CGDs, the posterior attitude distribution is given by
    \begin{align*}
        &p(R|\tilde{M}^{m}) \propto p(R)p(M^m =\tilde{M}^{m}|R) \\
        &\propto \exp \left\{-\frac{1}{2}\left[ \log \left((M^-)^TR\right) ^{\vee}\right]^T\! (P^{-})^{-1} \! \left[ \log \left((M^-)^TR\right) ^{\vee}\right] \right\}  \times\\
        &\exp \left\{ -\frac{1}{2}\left[ \log \left((\tilde{M}^{m})^TR\right) ^{\vee}\right]^T\!  (P^{m})^{-1}\!  \left[ \log \left((\tilde{M}^{m})^TR\right) ^{\vee}\right] \right\}
    \end{align*}
Letting $R(\theta) \in \mathcal{S}_{M^- }(w_0)$ and noting $M^-, \tilde{M}^{m} \in  \mathcal{S}_{M^- }(w_0)$ , we have
    \begin{align*}
         \log\left((M^-)^TR(\theta)\right) ^{\vee} &= \theta w_0, \\
        \log \left((\tilde{M}^{m})^T R(\theta)\right) ^{\vee} &=\log \left((\tilde{M}^{m})^T (M^-) (M^-)^TR(\theta)\right) ^{\vee}\\
       & = w_0\left(\theta - \bar{\theta}^m \right).
    \end{align*}
    The posterior distribution can then be rewritten as
   \begin{align*}
        &p(R(\theta)|\tilde{M}^{m})\\
        &\propto  \exp \left\{-\frac{1}{2} \theta w_0^T (P^{-})^{-1} w_0 \theta\right\}  \times\\
        &\exp \left\{-\frac{1}{2} \left(\theta - \bar{\theta}^m \right)w_0^T (P^{m})^{-1} w_0\left(\theta - \bar{\theta}^m \right) \right\}\\
       &\propto\exp \left\{-\frac{\theta^2}{2(\sigma^-)^2}\right\}\exp \left\{-\frac{(\theta-\bar{\theta}^m)^2}{2(\sigma^m)^2}\right\}\propto C \exp \left\{-\frac{(\theta-\bar{\theta}^+)^2}{2(\sigma^+)^2}\right\},
    \end{align*}
    where $\bar{\theta}^+$ and ${\sigma}^+$ can be shown to be the same as (\ref{eq:kappa_ga_3}) and (\ref{eq:theta_ga_3}). Therefore, the conclusion is verified.
\end{proof}
The parameters $1/(\sigma^2)^+$ and $\theta^+$ characterize the posterior distribution approximated by CGDs on $\mathcal{S}_{M_1}(w_0)$, and their expressions with respect to $1/(\sigma^-)^2,\theta^-,\ 1/(\sigma^m)^2 $ and $\theta^m$ indicate the impact of the prior and the likelihood distributions on the posterior distribution.

\subsection{Comparisons and discussions}\label{sec:c_and_d}
The results given by Theorems \ref{th:poster_anly} and \ref{th:poster_anly_ga} are summarized in Table \ref{tab:-1} and the approximation introduced in Remark \ref{re:appro} is used such that the parameters of both MFDs and CGDs are expressed with $\kappa$ and $\bar{\theta}$. The trends of $\kappa^+$ as a function of $\Delta \bar{\theta} = \bar{\theta}^m - \bar{\theta}^-$, and $\bar{\theta}^+$ as a function of  $\kappa^m/\kappa^-$ with different $\Delta \bar{\theta}$ are shown in Figs. \ref{Fig:posterior_kappa} and \ref{Fig:posterior_theta}. Since the parameters are defined on $\mathcal{S}_{M^-}(w_0)$, we have $\bar{\theta}^- = 0$ and $\Delta \bar{\theta} = \bar{\theta}^m $.
 \begin{table}[htbp]
\centering
\caption{Mean angles and concentration parameters of distributions on $\mathcal{S}_{M^-}(w_0)$}
\label{tab:-1}
\renewcommand{\arraystretch}{2.5} 
\begin{tabular}{c|cc}
\hline
             &MFDs&CGDs\\ \hline
 $\kappa^+$&$\sqrt{(\kappa^-)^2 + (\kappa^m)^2 + 2\kappa^- \kappa^m\cos(\Delta \bar{\theta})}$&$\kappa^- + \kappa^m$\\ \hline
                       $\bar{\theta}^+$&$\text{sign}(\Delta \bar{\theta}) \arccos\left(\displaystyle\frac{\kappa^- + \kappa^+\cos(\Delta \bar{\theta})}{\kappa^+}\right)$&$\displaystyle\frac{\kappa^m}{\kappa^- + \kappa^m} \Delta \bar{\theta}$\\ \hline
\end{tabular}
\end{table}
\begin{figure}[htbp]
	\begin{center}
		\includegraphics[width=\linewidth]{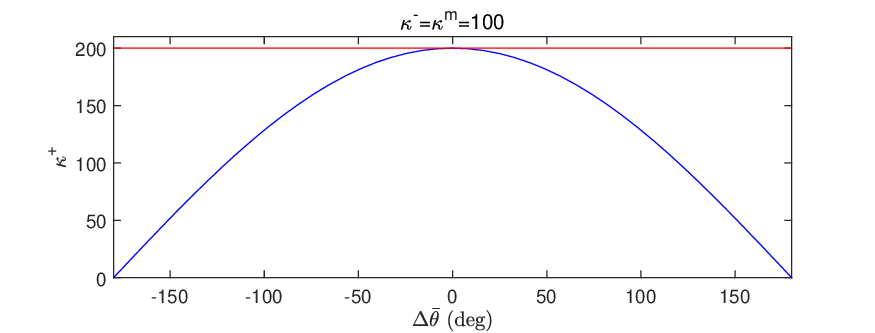}
		\caption{The trend of $\kappa^+$ as a function of $\Delta \bar{\theta}$. Blue: $\kappa^+$ given by MFD-based Bayesian filters. Red: $\kappa^+$ given by CGD-based Bayesian filters}\label{Fig:posterior_kappa}
	\end{center}
 \end{figure}
\begin{figure}[htbp]
	\begin{center}
		\includegraphics[width=\linewidth]{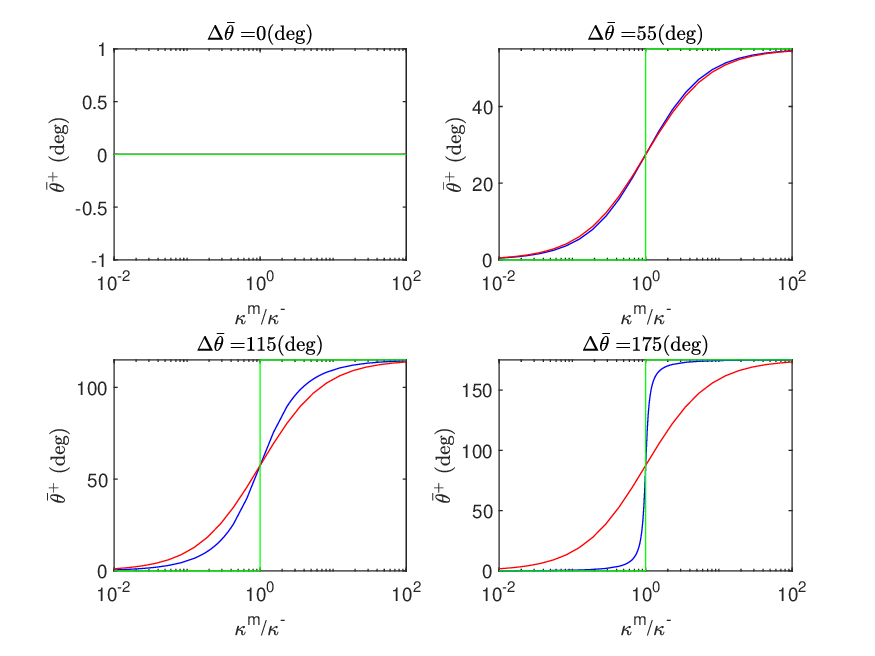}
		\caption{The trends of $\bar{\theta}^+$ on $\kappa^m/\kappa^-$ with different $\Delta{\bar{\theta}}$. Blue: $\bar{\theta}^+$ given by MFD-based Bayesian filters. Red: $\bar{\theta}^+$ given by CGD-based Bayesian filters. Green: the mean angle with higher confidence in $\bar{\theta}^-$ and $\bar{\theta}^m$} \label{Fig:posterior_theta}
	\end{center}
 \end{figure}

In this subsection, the parameters given by MFD-based Bayesian filters are denoted by $\kappa^+_{MFD}$ and $\bar{\theta}^+_{MFD}$, while the parameters given by CGD-based Bayesian filters are denoted by $\kappa^+_{CGD}$ and $\bar{\theta}^+_{CGD}$. Comparing $\kappa^+_{MFD}$ and $\kappa^+_{CGD}$, it can be seen from Table \ref{tab:-1} and Fig. \ref{Fig:posterior_kappa} that they are almost the same except for the term $\cos(\Delta \bar{\theta})$. Meanwhile, the closer $|\Delta \bar{\theta}|$ is to $\pi$, the closer $\kappa^+_{MFD}$ is to 0, i.e., the closer the posterior distribution on $\mathcal{S}_{M_1}(w_0)$ is to a uniform distribution. Therefore, the uncertainty in the posterior estimation provided by the Bayesian attitude filter with MFDs increases when there is a large discrepancy between the mean attitudes of the prior and the likelihood. In contrast, the uncertainty in the attitude filter with CGDs does not exhibit such an increase, since $\kappa^+_{CGD}$ is independent of $\Delta \bar{\theta}$. Besides, $\kappa^+_{MFD}$ and $\kappa^+_{CGD}$ are the same when $\Delta \bar{\theta} = 0$, which implies MFD-based and CGD-based filters give similar uncertainty descriptions around the axis $w_0$ when the mean attitude of the prior and the likelihood distributions are consistent.

Meanwhile, $\bar{\theta}^+_{MFD}$ and $\bar{\theta}^+_{CGD}$ are also different. Fig. \ref{Fig:posterior_theta} shows that $\bar{\theta}^+_{MFD}$ is always closer to the mean angles with a higher confidence, i.e., with larger concentration parameter, in $\bar{\theta}^-$ and $\bar{\theta}^m$. This difference is more significant when $\Delta \bar{\theta} $ is larger. The following presents an explanation from a theoretical perspective. 
\begin{proposition}\label{prop:theta_de}
    Define $\Delta\theta^+ \in \mathbb{R}$ as 
    \begin{align}
        &\Delta\bar{\theta}^+ (k,\Delta\bar{\theta})=\bar{\theta}^+_{MFD} -\bar{\theta}^+_{CGD} \nonumber\\
        &= {\rm sign}(\Delta\bar{\theta})\arccos\left( \frac{1 + k\cos\Delta\bar{\theta}}{\sqrt{1 + k^2 + 2k \cos(\Delta\bar{\theta})}}\right) - \frac{k}{1 + k}\Delta\bar{\theta},
    \end{align}
    where $k= {\kappa^m}/{\kappa^-}$.  Then the following property holds:
 \begin{subequations}\label{eq:daxiao_0p}
     \begin{gather}\label{eq:daxiao_1}
        \Delta\bar{\theta}^+ (k>1,\ 0 < \Delta\bar{\theta} \leq \pi ) > 0,
     \end{gather}
     \begin{gather}\label{eq:daxiao_2}
        \Delta\bar{\theta}^+ (0<k<1,\ 0 < \Delta\bar{\theta} \leq \pi ) < 0,
     \end{gather}
          \begin{gather}\label{eq:daxiao_3}
        \Delta\bar{\theta}^+ (k>1,\ -\pi \leq \Delta\bar{\theta} < 0 ) < 0,
     \end{gather}
     \begin{gather}\label{eq:daxiao_4}
        \Delta\bar{\theta}^+ (0<k<1,\ -\pi \leq \Delta\bar{\theta} < 0 ) > 0,
     \end{gather}
     \begin{gather}\label{eq:daxiao_5}
                \lim_{k \rightarrow +\infty}\Delta\bar{\theta}^+ = \Delta\bar{\theta}^+ (k=0,\ 1) = 0.
     \end{gather}
 \end{subequations}
\end{proposition}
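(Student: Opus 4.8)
The plan is to fix $k>0$, set $\alpha=\Delta\bar{\theta}$, and reduce all of \eqref{eq:daxiao_0p} to a one-variable monotonicity statement in $\alpha$. First, since $\cos$ is even, the map $\alpha\mapsto\Delta\bar{\theta}^+(k,\alpha)$ is odd: the $\arccos$ term carries the factor ${\rm sign}(\alpha)$ while its argument is even in $\alpha$, and $\tfrac{k}{1+k}\alpha$ is odd. In particular $\Delta\bar{\theta}^+(k,0)=0$, and \eqref{eq:daxiao_3}--\eqref{eq:daxiao_4} follow from \eqref{eq:daxiao_1}--\eqref{eq:daxiao_2} by sending $\alpha\mapsto-\alpha$. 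So it suffices to prove \eqref{eq:daxiao_1}--\eqref{eq:daxiao_2} for $\alpha\in(0,\pi]$, where ${\rm sign}(\alpha)=1$.

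The crux is the derivative $\partial_\alpha\Delta\bar{\theta}^+$. Matching the coefficients of $\cos\theta$ and $\sin\theta$ in the last display of the proof of Theorem \ref{th:poster_anly} gives the phasor identity $\kappa^+e^{\,i\bar{\theta}^+_{MFD}}=\kappa^-+\kappa^m e^{\,i\alpha}$, from which, for $\alpha\in(0,\pi)$,
\[
\frac{\partial\bar{\theta}^+_{MFD}}{\partial\alpha}={\rm Im}\,\frac{i\kappa^m e^{i\alpha}}{\kappa^-+\kappa^m e^{i\alpha}}=\frac{k(k+\cos\alpha)}{1+k^2+2k\cos\alpha};
\]
the same follows by differentiating the $\arccos$ in the statement directly, using $1-\big(\tfrac{1+k\cos\alpha}{\sqrt{1+k^2+2k\cos\alpha}}\big)^{2}=\tfrac{k^2\sin^2\alpha}{1+k^2+2k\cos\alpha}$. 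Since $\partial_\alpha\bar{\theta}^+_{CGD}=\tfrac{k}{1+k}$, putting the difference over a common denominator and simplifying (the numerator collapses to $k(k-1)(1-\cos\alpha)$) yields
\[
\frac{\partial}{\partial\alpha}\Delta\bar{\theta}^+(k,\alpha)=\frac{k(k-1)(1-\cos\alpha)}{(1+k)\big(1+k^2+2k\cos\alpha\big)},\qquad\alpha\in(0,\pi).
\]

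On $(0,\pi)$ with $k\neq1$ one has $1-\cos\alpha>0$, $1+k>0$, and $1+k^2+2k\cos\alpha\ge(1-k)^2>0$, so $\partial_\alpha\Delta\bar{\theta}^+$ has the constant sign of $k-1$. As $\Delta\bar{\theta}^+(k,\cdot)$ is continuous on $[0,\pi]$ with $\Delta\bar{\theta}^+(k,0)=0$, the mean value theorem gives $\Delta\bar{\theta}^+(k,\alpha)>0$ on $(0,\pi]$ when $k>1$ and $\Delta\bar{\theta}^+(k,\alpha)<0$ on $(0,\pi]$ when $0<k<1$, which together with the symmetry reduction is \eqref{eq:daxiao_1}--\eqref{eq:daxiao_4}. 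For \eqref{eq:daxiao_5}: as $k\to+\infty$, $\tfrac{1+k\cos\alpha}{\sqrt{1+k^2+2k\cos\alpha}}\to\cos\alpha$ and $\tfrac{k}{1+k}\alpha\to\alpha$, so $\Delta\bar{\theta}^+\to{\rm sign}(\alpha)\arccos(\cos\alpha)-\alpha=0$ on $[-\pi,\pi]$; and at $k=0$ both terms of $\Delta\bar{\theta}^+$ vanish, so $\Delta\bar{\theta}^+(0,\alpha)\equiv0$, in particular $\Delta\bar{\theta}^+(0,1)=0$.

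The only real work is the derivative simplification: differentiating the $\arccos$ expression head-on gives an unwieldy quotient, and the clean route is the phasor identity $\kappa^+e^{i\bar{\theta}^+}=\kappa^-+\kappa^m e^{i\alpha}$ implicit in Theorem \ref{th:poster_anly}, after which $\partial_\alpha\bar{\theta}^+_{MFD}$ is immediate. One must also note that this derivative formula holds only on the open interval $(0,\pi)$ (where $\sin\alpha\neq0$) and then use continuity of $\Delta\bar{\theta}^+$ at $\alpha=0$ and $\alpha=\pi$ to carry the strict inequalities to the endpoint $\alpha=\pi$.
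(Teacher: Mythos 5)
Your proof is correct and takes essentially the same route as the paper: both establish $\Delta\bar{\theta}^+(k,0)=0$, arrive at the identical derivative formula $\partial_{\Delta\bar{\theta}}\Delta\bar{\theta}^+=\frac{k(k-1)(1-\cos\Delta\bar{\theta})}{(1+k)\left(1+k^2+2k\cos\Delta\bar{\theta}\right)}$ (you via the phasor identity $\kappa^+e^{i\bar{\theta}^+}=\kappa^-+\kappa^m e^{i\Delta\bar{\theta}}$, the paper by direct differentiation), read off its sign, and use oddness in $\Delta\bar{\theta}$ to obtain the negative-angle cases, with the limiting checks done directly. The only discrepancy is in (\ref{eq:daxiao_5}): the paper reads ``$\Delta\bar{\theta}^+(k=0,\ 1)$'' as evaluation at $k=0$ and at $k=1$ (verifying $k=1$ via a half-angle identity), whereas you read it as the single point $(k,\Delta\bar{\theta})=(0,1)$; the $k=1$ case is in any event immediate from your derivative formula (the factor $k-1$ makes it vanish) together with $\Delta\bar{\theta}^+(1,0)=0$, so this is an interpretive rather than a mathematical gap.
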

\begin{proof}
    It is obvious that $\Delta\bar{\theta}^+ (k, \Delta\bar{\theta} = 0  ) \equiv 0$ for any $k>0$. Then, for $0 \leq \Delta\bar{\theta} \leq \pi $, the partial derivative of $\Delta\bar{\theta}^+ (k,\Delta\bar{\theta})$ with respect to $\Delta\bar{\theta}$ is given by
\begin{align*}
    \frac{\partial }
{{\partial \Delta\bar{\theta}}}\Delta\bar{\theta}^+ \left( {k,\Delta\bar{\theta}} \right) = \frac{{k\left( {k - 1} \right)\left( {1 - \cos \bar \Delta\bar{\theta} } \right)}}
{{\left( {1 + k^2  + 2k\cos \bar \Delta\bar{\theta} } \right)\left( {1 + k} \right)}}. \\ 
\end{align*}
Therefore, invoking $\Delta\bar{\theta}^+ (k, \Delta\bar{\theta} = 0  ) \equiv 0$, we have
\begin{align*}
     k < 1 \Rightarrow \frac{\partial }
{{\partial \Delta\bar{\theta} }}\Delta\bar{\theta}^+   < 0 \Rightarrow \Delta\bar{\theta}^+  \left( {k < 1,0 \leq \theta  \leq \pi } \right) < 0, \\ 
  k > 1 \Rightarrow \frac{\partial }
{{\partial \Delta\bar{\theta} }}\Delta\bar{\theta}^+   > 0 \Rightarrow \Delta\bar{\theta}^+  \left( {k > 1,0 \leq \theta  \leq \pi } \right) > 0, \\ 
\end{align*}
verifying (\ref{eq:daxiao_1}) and (\ref{eq:daxiao_2}) respectively. Next, substituting $k=1$ into $\Delta\bar{\theta}^+ (k,\Delta\bar{\theta})$, we have 
\begin{align*}
   &\Delta\bar{\theta}^+ (k = 1) = \text{sign}(\Delta\bar{\theta}) \arccos\left( \frac{1 + \cos\Delta\bar{\theta}}{\sqrt{2 + 2 \cos(\Delta\bar{\theta})}}\right) - \frac{1}{2}\Delta\bar{\theta} \\
        &=\text{sign}(\Delta\bar{\theta}) \arccos\left( \frac{2\cos^2(\Delta\bar{\theta}/2)}{\sqrt{4\cos^2(\Delta\bar{\theta}/2)}}\right) - \frac{1}{2}\Delta\bar{\theta} = 0,\\
        &\Delta\bar{\theta}^+ (k = 0) = \text{sign}(\Delta\bar{\theta}) \arccos\left(1\right) - 0 = 0,\\
            &\lim_{k \rightarrow +\infty}\Delta\bar{\theta}^+ = \text{sign}(\Delta\bar{\theta}) \arccos\left( \cos\Delta\bar{\theta} \right) - \Delta\bar{\theta} = 0,
\end{align*}
verifying (\ref{eq:daxiao_5}). Besides, because $\Delta\bar{\theta}^+ (k, \Delta\bar{\theta} )$ is an odd function with respect to $\Delta\bar{\theta}$,  (\ref{eq:daxiao_1}) and (\ref{eq:daxiao_2}) can be easily extended to $\Delta\bar{\theta} \in [-\pi,\pi]$, which shows (\ref{eq:daxiao_3}) and (\ref{eq:daxiao_4}). Therefore, the conclusion is verified.
\end{proof}
When $k>1$, we have $\kappa^m > \kappa^-$, which means that measurements are more reliable than the prior information on $\mathcal{S}_{M^-}(w_0)$. According to (\ref{eq:daxiao_0p}), it is concluded that $\bar{\theta}^+_{MFD}$ is closer to $\bar{\theta}^m$ than $\bar{\theta}^{+}_{CGD}$. When $k<1$,  $\bar{\theta}^+_{MFD}$ is closer to $\bar{\theta}^-$ (i.e., 0) than $\bar{\theta}^+_{CGD}$. In a word, compared with the attitude filter with 
CGDs, the mean angle estimated by the Bayesian attitude filter with MFDs is always closer to the mean angle of a higher-confidence distribution (either the likelihood or the prior). Additionally, according to (\ref{eq:daxiao_3}), the mean angles given by the two attitude filters are almost the same when $\kappa^m$ and $\kappa^-$  are the same or one is significantly larger than the other.

Since $\kappa^+$ depends on $\Delta\bar{\theta}$, the attitude filters with MFDs can detect the discrepancy between the predicted attitude and measurements, thereby quickly filtering out the negative effects of large initial errors by adjusting the uncertainty of attitude estimates. Meanwhile, according to Proposition \ref{prop:theta_de}, MFD-based filters can filter out low-confidence information fast, which also endows MFD-based filters with robustness to initial errors.  In Section \ref{sec:single_axis}, moreover, the two properties are related to the convergence rate of the filters with MFDs for single-axis rotation. These properties arise from the generic Bayesian fusion mechanism and are independent of the propagation of the distribution along the kinematics or dynamics, which motivates the fast MFD-based filter in the next section.


\subsection{Numerical examples}
To illustrate the difference in Bayesian attitude filters with MFDs and CGDs more clearly, we present a numerical example. The initial prior distribution at $t=0$ is set as $R_0 \sim \mathcal{M}(F_0)$ with $F_0 = 55\exp((35\pi/36)w_0^{\wedge})$ and $w_0 = [0.54,0.54,0.65]^T$, and the corresponding CGD is chosen as $R_0 \sim \mathcal{N_L} (\exp((35\pi/36)w_0^{\wedge}),P_0)$ with $P_0 = 0.0091I_3$. The initial prior distribution is fused with the likelihood distribution twice via Bayes' rule at $t=1$ and $t=2$, and the posterior distribution at $t=1$ is chosen as the prior distribution at $t=2$. Since computing the full posterior distributions of Bayesian filters with CGDs is nontrivial, approximated posterior distributions are obtained with the first-order approximation of Baker-Campbell-Hausdorff formula, i.e. $\log\left(\exp(\xi^{\wedge})\exp(\zeta^{\wedge})\right) \approx \xi^{\wedge} +\zeta^{\wedge}$ for $\xi,\ \zeta \in \mathbb{R}^3$ and $\Vert \xi \Vert,\ \Vert \zeta \Vert <<  1$. The likelihood distribution is set as  $R_m |R_t \sim \mathcal{M}(R_tK_m)$ with $K_m = 60I_3$ for MFDs and $R_m |R_t \sim \mathcal{N_L} (R_t,P_m)$ with $P_m = 0.0083 I_3$ for CGDs. The true attitudes are set as $R_t=I_3$ for $t=0,1,2$. These parameter settings imply that the attitude of a stationary rigid body, whose body-fixed frame coincides with the inertial frame,  is estimated with a large initial error around the axis $w_0$ and direct attitude measurements. The measurements are independent of the initial error and therefore have higher confidence levels. The results are summarized in Table \ref{tab:0} and Fig. \ref{fig:0}.

At $t=0$, the prior and the likelihood distributions fed into the MFD-based and the CGD-based attitude filters are almost the same. However, after the first fusion at $t=1$, the posterior distributions of the two filters are noticeably different, as shown in Figs. \ref{fig:0}b and \ref{fig:0}e. Their differences are twofold. Firstly, although the prior and the likelihood distributions are both highly concentrated about the mean attitude, the posterior distribution given by the MFD-based attitude filter is significantly uncertain for rotations about the axis $w_0$ and $\kappa$ also steeply decreases at $t=1$. By contrast, the posterior distribution given by the CGD-based attitude filter is even more concentrated around the mean attitude than the prior and the likelihood distributions, as shown in Fig. \ref{fig:0}f and Table \ref{tab:0}. As discussed in Section \ref{sec:c_and_d}, the reason for this difference is that the concentration parameter of the posterior CGD is independent of the mean angle, while that of MFDs is not. Secondly, the mean attitude of the MFD-based filter at $t=1$ is closer to the mean attitude of the likelihood distribution than that of the CGD-based filter at $t=1$, which corresponds to Proposition \ref{prop:theta_de}. Then, at $t=2$, the posterior distribution given by the MFD-based attitude filter is almost the same as the likelihood distribution because the prior distribution is highly uncertain for rotations about the axis $w_0$, as shown in Fig. \ref{fig:0}c. In contrast, as shown in Fig. \ref{fig:0}f, the error of the CGD-based attitude filter is still large.

 \begin{table}[htbp]
\centering
\caption{Mean angles and concentration parameters of distributions on $\mathcal{S}_{I_3}(w_0)$}
\label{tab:0}
\begin{tabular}{c|c|cccc}
\hline
Probabilistic Model&             &Likelihood&$t=0$& $t=1$& $t=2$\\ \hline
MFDs& $\kappa$&120&110&14.16&123.71\\
                      & $\bar{\theta}$&0°&175°&42.62°&6.43°\\ \hline
CGDs& $1/\sigma^2$&120&110&230&350\\
                      & $\bar{\theta}$&0°&175°&83.70°&55°\\ \hline
\end{tabular}
\end{table}
 
\begin{figure}[htbp]
	\begin{center}
		\includegraphics[width=\linewidth]{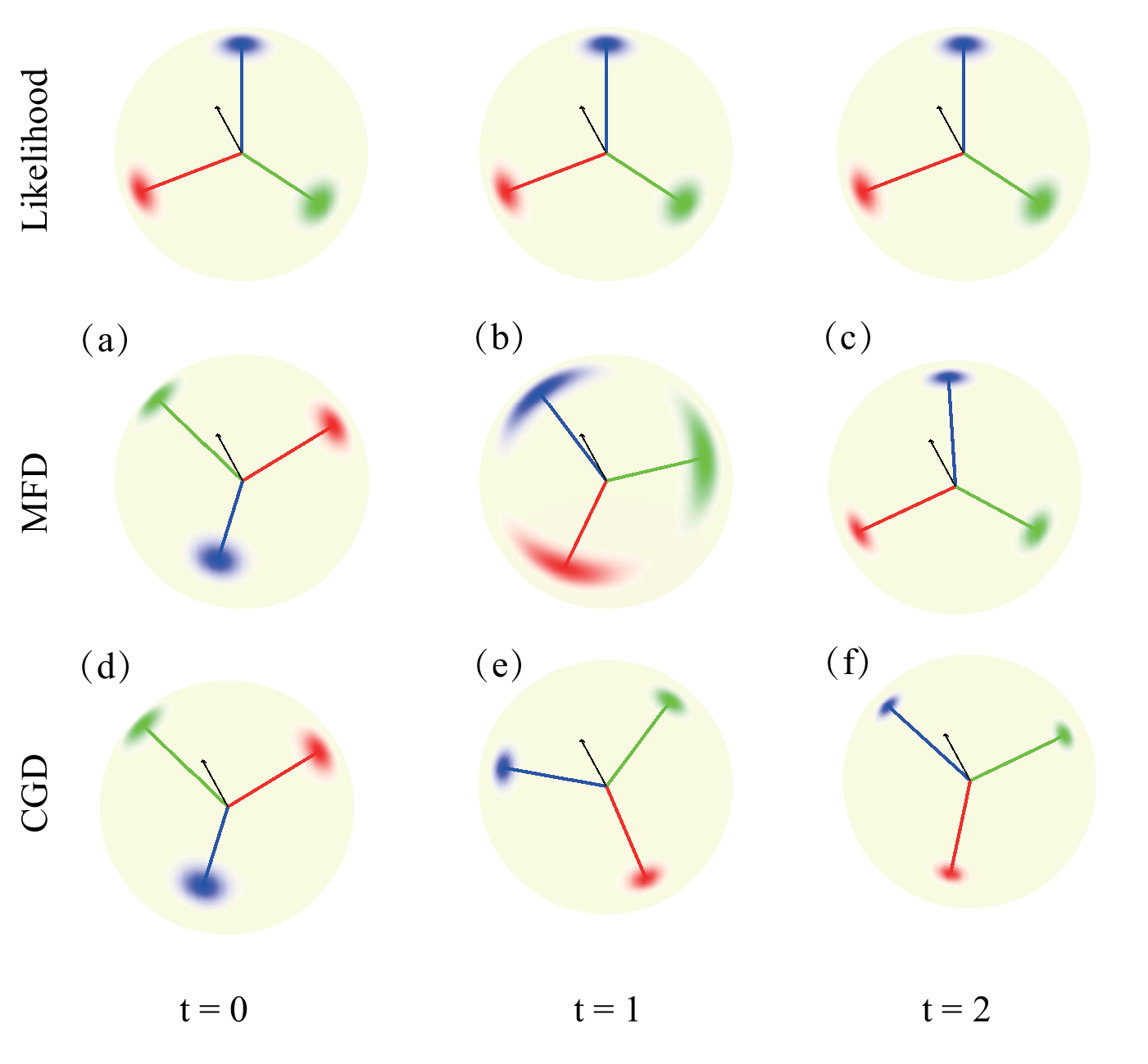}
		\caption{The posterior distribution with several direct attitude measurements.}\label{fig:0}
	\end{center}
 \end{figure}

\section{MFD-Based Fast Nonlinear Filters for Attitude Estimation}
\label{sec:Filter}
In this section, we apply MFDs to characterize attitude error on $SO(3)$ and thereby  propose two MFD-based fast nonlinear filters for attitude estimation. 

\subsection{Attitude estimation error following MFDs}
We construct the following probability model to characterize the uncertainty of random attitudes with invariant attitude error following MFDs on $SO(3)$.

\begin{define}
   The discrepancy of the random matrix $R \in SO(3)$ with respect to $\bar{R}\in SO(3)$ is characterized by the left- or right-invariant error with MFDs on $SO(3)$if it satisfies
   \begin{subequations}
        \begin{equation}\label{eq:right_MF}
        R = \delta R \bar{R},{\ }\delta R \sim \mathcal{M}(N),
    \end{equation}
    or
    \begin{equation}\label{eq:left_MF}
        R = \bar{R} \delta R,{\ }\delta R \sim \mathcal{M}(N),
    \end{equation}
   \end{subequations}
where $\bar{R}$ and $ N = N^T \in \mathbb{R}^{3 \times 3}$ are deterministic parameters.
For simplicity,  (\ref{eq:right_MF}) and (\ref{eq:left_MF}) are denoted by $R \sim \mathcal{P_R M} (\bar{R},N)$ and $R \sim \mathcal{P_L M} (\bar{R},N)$, respectively.
\end{define}
The right-invariant error is invariant to right multiplications for any deterministic group element $\bar{R}_R \in SO(3)$, and the left-invariant error has similar invariance for left multiplications. The relationship between the above probability model and MFDs are presented as follows.

\begin{proposition}\label{pro:rela_MF}
    Consider a random rotation matrix $R \sim \mathcal{M}(F)$ with $F\in\mathbb{R}^{3\times 3}$ and the proper left and right decompositions of $F$ are given by $F=K'M'$ and $F=MK$, respectively. Then, the probability model with invariant errors on $SO(3)$ is given respectively by
\begin{equation*}
    R \sim \mathcal{P_R M} (M',K')\ and \ R \sim \mathcal{P_L M} (M,K).
\end{equation*}
\end{proposition}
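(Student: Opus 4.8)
The plan is to read this off directly from Lemma \ref{le:rotation}, using a one-sided (left or right) translation by the polar component of $F$ to reduce the parameter to its elliptical component. First I would recall the explicit forms of the two factorizations established just after (\ref{eq:polar_r})--(\ref{eq:polar_l}): the proper right polar decomposition $F = MK$ has $M = UV^T \in SO(3)$ and $K = VSV^T$ symmetric (positive semidefinite), while the proper left polar decomposition $F = K'M'$ has $K' = USU^T$ symmetric and $M' = UV^T \in SO(3)$. In particular $K = K^T$ and $K' = (K')^T$, so both are admissible MFD parameters under the standing convention $N = N^T$.

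For the left-invariant claim I would set $\delta R \triangleq M^T R$. Since $M^T \in SO(3)$, Lemma \ref{le:rotation} (with $R_l = M^T$, $R_r = I_3$) gives $\delta R \sim \mathcal{M}(M^T F) = \mathcal{M}(M^T M K) = \mathcal{M}(K)$. Hence $R = M\,\delta R$ with $\delta R \sim \mathcal{M}(K)$, which is precisely the statement $R \sim \mathcal{P_L M}(M,K)$.

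For the right-invariant claim I would instead set $\delta R \triangleq R (M')^T$ and apply Lemma \ref{le:rotation} with $R_l = I_3$, $R_r = (M')^T$, obtaining $\delta R \sim \mathcal{M}\!\left(F (M')^T\right) = \mathcal{M}\!\left(K' M' (M')^T\right) = \mathcal{M}(K')$. Therefore $R = \delta R\,M'$ with $\delta R \sim \mathcal{M}(K')$, i.e. $R \sim \mathcal{P_R M}(M',K')$.

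There is essentially no hard step here: the entire content is identifying the correct one-sided translation and invoking Lemma \ref{le:rotation}. The only point requiring care is bookkeeping --- matching the side (left versus right) on which $\delta R$ is defined to the corresponding polar factorization of $F$, and confirming that the elliptical factors $K$ and $K'$ are symmetric so that $\mathcal{M}(K)$ and $\mathcal{M}(K')$ are well defined.
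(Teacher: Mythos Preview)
Your proof is correct and takes essentially the same approach as the paper: both invoke Lemma~\ref{le:rotation} to pass between $\mathcal{M}(F)$ and the one-sided translates $\mathcal{M}(K)$, $\mathcal{M}(K')$. The only cosmetic difference is direction---the paper starts from the model $R=\delta R\,M'$ (resp.\ $R=M\,\delta R$) and pushes forward via Lemma~\ref{le:rotation} to land on $\mathcal{M}(K'M')=\mathcal{M}(F)$, whereas you start from $R\sim\mathcal{M}(F)$ and pull back by $M^T$ (resp.\ $(M')^T$) to obtain the distribution of $\delta R$; the content is identical.
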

\begin{proof}
    For $R \sim \mathcal{P_R M} (M',K')$, we have 
    \begin{equation*}
        R = \delta R M',{\ } \delta R \sim \mathcal{M}(K').
    \end{equation*}
    According to (\ref{eq:trans}), we can obtain $R \sim \mathcal{M}(F'),$ where $F' = K'M' = F$.

    Similarly, for $R \sim \mathcal{P_L M} (M,K)$, we have 
    \begin{equation*}
        R = M \delta R,{\ } \delta R \sim \mathcal{M}(K).
    \end{equation*}
    According to (\ref{eq:trans}), we can obtain $R \sim \mathcal{M}(F''),$ where $F'' = MK = F$.
\end{proof}

Let $\delta R = \exp(\xi^{\wedge})$. When $\xi$ is small, or equivalently the random error $\delta R$ is highly concentrated about $I_3$, we can establish a connection between $N$ and the covariance of $\xi$.
\begin{proposition}
\label{pro:GaApp}
   Let the rotation matrix $\delta R$ be parameterized by the exponential map $\exp:\mathfrak{so}(3) \rightarrow SO(3)$ as $\delta R(\xi) = \exp\left(\xi^{\wedge}\right)$. Suppose that $\delta R \sim \mathcal{M}(N)$, and the proper SVD of  $N$ is given by $N = V S V^T$, where $S={\rm diag}[s_1,s_2,s_3]$. If  $s_3 \gg 0$, then  $\xi$  follows a 3-dimensional Gaussian distribution with
    \begin{equation}
        \xi \sim \mathcal{N}(0,V\left({\rm tr}(S)I_3-S\right)^{-1}V^T)\label{eq:GaApp}
    \end{equation}
\end{proposition}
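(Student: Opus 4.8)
The plan is to substitute the exponential parametrization $\delta R=\exp(\xi^\wedge)$ directly into the matrix Fisher density, expand with Rodrigues' formula, and identify the quadratic-in-$\xi$ exponent that survives once the concentration hypothesis $s_3\gg 0$ justifies discarding higher-order terms. Since $N=N^T$, the density \eqref{eq:de_MF} reads $p(\delta R)\propto\mathrm{etr}(N\delta R)$, and writing $\phi=\Vert\xi\Vert$, Rodrigues' rotation formula gives
\[
\mathrm{tr}\!\left(N\exp(\xi^\wedge)\right)=\mathrm{tr}(N)+\frac{\sin\phi}{\phi}\,\mathrm{tr}(N\xi^\wedge)+\frac{1-\cos\phi}{\phi^2}\,\mathrm{tr}\!\left(N(\xi^\wedge)^2\right).
\]
The linear term vanishes because $N$ is symmetric and $\xi^\wedge$ is skew-symmetric. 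For the quadratic term I would use the standard identity $(\xi^\wedge)^2=\xi\xi^T-\phi^2 I_3$, which yields $\mathrm{tr}(N(\xi^\wedge)^2)=\xi^TN\xi-\phi^2\,\mathrm{tr}(N)=-\xi^T\!\left(\mathrm{tr}(N)I_3-N\right)\xi$. Hence
\[
p(\delta R(\xi))\propto\exp\!\left[-\frac{1-\cos\phi}{\phi^2}\,\xi^T\!\left(\mathrm{tr}(N)I_3-N\right)\xi\right].
\]

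Next I would invoke the concentration hypothesis. With $N=VSV^T$ a symmetric proper SVD \eqref{eq:usv}, the mean attitude of $\mathcal{M}(N)$ is $I_3$ and $\mathrm{tr}(N)I_3-N=V(\mathrm{tr}(S)I_3-S)V^T$ is positive definite with eigenvalues $s_2+s_3,\,s_1+s_3,\,s_1+s_2$; when $s_3\gg 0$ these are all large, so the mass of $\delta R$ concentrates near $I_3$, i.e. $\phi$ is small with overwhelming probability. On that region $\frac{1-\cos\phi}{\phi^2}=\tfrac12+O(\phi^2)$, and under $R=\exp(\xi^\wedge)$ the bi-invariant Haar measure satisfies $dR=(1+O(\phi^2))\,d\xi$ on $\{\phi<\pi\}$. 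Retaining the leading term gives
\[
p(\xi)\propto\exp\!\left[-\tfrac12\,\xi^T V\!\left(\mathrm{tr}(S)I_3-S\right)V^T\xi\right],
\]
which, after normalization, is the density of $\mathcal{N}\!\left(0,\,V(\mathrm{tr}(S)I_3-S)^{-1}V^T\right)$, establishing \eqref{eq:GaApp}.

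The main obstacle is making the sense of the approximation precise. Since $\exp$ is a diffeomorphism only on $\{\phi<\pi\}$, one should check that when $s_3\gg 0$ the contribution of $\{\phi\ge\pi-\epsilon\}$ to the normalizing constant and to the relevant moments is exponentially negligible — bounded via $\mathrm{tr}(N\exp(\xi^\wedge))\le\mathrm{tr}(N)-c$ there with $c\ge(1-\cos(\pi-\epsilon))(s_2+s_3)$ increasing in $s_3$ — and that the $O(\phi^2)$ corrections in both the exponent and the Jacobian perturb the covariance only at higher order, so that \eqref{eq:GaApp} is precisely the first-order (concentrated-Gaussian) approximation, consistent with Remark \ref{re:appro}. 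The remaining computations, namely the trigonometric expansion and the change of variables, are routine.
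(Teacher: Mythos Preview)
Your argument is correct and in fact more self-contained than the paper's. The paper does not expand the density directly; instead it reduces to the diagonal case by the change of variables $\delta R'=V^T\delta R\,V$ (so that $\delta R'\sim\mathcal{M}(S)$), observes via $V^T\exp(\xi^\wedge)V=\exp((V^T\xi)^\wedge)$ that the exponential coordinate transforms as $\xi'=V^T\xi$, and then simply cites an external result (equation~(36) in \cite{leeBayesianAttitudeEstimation2018b}) stating that for $\mathcal{M}(S)$ with $S$ diagonal the approximating Gaussian has covariance $\mathrm{diag}[1/(s_2+s_3),1/(s_1+s_3),1/(s_1+s_2)]=(\mathrm{tr}(S)I_3-S)^{-1}$; the conclusion follows by conjugating back with $V$. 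Your route instead derives that diagonal-case covariance from scratch via Rodrigues and the identity $(\xi^\wedge)^2=\xi\xi^T-\|\xi\|^2 I_3$, and it has the merit of making explicit both the order of the approximation and the role of the Haar-to-Lebesgue Jacobian, points the paper leaves to the cited reference. The paper's proof is shorter but less self-contained; yours is a genuine derivation rather than a reduction-plus-citation.
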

\begin{proof}
Let $\delta R'=V^T \delta R V$, $P=\left({\rm tr}(S)I_3-S\right)^{-1} = \text{diag}[1/(s_2+s_3), 1/(s_1+s_3), 1/(s_1+s_2)] $, and $\delta R'=\exp({\xi'} ^{\wedge})$.  Replacing $\delta R$ with $\exp(\xi^{\wedge})$, we can obtain
    \begin{align*}
        \exp({\xi'} ^{\wedge})=\delta R'=V^T \exp(\xi^{\wedge}) V =\exp(V^T \xi^{\wedge} V),
    \end{align*}
which implies that $\xi' = V^T \xi$. According to (36) in \cite{leeBayesianAttitudeEstimation2018b}, $\xi'$ follows a Gaussian distribution with $\xi' \sim \mathcal{N}(0,P)$. Using $\xi' = V^T\xi$, we obtain $\xi \sim \mathcal{N}(0,VPV^T)$, which shows (\ref{eq:GaApp}).
\end{proof}
Correspondingly, $N$ can be uniquely determined by the covariance matrix $P \in \mathbb{R}^{3\times 3}$ of $ \xi \sim \mathcal{N}(0,P)$ as
\begin{equation}\label{eq:N2P}
    N = U \left( \frac{1}{2}tr(\Lambda^{-1})I_3 - \Lambda^{-1} \right) U^T,
\end{equation}
where $U$ and $\Lambda$ are given by the proper SVD of $P = U \Lambda U^T$.

The proposed probability model divides the random rotation matrix $R$ into two mutually independent parts: a deterministic central  attitude $\bar{R}$ and a small random error $\delta R$. They can be separately handled. Additionally, the random error can be propagated and updated by a linearized error system and measurement equations. These characteristics will be utilized in the following attitude filter development.


\subsection{Filtering with the right-invariant error on SO(3)}
\label{sec:right_filter}
Consider the system with discrete attitude kinematics (\ref{eq:AttKiDe}) and measurement model (\ref{eq:AttMea}). The Bayesian filter with the right-invariant error approximates the posterior with (\ref{eq:right_MF}), i.e., we assume that the posterior probability of attitude at $t_k$ is characterized by
\begin{equation}\label{eq:post_assum_L}
    R_k|\{\tilde{b}_k^i\}\sim \mathcal{P_R M}(\hat{R}_k, N_k),
\end{equation}
where $\hat{R}_k$ and $N_k$ are computed at each step as follows.

\subsubsection{Prior}
To obtain the prior attitude distribution, we analyze the deterministic central attitude and random right-invariant error along (\ref{eq:AttKiDe}) separately.

Given $\hat{R}_{k-1}$, $N_{k-1}$ and $\tilde{\omega}_k$, the central attitude at $t_k$, denoted by $\hat{R}_{k|k-1}$, is obtained from the deterministic part of (\ref{eq:AttKiDe}), as
\begin{equation}\label{eq:pro_R}
    \hat{R}_{k|k-1} = \hat{R}_{k-1} \exp\{ (h\tilde{\omega}_k)^{\wedge} \}.
\end{equation}

Assuming that the prior probability of $R_k$ is characterized by $R_k = \mathcal{P_R M}(\hat{R}_{k|k-1}, N_{k|k-1})$, then we have
\begin{equation}\label{eq:err_right}
    R_k = \delta R_{k|k-1} \hat{R}_{k|k-1},\ \delta R_{k|k-1} \sim \mathcal{M}(N_{k|k-1}),
\end{equation}
Next, we compute $N_{k|k-1}$ with the linearized error equation. Assume that $\delta R_{k|k-1}$ is highly concentrated about $I_3$ and $\Vert w_k \Vert$ is small. Denoting by $\xi_{k|k-1} \in \mathbb{R}^3$,  it follows that $\Vert \xi_{k|k-1}  \Vert << 1$  and thus
\begin{equation}\label{eq:appro_R}
    \delta R_{k|k-1} = \exp(\xi_{k|k-1}^{\wedge}) = I + \xi_{k|k-1}^{\wedge} + \mathcal{O}(\Vert \xi_{k|k-1} \Vert^2).
\end{equation}
Substituting (\ref{eq:pro_R}) and (\ref{eq:AttitudeKm}) into  (\ref{eq:err_right}), the estimation error is then derived as
\begin{align}\label{eq:right_err_ki}
    \delta R_{k|k-1} &= R_{k-1} \exp\{ (h \tilde{\omega}_k + w_k)^{\wedge} \} \exp\{ (h \tilde{\omega}_k)^{\wedge} \}^T \hat{R}_{k-1}^T \nonumber \\
                     &\approx \delta R_{k-1} \hat{R}_{k-1} \exp\{ (w_k)^{\wedge} \} \hat{R}_{k-1}^T \nonumber \\
                     &= \delta R_{k-1} \exp\{ (\hat{R}_{k-1} w_k)^{\wedge} \} 
\end{align}
Employing (\ref{eq:appro_R}) and retaining the first-order term, the linearization of (\ref{eq:right_err_ki}) is computed as
\begin{equation*}
    \xi_{k|k-1} = \xi_{k-1} + \hat{R}_{k-1} w_k,
\end{equation*}
and the covariance matrix of $\xi_{k|k-1}$ is given by
\begin{equation}
    P_{k|k-1} = P_{k-1} + \hat{R}_{k-1} Q_k \hat{R}_{k-1}^T,\label{eq:R_pro_P}
\end{equation}
where $P_{k-1}$ is determined by the approximate relationship with $N_{k|k-1}$ in Proposition \ref{pro:GaApp}. Similarly, $N_{k|k-1}$ is computed with $P_{k|k-1}$ by (\ref{eq:N2P}) as
\begin{equation*}
     N_{k|k-1} = U_{k|k-1} \left( \frac{1}{2}{\rm tr}(\Lambda_{k|k-1}^{-1})I_3 - \Lambda_{k|k-1}^{-1} \right) U_{k|k-1}^T,
\end{equation*}
where $U_{k|k-1}$ and $\Lambda_{k|k-1}$ are given by the proper SVD of $P_{k|k-1} = U_{k|k-1} \Lambda_{k|k-1} U_{k|k-1}^T$.

\subsubsection{Likelihood}
The measurement update stage constructs a probability model defined by (\ref{eq:right_MF})  as 
the likelihood distribution, using measurement equations. 

Consider the measurement model described by (\ref{eq:AttMea}). For a given weight matrix $W_k = \text{diag}[w^1_k, \dots, w^n_k]$, the attitude measurement is generated by solving Wahba's problem with  measurement vectors $\tilde{B}_k = \left[\tilde{b}_k^1, \dots, \tilde{b}_k^n \right]$ and reference vectors $E_k = \left[e_k^1, \dots, e_k^n \right]$, which is
\begin{equation}\label{eq:Wahba}
    \hat{R}_k^m(\tilde{B}_k, E_k)\! = \!\mathop{\text{min}}\limits_{\hat{R} \in SO(3)} \frac{1}{2}\!\text{tr}\!\left[\! (E_k\!-\hat{R}^T\!\tilde{B}_k\!)^T\!W_k\!(\!E_k\!-\!\hat{R}^T\!\tilde{B}_k\!) \!\right].
\end{equation}
One solution of Wahba's problem was given by \cite{markleyAttitudeDeterminationUsing1988}, in which the proper SVD was used to compute the optimal attitude.
\begin{lemma} \cite{markleyAttitudeDeterminationUsing1988} \label{le:SVD_at}
     Define a matrix $\tilde{L}_k$ as $\tilde{L}_k=E_k W_k \tilde{B}_k^T$. Then, the unique minimum to (\ref{eq:Wahba}) is obtained as 
    \begin{equation*}
        \hat{R}_k^m = UV^T,
    \end{equation*}
    where $U$and $V$ are given by the proper SVD $\tilde{L}_k = USV^T$. Furthermore, $\hat{R}_k^m$ is unique when $s_2 + s_3 \neq 0$.
\end{lemma}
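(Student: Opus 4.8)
The plan is to convert Wahba's problem \eqref{eq:Wahba} into a pure trace maximization on $SO(3)$ and then to exploit the sign and ordering convention $s_1\ge s_2\ge|s_3|$ built into the proper SVD. First I would expand the weighted squared residuals in \eqref{eq:Wahba}. Since $\hat R$ is orthogonal, the purely quadratic part of the residuals is constant in $\hat R$, so up to an $\hat R$-independent additive constant the objective reduces to a term linear in $\hat R$. Collecting the observation and reference vectors into $\tilde B_k,E_k$, using the cyclic invariance of the trace, and invoking the definition $\tilde L_k=E_kW_k\tilde B_k^{T}$, the minimization turns into
\begin{equation*}
    \hat R_k^{m}=\arg\max_{\hat R\in SO(3)}\operatorname{tr}\!\bigl(\hat R^{T}\tilde L_k\bigr).
\end{equation*}
A maximizer exists because $SO(3)$ is compact and the objective is continuous.

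Next I would substitute the proper SVD $\tilde L_k=USV^{T}$ with $U,V\in SO(3)$ and $S=\operatorname{diag}[s_1,s_2,s_3]$. Putting $Z:=V^{T}\hat R^{T}U$, the map $\hat R\mapsto Z$ is a bijection of $SO(3)$ onto itself, and cyclicity gives $\operatorname{tr}(\hat R^{T}\tilde L_k)=\operatorname{tr}(ZS)=s_1Z_{11}+s_2Z_{22}+s_3Z_{33}$. Hence the lemma reduces to the single assertion
\begin{equation*}
    \max_{Z\in SO(3)}\operatorname{tr}(ZS)=s_1+s_2+s_3,\qquad\text{attained only at }Z=I_3\text{ when }s_2+s_3\neq0 .
\end{equation*}
Proving this inequality together with its equality case is the step I expect to be the main obstacle, since the $SO(3)$ constraint (the unit-determinant condition in particular) is exactly what makes the naive bound $\operatorname{tr}(ZS)\le s_1+s_2+|s_3|$ too weak when $s_3<0$.

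To establish it I would parametrize $Z\in SO(3)$ by its Euler axis $n\in\mathbb{S}^{2}$ and rotation angle $\theta$ through Rodrigues' rotation formula, which yields $Z_{ii}=\cos\theta+(1-\cos\theta)n_i^{2}$ and hence
\begin{equation*}
    \operatorname{tr}(ZS)=\cos\theta\,\operatorname{tr}(S)+(1-\cos\theta)\,n^{T}Sn=n^{T}Sn+\cos\theta\bigl(\operatorname{tr}(S)-n^{T}Sn\bigr).
\end{equation*}
Using $\sum_i n_i^{2}=1$ one rewrites $\operatorname{tr}(S)-n^{T}Sn=n_1^{2}(s_2+s_3)+n_2^{2}(s_1+s_3)+n_3^{2}(s_1+s_2)$, and the ordering $s_1\ge s_2\ge|s_3|$ forces $s_2+s_3$, $s_1+s_3$ and $s_1+s_2$ to be nonnegative, so this quantity is $\ge0$; combined with $\cos\theta\le1$ this gives $\operatorname{tr}(ZS)\le\operatorname{tr}(S)$. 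In the equality case either $\cos\theta=1$, i.e. $Z=I_3$, or $\operatorname{tr}(S)-n^{T}Sn=0$; but when $s_2+s_3>0$ all three coefficients above are strictly positive, which would force the unit vector $n$ to vanish. Hence for $s_2+s_3\neq0$ the maximum is attained uniquely at $Z=I_3$, and unwinding $V^{T}\hat R^{T}U=I_3$ gives the unique minimizer $\hat R_k^{m}=UV^{T}$; when $s_2+s_3=0$ every rotation about the first principal axis also attains the maximum, so uniqueness genuinely fails, which is precisely the case excluded in the statement. I would close by noting that allowing $s_3<0$ in the proper SVD lets this one argument cover $\det\tilde L_k<0$ as well, with no separate case analysis --- which is the reason the proper, rather than the ordinary, SVD is used here.
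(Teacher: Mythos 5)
Your proof is correct, and it is worth noting that the paper itself offers no proof of this lemma at all --- it is imported verbatim from Markley's SVD paper \cite{markleyAttitudeDeterminationUsing1988} --- so what you have written is a sound, self-contained reconstruction of the cited result rather than an alternative to an in-text argument. Your route is the standard one and all the key steps check out: reduce the cost in (\ref{eq:Wahba}) to maximizing $\operatorname{tr}(\hat{R}^T\tilde{L}_k)$ over the compact set $SO(3)$, pass to $Z=V^T\hat{R}^TU$ through the proper SVD, and bound $\operatorname{tr}(ZS)$ via Rodrigues' formula, where the identity $\operatorname{tr}(S)-n^TSn=n_1^2(s_2+s_3)+n_2^2(s_1+s_3)+n_3^2(s_1+s_2)\geq 0$ is precisely the point at which the proper-SVD convention $s_1\geq s_2\geq|s_3|$ is used (and is what makes the $\det\tilde{L}_k<0$ case come for free); your equality analysis correctly gives uniqueness exactly when $s_2+s_3\neq 0$ and identifies the leftover one-parameter family of rotations about the first principal axis when $s_2+s_3=0$, which is the degeneracy the lemma excludes. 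One cosmetic caveat on the step you glossed: as printed, the residual in (\ref{eq:Wahba}) pairs $E_k$ with $\hat{R}^T\tilde{B}_k$, and taken literally that cross term is $\operatorname{tr}(\hat{R}\tilde{L}_k)$, whose maximizer would be $VU^T$; with the measurement model (\ref{eq:AttMea}) the intended residual is $\tilde{b}^i_k-\hat{R}^Te^i_k$, whose cross term is $\operatorname{tr}(\hat{R}^T\tilde{L}_k)$ and yields $UV^T$ as you and the lemma state --- so your reduction matches the intended convention, and the mismatch is a transposition slip in the displayed cost, not a gap in your argument.
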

We assume that the optimal attitude is always unique and its uncertainty is characterized by the conditional probability distribution 
\begin{equation}
     \hat{R}_k^m | R_k  \sim \mathcal{P_R M}(R_k, N_k^m),\label{eq:left_mea}
\end{equation}
where $N_k^m$ is the parameter to be computed. The actual attitude $R_k$ and  vectors $B_k$ in $\mathcal{F_B}$ corresponding to $E_k$ can be rewritten into
\begin{align}
    R_k &= \delta R_k^T \hat{R}_k^m ,{\ }\delta R_k \sim \mathcal{M}(N_k^m), \label{eq:act_att}\\
    B_k &= \tilde{B}_k +\delta B_k,{\ }\delta B_k = [\delta b_k^1, \dots, \delta b_k^n].\label{eq:act_mea}
\end{align}    
According to \cite{sanyal2006globaloptimalattitudeestimation}, the optimal attitude $\hat{R}_k^m$ satisfies
\begin{equation}\label{eq:opti_mea}
    \tilde{L}_k^T \hat{R}_k^m = (\hat{R}_k^m )^T \tilde{L}_k,
\end{equation}
and  $L_k=E_k W_k B_k^T$ satisfies
\begin{equation}\label{eq:opti_real}
    L_k^T R_k = R_k^T L_k.
\end{equation}
Equation (\ref{eq:opti_real}) is linearized with $\xi_k^m =\log(\hat{R}_k^m R_k^T)^{\vee}$ to compute $N_k^m$ as summarized in the following theorem.
\begin{theorem}\label{th:lin_opti}
    Consider a unique $\hat{R} \in SO(3)$ satisfying (\ref{eq:opti_mea}) with measurement vectors $\tilde{B} = \left[\tilde{b}^1, \dots, \tilde{b}^n \right]$ , reference vectors $E = \left[e^1, \dots, e^n \right]$ and a weight matrix $W = {\rm diag}[w^1, \dots, w^n]$. The actual attitude and vectors in $\mathcal{F_B}$ corresponding to $E$ are defined as (\ref{eq:act_att}) and (\ref{eq:act_mea}) respectively. The first-order approximation of (\ref{eq:opti_real}) with errors $\xi =\log( \hat{R}R^T)^{\vee} \in \mathbb{R}^3$ and $\delta B = \left[ \delta b^1,\dots, \delta b^n \right]$ is
\begin{equation}\label{eq:lin_mean}
    \left( {\text{tr}\left( {\tilde{L}\hat{R}^T } \right)I_3  - \tilde{L}\hat{R}^T } \right)\xi  = \sum\limits_i {w^i (e^i )^{\wedge}\hat{R}\delta b^i },
\end{equation}
where $\tilde{L} = EW\tilde{B}^T$.
\end{theorem}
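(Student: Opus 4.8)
\emph{Proof strategy.} The plan is to linearize the exact optimality condition (\ref{eq:opti_real}) about the measured data $(\hat{R},\tilde{B})$, using (\ref{eq:opti_mea}) to kill the zeroth-order part and the algebraic identities (\ref{eq:crossTr1})--(\ref{eq:crossTr3}) to turn the resulting skew-symmetric matrix identity into the claimed vector equation. First I would expand the two ingredients of (\ref{eq:opti_real}) to first order: from $\xi=\log(\hat{R}R^T)^{\vee}$ one gets $R=\exp(-\xi^{\wedge})\hat{R}=\hat{R}-\xi^{\wedge}\hat{R}+\mathcal{O}(\Vert\xi\Vert^2)$ and hence $R^T=\hat{R}^T+\hat{R}^T\xi^{\wedge}+\mathcal{O}(\Vert\xi\Vert^2)$; from $B=\tilde{B}+\delta B$ one gets $L=EWB^T=\tilde{L}+\delta L$ with $\delta L=EW\delta B^T=\sum_i w^i e^i(\delta b^i)^T$.

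Substituting these into (\ref{eq:opti_real}) and discarding the second-order products $\xi^{\wedge}\delta b^i$, the identity becomes $\tilde{L}^T\hat{R}-\tilde{L}^T\xi^{\wedge}\hat{R}+\delta L^T\hat{R}=\hat{R}^T\tilde{L}+\hat{R}^T\xi^{\wedge}\tilde{L}+\hat{R}^T\delta L$. The key structural observation is that (\ref{eq:opti_mea}) states exactly that $A\triangleq\hat{R}^T\tilde{L}=\tilde{L}^T\hat{R}$ is symmetric, so the zeroth-order terms cancel. Collecting the first-order terms, inserting $\hat{R}\hat{R}^T=I_3$ where needed, and using $\hat{R}^T\xi^{\wedge}\hat{R}=(\hat{R}^T\xi)^{\wedge}$ (which is (\ref{eq:crossTr1}) with $R=\hat{R}^T$), the balance takes the compact form $Aa^{\wedge}+a^{\wedge}A=\delta L^T\hat{R}-\hat{R}^T\delta L$ with $a\triangleq\hat{R}^T\xi$.

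Now the left side is precisely the pattern of (\ref{eq:crossTr2}) for a symmetric matrix, so it equals $[({\rm tr}(A)I_3-A)a]^{\wedge}$; meanwhile $\delta L^T\hat{R}-\hat{R}^T\delta L=\sum_i w^i\left(\delta b^i(\hat{R}^Te^i)^T-(\hat{R}^Te^i)(\delta b^i)^T\right)$, whose $\vee$ is read off from (\ref{eq:crossTr3}) as $\sum_i w^i(\hat{R}^Te^i)^{\wedge}\delta b^i$. Equating the two $\vee$'s gives $({\rm tr}(A)I_3-A)\hat{R}^T\xi=\sum_i w^i(\hat{R}^Te^i)^{\wedge}\delta b^i$. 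Left-multiplying by $\hat{R}$ and using $\hat{R}A\hat{R}^T=\tilde{L}\hat{R}^T$ together with ${\rm tr}(A)={\rm tr}(\tilde{L}\hat{R}^T)$ (so that $\hat{R}({\rm tr}(A)I_3-A)\hat{R}^T={\rm tr}(\tilde{L}\hat{R}^T)I_3-\tilde{L}\hat{R}^T$), as well as $\hat{R}(\hat{R}^Te^i)^{\wedge}=(e^i)^{\wedge}\hat{R}$ (again (\ref{eq:crossTr1})), one arrives at exactly (\ref{eq:lin_mean}).

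The main obstacle I anticipate is purely bookkeeping: keeping left/right multiplications, transposes, and signs consistent through the hat-map algebra, and --- crucially --- recognizing at the outset that the symmetry of $A=\hat{R}^T\tilde{L}$, which is nothing but the optimality condition (\ref{eq:opti_mea}), is what licenses the use of (\ref{eq:crossTr2}). Once that is noted, the remaining computation is a routine first-order expansion.
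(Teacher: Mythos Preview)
Your argument is correct and follows essentially the same route as the paper: linearize (\ref{eq:opti_real}), let (\ref{eq:opti_mea}) cancel the zeroth-order part, and reduce the skew-symmetric identity via (\ref{eq:crossTr2})--(\ref{eq:crossTr3}). The only cosmetic difference is that the paper conjugates by $\hat{R}$ \emph{before} applying (\ref{eq:crossTr2}) (so the left side is already $\xi^{\wedge}\tilde{L}\hat{R}^T+(\tilde{L}\hat{R}^T)^T\xi^{\wedge}$ and (\ref{eq:crossTr2}) applies without explicitly invoking symmetry), whereas you work with $a=\hat{R}^T\xi$ and conjugate at the end; your version is actually more transparent about why (\ref{eq:opti_mea}) is needed.
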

\begin{proof}
    Substituting (\ref{eq:act_att}), (\ref{eq:act_mea}) and (\ref{eq:appro_R}) into  (\ref{eq:opti_real}) and retaining the first-order term, we have
    \begin{equation}
        \hat{R}^T \xi^{\wedge} \tilde{L} + \tilde{L}^T \xi^{\wedge} \hat{R} = \delta B WE^T \hat{R} - \hat{R}^T EW \delta B^T.\label{eq:lin_opt_1}
    \end{equation}
    Multiplying  $\hat{R}$ and $\hat{R}^T$ to the left and right sides of (\ref{eq:lin_opt_1}) yields
    \begin{equation}
         \xi^{\wedge} \tilde{L} \hat{R}^T + \hat{R} \tilde{L}^T \xi^{\wedge}  = 
\hat{R} \delta B WE^T - EW (\hat{R} \delta B)^T.\label{eq:lin_opt_2}
    \end{equation}
    Applying (\ref{eq:crossTr2}) to the left side of (\ref{eq:lin_opt_2}), expanding the right of (\ref{eq:lin_opt_2}) and applying (\ref{eq:crossTr3}), we have
    \begin{align*}
        &\left[ \text{tr}\left( \tilde{L} \hat{R}^T \right)I_3 - \tilde{L} \hat{R}^T \right]\xi\\
        &= \sum_i^n w^i \left[ \left( \hat{R} \delta b^i \right)(e^i)^T - e^i \left( \hat{R} \delta b^i \right)^T \right] = \sum_i^n w_i (e^i)^{\wedge} \hat{R} \delta b^i,
    \end{align*}
    which gives (\ref{eq:lin_mean}).
\end{proof}
Theorem \ref{th:lin_opti} establishes a linear map between $\xi$ and $\delta b_i$, and it is a bijection
 when the optimal attitude given by (\ref{eq:Wahba}) is unique.
\begin{proposition}\label{pro:right_inv}
    Suppose that the optimal attitude given by (\ref{eq:Wahba}) is unique, or equivalently, $s_2 + s_3 \neq 0$. Denoting by $A = {\rm tr}\left( \tilde{L} \hat{R}^T \right)I_3 - \tilde{L} \hat{R}^T$ in (\ref{eq:lin_mean}), then $A$ is invertible.
\end{proposition}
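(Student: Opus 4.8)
The plan is to show that $A = \operatorname{tr}(\tilde{L}\hat{R}^T)I_3 - \tilde{L}\hat{R}^T$ is invertible by identifying its eigenvalues in terms of the singular values of $\tilde{L}$. First I would use the optimality condition (\ref{eq:opti_mea}), namely $\tilde{L}^T\hat{R} = \hat{R}^T\tilde{L}$, which says that the matrix $\tilde{L}\hat{R}^T$ is symmetric (equivalently, $\hat{R}^T\tilde{L}$ is symmetric). Indeed, transposing $\tilde{L}\hat{R}^T$ gives $\hat{R}\tilde{L}^T$, and using $\hat{R}^T\tilde{L} = \tilde{L}^T\hat{R}$ one checks $\hat{R}\tilde{L}^T = \tilde{L}\hat{R}^T$. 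So $B := \tilde{L}\hat{R}^T$ is a real symmetric matrix, hence orthogonally diagonalizable with real eigenvalues $\lambda_1, \lambda_2, \lambda_3$.

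Next I would relate these eigenvalues to the proper SVD $\tilde{L} = USV^T$ with $S = \operatorname{diag}[s_1,s_2,s_3]$, $s_1 \ge s_2 \ge |s_3| \ge 0$, and to $\hat{R}^m = UV^T$ (Lemma \ref{le:SVD_at}). Since $\hat{R} = \hat{R}^m = UV^T$, we get $B = \tilde{L}\hat{R}^T = USV^T(UV^T)^T = USV^TVU^T = USU^T$, so the eigenvalues of $B$ are exactly $s_1, s_2, s_3$. Therefore $A = \operatorname{tr}(B)I_3 - B = U\bigl((s_1+s_2+s_3)I_3 - S\bigr)U^T$, whose eigenvalues are $s_2+s_3$, $s_1+s_3$, and $s_1+s_2$. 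Consequently $\det A = (s_2+s_3)(s_1+s_3)(s_1+s_2)$.

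Finally I would argue that all three factors are strictly positive under the hypothesis $s_2 + s_3 \ne 0$. From the ordering $s_1 \ge s_2 \ge |s_3| \ge 0$ we have $s_1 + s_2 \ge s_2 + s_3 \ge 0$ and $s_1 + s_3 \ge s_2 + s_3 \ge 0$ (the latter since $s_1 \ge s_2$), so the smallest factor is $s_2 + s_3$; if it is nonzero it is positive, and then so are the other two. Hence $\det A > 0$ and $A$ is invertible. The only mild subtlety — and the step I would be most careful about — is the sign bookkeeping around $s_3$, which may be negative in the proper SVD: one must verify $s_2 + s_3 \ge 0$ always holds (it does, since $s_2 \ge |s_3| \ge -s_3$) so that the hypothesis $s_2+s_3 \ne 0$ genuinely upgrades to $s_2 + s_3 > 0$, and that $s_1 + s_3 \ge 0$ likewise follows from $s_1 \ge s_2 \ge -s_3$. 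Everything else is a direct computation using the SVD and the optimality identity already recorded in the excerpt.
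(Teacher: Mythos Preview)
Your proof is correct and follows essentially the same route as the paper: both compute $\tilde{L}\hat{R}^T = USV^T(UV^T)^T = USU^T$ from the proper SVD, read off the eigenvalues of $A$ as the pairwise sums $s_j+s_k$, and conclude $\det A = (s_1+s_2)(s_1+s_3)(s_2+s_3)>0$ from the ordering $s_1\ge s_2\ge |s_3|$ together with $s_2+s_3\ne 0$. Your preliminary symmetry argument via the optimality condition (\ref{eq:opti_mea}) is harmless but unnecessary, since the SVD computation already exhibits $\tilde{L}\hat{R}^T = USU^T$ as manifestly symmetric.
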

\begin{proof}
    Because $s_1 \geq s_2 \geq |s_3|$ and $s_2 +s_3 \neq 0$, it follows $s_1+s_2 \geq s_1+s_3 \geq s_2+s_3 > 0$. Furthermore, we have
\begin{align*}
    \text{det} \left( A \right) &= \text{det}\left( \text{tr}\left( \tilde{L} \hat{R}^T \right)I_3 - \tilde{L} \hat{R}^T \right) \nonumber\\
                        &= \text{det}\left( \text{tr}\left( USV^T VU^T \right)UI_3 U^T- USV^T VU^T  \right) \nonumber\\
                        &= \text{det} \left( U\ \text{diag}[s_2+s_3, s_1+s_3, s_1+s_2]\ U^T \right)\nonumber\\
                        &= (s_2+s_3)(s_1+s_3)(s_1+s_2)>0.\nonumber
\end{align*}
Therefore, the matrix $A$ is invertible.
\end{proof}
Proposition \ref{pro:right_inv} implies that $A_k$ is invertible and thus equation (\ref{eq:lin_mean}) can be rewritten into
\begin{equation*}
    \xi_k^m = \sum_i^n w^i A_k^{-1} (e_k^i)^{\wedge} \hat{R}^m_k \delta b^i_k,
\end{equation*}
and the covariance matrix of $\xi_k^m$ is given by
\begin{equation*}
    P^m_k = \sum_i^n (w^i)^2 \left(A_k^{-1} (e^i_k)^{\wedge} \hat{R}^m_k\right) G^i_k \left(A_k^{-1} (e^i_k)^{\wedge} \hat{R}^m_k\right)^T.
\end{equation*}
Then $N^m_k$ can be computed with $P^m_k$  by (\ref{eq:N2P}).

\subsubsection{Posterior}
The posterior probability of the attitude is constructed by Bayes' rule with the prior and the likelihood as follows.

\begin{theorem}\label{th:right_post}
    Suppose that the a priori attitude distribution of $R$ is given by $\mathcal{P_R M}(R^-,N^-)$ with a mean attitude $R^- \in SO(3)$ and $ N^- = (N^-)^T \in  \mathbb{R}^{3 \times 3}$. Consider an attitude measurement $ R^m$ satisfying  $ {R}^m | R  \sim \mathcal{P_R M}(R, N^m)$ with $R \in SO(3)$ and  $ N^m = (N^m)^T \in \mathbb{R}^{3 \times 3}$ . Then, the posterior attitude distribution with a measurement $R^m = \tilde{R}^m$ is characterized by,
    \begin{equation}
     R|\tilde{R}^m \sim \mathcal{P_R M}(R^+,N^+),\label{eq:right_post}
    \end{equation}
    where $R^+$ and  $N^+$ are the polar and the elliptical components from the proper left polar decomposition of  $F=N^-R^-+N^m\tilde{R}^m$ respectively.
\end{theorem}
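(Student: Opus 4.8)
The plan is to combine Bayes' rule with the transformation law for MFDs under rotation (Lemma \ref{le:rotation}) and then re-express the resulting MFD via its proper left polar decomposition. First I would write out the posterior density using Bayes' rule. The prior $R\sim\mathcal{P_R M}(R^-,N^-)$ means $R=\delta R^-\,R^-$ with $\delta R^-\sim\mathcal{M}(N^-)$; by Lemma \ref{le:rotation} (with $R_l=I_3$, $R_r=R^-$) this is equivalent to $R\sim\mathcal{M}(N^-R^-)$, so $p(R)\propto\mathrm{etr}\!\left[(N^-R^-)^TR\right]=\mathrm{etr}\!\left[(R^-)^TN^-R\right]$, using $N^-=(N^-)^T$. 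Similarly, the likelihood ${R}^m|R\sim\mathcal{P_R M}(R,N^m)$ means $\tilde R^m=\delta R^m R$ with $\delta R^m\sim\mathcal{M}(N^m)$, so as a function of $R$ (with $\tilde R^m$ fixed) we get $p(\tilde R^m|R)\propto\mathrm{etr}\!\left[(N^m\tilde R^m)^TR\right]$ — here I would need the small observation that the density of $\delta R^m=\tilde R^m R^T$ as a function of $R$, using the unit Jacobian of $R\mapsto \tilde R^m R^T$ on $SO(3)$, is proportional to $\mathrm{etr}[N^m\tilde R^m R^T]=\mathrm{etr}[(N^m\tilde R^m)^TR]$ since $N^m$ is symmetric, exactly as in the Proposition on transposes earlier in the excerpt.

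Next I would multiply the two densities: since $\mathrm{etr}[A^TR]\,\mathrm{etr}[B^TR]=\mathrm{etr}[(A+B)^TR]$, Bayes' rule gives
\begin{equation*}
 p(R|\tilde R^m)\propto \mathrm{etr}\!\left[\left(N^-R^-+N^m\tilde R^m\right)^TR\right],
\end{equation*}
i.e. $R|\tilde R^m\sim\mathcal{M}(F)$ with $F=N^-R^-+N^m\tilde R^m$. This is the crux of the argument and is essentially immediate once the prior and likelihood have been put in the canonical $\mathrm{etr}[F^TR]$ form.

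Finally I would translate the MFD statement $R|\tilde R^m\sim\mathcal{M}(F)$ back into the invariant-error probability model. By Proposition \ref{pro:rela_MF}, if $F=K'M'$ is the proper left polar decomposition of $F$, then $R\sim\mathcal{M}(F)$ is the same as $R\sim\mathcal{P_R M}(M',K')$. Setting $R^+=M'$ and $N^+=K'$ — the polar and elliptical components of the proper left polar decomposition of $F=N^-R^-+N^m\tilde R^m$ — yields exactly \eqref{eq:right_post}. The main obstacle, such as it is, is bookkeeping rather than depth: one must be careful that the right-invariant convention $R=\delta R\,\bar R$ pairs with the \emph{left} polar decomposition $F=K'M'$ (not the right one), and that the symmetry of $N^-$ and $N^m$ is what lets the quadratic-looking expressions collapse into the clean sum $N^-R^-+N^m\tilde R^m$; one should also note implicitly that $F$ is generically nonsingular so that its proper left polar decomposition, hence $R^+$ and $N^+$, are well defined.
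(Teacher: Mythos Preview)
Your proposal is correct and follows essentially the same route as the paper's proof: write the prior and likelihood in the canonical $\mathrm{etr}[F^TR]$ form, multiply via Bayes' rule to obtain $R\mid\tilde R^m\sim\mathcal{M}(N^-R^-+N^m\tilde R^m)$, and then invoke the proper left polar decomposition (via Proposition~\ref{pro:rela_MF}) to recover the $\mathcal{P_R M}(R^+,N^+)$ form. The paper's version is terser---it writes the likelihood directly as $\mathrm{etr}[(N^mR)^T\tilde R^m]$ and combines terms in one line---but your more explicit bookkeeping (Lemma~\ref{le:rotation} for the prior, the transpose/Jacobian remark for the likelihood, and the explicit use of the symmetry of $N^m$) is exactly the justification underlying that line.
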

\begin{proof}
    Substituting $p(R)$ and $p(\tilde{R}^m|R)$ into Bayes' rule
    \begin{align*}
         p\left( R|\tilde{R}^m \right) \propto p(R)p(\tilde{R}^m|R),
    \end{align*}
we can obtain
    \begin{align*}
                              p\left( R|\tilde{R}^m  \right) & \propto \text{etr} \left[ \left( N^- R^- \right)^T R+ \left( N^m R \right)^T \tilde{R}^m \right]\nonumber\\
                               &= \text{etr} \left[ \left( N^- R^- + N^m \tilde{R}^m \right)^T R \right], 
    \end{align*}
  which implies that $ R|\tilde{R}^m \sim \mathcal{M}(N^- R^- + N^m R^m)$. Let $F =  N^- R^- + N^m \tilde{R}^m$ and denote the proper left polar decomposition of $F$ by $F=N^+R^+$. According to (\ref{eq:right_MF}), we have $  R|\tilde{R}^m\sim \mathcal{P_R M}(R^+,N^+)$, which gives (\ref{eq:right_post}).
\end{proof}
Because the updating frequency of vector measurements is often lower than that of gyros in practice, the propagation step will be executed in a loop until new measurement vectors are available. The pseudo-code for the proposed estimation scheme is presented in Algorithm \ref{alg:right_est}.

\begin{algorithm}
	\caption{Filtering with the right-invariant error}
	\begin{algorithmic}[1] 
		\Procedure{Estimation $R_k$}{} 
  
		\State Let $k=0$,$R_0 \sim \mathcal{P_LM}(\hat{R}_0,N_0)$
		\Loop
            \State $k=k+1$
    		\State $\hat{R}_{k|k-1} = \hat{R}_{k-1} \exp\{ (h\tilde{\omega}_k)^{\wedge} \}$
            \State Compute $P_{k-1}$ with $N_{k-1}$ in Proposition \ref{pro:GaApp} and $P_{k|k-1}$ with $P_{k-1}$ and $Q_k$ from ( \ref{eq:R_pro_P}) 
            \State Compute the proper SVD of $P_{k|k-1}$ as $P_{k|k-1} = U_{k|k-1} \Lambda_{k|k-1} U_{k|k-1}^T$
            \State $ N_{k|k-1} = U_{k|k-1} \left( \frac{1}{2}{\rm tr}(\Lambda_{k|k-1}^{-1})I_3 - \Lambda_{k|k-1}^{-1} \right) U_{k|k-1}^T $
    		\If{ $\tilde{B}_k$ is available}
                \State $\tilde{L}_k=E_k W_k \tilde{B}_k^T$
                \State Compute the proper left polar decomposition of $\tilde{L}_k$ as $\tilde{L}_k=K^m_k M^m_k$ and $\hat{R}_k^m = U_k^m(V_k^m)^T$
                \State $P_k^m = \sum_i^n (w^i)^2 \left(A^{-1} (e^i)^{\wedge} \hat{R}\right) G_k^i \left(A^{-1} (e^i)^{\wedge} \hat{R}\right)^T$
                \State Compute the proper SVD of $P_k^m$ as $P_k^m = U_k^m \Lambda_k^m (U_k^m)^T$
                \State $ N_k^m = U_k^m \left( \frac{1}{2}{\rm tr}\left((\Lambda_k^m)^{-1}\right)I_3 - (\Lambda_k^m)^{-1} \right) (U_k^m)^T $
    \Statex
                \State $F_k = N_k^m \hat{R}_k^m + N_{k|k-1} \hat{R}_{k|k-1}$
                \State Compute $N_k$ and $\hat{R}_k$ by the proper left polar decomposition of $F_k$ as $F_k =N_k \hat{R}_k$
            \Else
                \State $\hat{R}_k = \hat{R}_{k|k-1},\ N_k=N_{k|k-1}$ 
            \EndIf
        \EndLoop
		\EndProcedure
        
	\end{algorithmic}\label{alg:right_est}
\end{algorithm}

\subsection{Filtering with the left-invariant error on SO(3)}
\label{sec:left_filter}
The preceding attitude filter with the left-invariant error is designed based on the uncertainty model defined by (\ref{eq:left_MF}). Similarly,  we can also derive a filter by approximating the underlying posterior distribution of attitudes with (\ref{eq:left_MF}), i.e.,
\begin{align}
    R_k|\{\tilde{b}_k^i\}\sim \mathcal{P_L M}(\hat{R}_k, N_k),\ \delta R = \hat{R}^T R.\label{eq:err_left}
\end{align}
\subsubsection{Prior}
For given $\hat{R}_{k-1}$ and $\tilde{\omega}_k$, the central attitude $\hat{R}_{k|k-1}$ is obtained by the deterministic part of the kinematics, or equivalently, by (\ref{eq:pro_R}). 

Next, $N_{k|k-1}$ is derived from linearized error equations with  redefined $\delta R$ in (\ref{eq:err_left}) as follows. Substituting (\ref{eq:pro_R}) and (\ref{eq:AttitudeKm}) into  (\ref{eq:err_left}), we have
\begin{align*}
    \delta R_{k|k-1} &= \text{exp}\{ (h \tilde{\omega}_k)^{\wedge} \}^T \hat{R}_{k-1}^T R_{k-1} \text{exp}\{ (h \tilde{\omega}_k + w_k)^{\wedge} \} \nonumber \\
                     &= \text{exp}\{ (h \tilde{\omega}_k)^{\wedge} \}^T \delta R_{k-1} \ \text{exp}\{ (h \tilde{\omega}_k + w_k)^{\wedge} \} \nonumber \\
\end{align*}
Assuming the time step $h$ and error $\delta R$ to be small, employing (\ref{eq:appro_R}) and retaining the first-order term, the linearized error equation is derived as
\begin{equation*}
    \xi_{k|k-1} = \exp\{ (h \tilde{\omega}_k)^{\wedge} \}^T\xi_{k-1} + w_{k-1},
\end{equation*}
and the covariance matrix of $\xi_{k|k-1}$ is given by
\begin{equation}\label{eq:L_pro_P}
    P_{k|k-1} = \exp\{ (h \tilde{\omega}_k)^{\wedge} \}^TP_{k-1}\exp\{ (h \tilde{\omega}_k)^{\wedge} \} + Q_k.
\end{equation}
which, together with (\ref{eq:N2P}), can be used to compute $P_{k|k-1}$.

\subsubsection{Likelihood}
For measurement updates, the attitude measurement is obtained uniquely from solving Wahba's problem using SVD method with $\tilde{B}_k$ and $E_k$, and follows
\begin{equation}\label{eq:Left_mea}
    \hat{R}_k^m | R_k \sim \mathcal{P_L M}(R_k, N_k^m),
\end{equation}
where $N_k^m$ is computed using (\ref{eq:opti_mea}) and the linearized system equation with the left-invariant error $\delta R_k = \hat{R}_k^m R_k^T$ as follows.
\begin{theorem}
      Consider a unique $\hat{R} \in SO(3)$ satisfying (\ref{eq:opti_mea}) with measurement vectors $\tilde{B} = \left[\tilde{b}^1, \dots, \tilde{b}^n \right]$ , reference vectors $E = \left[e^1, \dots, e^n \right]$ and a weight matrix $W = {\rm diag}[w^1, \dots, w^n]$. The first-order approximation of (\ref{eq:opti_real}) with $\delta R = \hat{R} R^T$ and $\delta B = \left[ \delta b^1,\dots, \delta b^n \right]$ satisfying (\ref{eq:act_mea}) is
\begin{equation}\label{eq:lin_mean_l}
    \left( {\text{tr}\left( {\hat{R}^T \tilde{L} } \right)I_3  - \hat{R}^T\tilde{L} } \right)\xi  = \sum\limits_i {w^i (\hat{R}^T e^i )^{\wedge}\delta b^i },
\end{equation}
where $\xi =\log( \delta R)^{\vee} \in \mathbb{R}^3$ and $\tilde{L} = EW\tilde{B}^T$.
\end{theorem}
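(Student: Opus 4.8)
The plan is to follow the blueprint of the proof of Theorem~\ref{th:lin_opti}, adapted to the left-invariant parametrization. First I would write the left-invariant error consistently with the likelihood model (\ref{eq:Left_mea}) as $\delta R = R^T\hat R = \exp(\xi^\wedge)$, so that $R = \hat R\,\delta R^T$; using the first-order expansion $\exp(\xi^\wedge) \approx I_3 + \xi^\wedge$ this gives $R \approx \hat R(I_3 - \xi^\wedge) = \hat R - \hat R\xi^\wedge$ and hence $R^T \approx \hat R^T + \xi^\wedge\hat R^T$. At the same time, from (\ref{eq:act_mea}) the true profile matrix factors as $L = EWB^T = \tilde L + EW\,\delta B^T$, which is affine in $\delta B$.

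Next I would substitute both expansions into the optimality identity (\ref{eq:opti_real}), $L^T R = R^T L$, multiply out, and discard every term of second order in the small quantities $\xi$ and $\delta B$ (in particular the mixed $\xi$--$\delta B$ products). The zeroth-order part is exactly $\tilde L^T\hat R = \hat R^T\tilde L$, i.e.\ the optimality identity (\ref{eq:opti_mea}) for the measured data, so it cancels from both sides. Collecting the $\xi$-terms on the left, the $\delta B$-terms on the right, and abbreviating the symmetric matrix $B := \hat R^T\tilde L = \tilde L^T\hat R$, the surviving first-order relation can be arranged into $\xi^\wedge B + B\xi^\wedge = \delta B\,WE^T\hat R - \hat R^T EW\,\delta B^T$.

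The last step turns this matrix identity into the claimed vector identity. On the left, since $B$ is symmetric, identity (\ref{eq:crossTr2}) with $a = \xi$ and $A = B = \hat R^T\tilde L$ rewrites $\xi^\wedge B + B\xi^\wedge$ as $\left[\left(\text{tr}(\hat R^T\tilde L)I_3 - \hat R^T\tilde L\right)\xi\right]^\wedge$. On the right, writing $(e^i)^T\hat R = (\hat R^T e^i)^T$ displays each summand as $\delta b^i(\hat R^T e^i)^T - (\hat R^T e^i)(\delta b^i)^T$, so identity (\ref{eq:crossTr3}) with $b = \delta b^i$ and $a = \hat R^T e^i$ gives its $\vee$ as $(\hat R^T e^i)^\wedge\delta b^i$. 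Applying $\vee$ to both sides then yields (\ref{eq:lin_mean_l}).

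The algebra is routine and the cancellation of the zeroth-order term is guaranteed by (\ref{eq:opti_mea}); the one thing to watch --- and the only substantive difference from the proof of Theorem~\ref{th:lin_opti} --- is that here the error multiplies $\hat R$ on the \emph{right}, so the perturbation enters as $R \approx \hat R - \hat R\xi^\wedge$ rather than $\hat R - \xi^\wedge\hat R$. This is precisely what replaces $\tilde L\hat R^T$ by $\hat R^T\tilde L$ and $e^i$ by $\hat R^T e^i$ in the conclusion, and it also makes the auxiliary left/right multiplication by $\hat R,\hat R^T$ used in the earlier proof unnecessary, since $B = \hat R^T\tilde L$ is already symmetric and directly fits the pattern of (\ref{eq:crossTr2}). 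I would therefore expect this side-of-multiplication bookkeeping to be the main (and essentially only) obstacle.
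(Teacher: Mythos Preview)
Your proposal is correct and follows essentially the same route as the paper: substitute the first-order expansions of $R$ and $L$ into the optimality identity (\ref{eq:opti_real}), cancel the zeroth-order part via (\ref{eq:opti_mea}), and then apply identities (\ref{eq:crossTr2}) and (\ref{eq:crossTr3}) to convert the resulting matrix relation into the vector equation (\ref{eq:lin_mean_l}). Your additional remark that $\hat R^T\tilde L$ is already symmetric---so the left/right multiplication by $\hat R,\hat R^T$ used in the proof of Theorem~\ref{th:lin_opti} is unnecessary here---is accurate and is precisely why the paper's proof omits that step.

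One small point: you parametrize the error as $\delta R = R^T\hat R$, whereas the theorem statement writes $\delta R = \hat R R^T$. Your choice is the one consistent with the left-invariant likelihood (\ref{eq:Left_mea}) and is in fact what yields the paper's intermediate equation $\xi^\wedge \hat R^T\tilde L + \tilde L^T\hat R\,\xi^\wedge = \delta B\,WE^T\hat R - \hat R^T EW\,\delta B^T$ directly; the discrepancy is a sign/transpose convention issue in the statement rather than a flaw in your argument.
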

\begin{proof}
    Substituting (\ref{eq:act_att}), $\delta R_k = \hat{R}_k^m R_k^T$ and (\ref{eq:appro_R}) into (\ref{eq:opti_real}) and retaining the first-order term, we have 
    \begin{equation}\label{eq:lin_opt_left_1}
         \xi^{\wedge} \hat{R}^T \tilde{L} + \tilde{L}^T \hat{R} \xi^{\wedge} = \delta B WE^T \hat{R} - \hat{R}^T EW \delta B^T.
    \end{equation}
    Applying (\ref{eq:crossTr2}) to the left side of (\ref{eq:lin_opt_left_1}), expanding the right side of the  (\ref{eq:lin_opt_left_1}) and applying (\ref{eq:crossTr3}), we have
    \begin{align*}
        \left[ \text{tr}\left( \hat{R}^T \tilde{L} \right)I_3 - 
\hat{R}^T \tilde{L} \right]\xi
        &= \sum_i^n w^i \left[  \delta b^i (\hat{R}^T e^i)^T - \hat{R}^T e^i (\delta b^i)^T \right] \nonumber\\
        &= \sum_i^n w^i (\hat{R}^Te^i)^{\wedge} \delta b_i,
    \end{align*}
    which gives (\ref{eq:lin_mean_l}).
\end{proof}
Next, we prove the invertibility of the matrix ${\rm tr}\left( \hat{R}^T \tilde{L} \right)I_3 - 
\hat{R}^T \tilde{L}$ such that a linear bijection can be established.
\begin{proposition}
    Suppose that the optimal attitude given by (\ref{eq:Wahba}) is unique, or equivalently, $s_2 + s_3 \neq 0$. Denoting $B= {\rm tr}\left( \hat{R}^T \tilde{L} \right)I_3 - 
\hat{R}^T \tilde{L}$ in (\ref{eq:lin_mean_l}), then $B$ is invertible.
\end{proposition}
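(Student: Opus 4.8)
The strategy mirrors that of Proposition~\ref{pro:right_inv}: I will diagonalize $B$ explicitly using the proper SVD of $\tilde{L}$ and then read off the determinant. Write the proper SVD as $\tilde{L}=USV^T$ with $U,V\in SO(3)$ and $S=\mathrm{diag}[s_1,s_2,s_3]$, $s_1\ge s_2\ge|s_3|\ge 0$; by Lemma~\ref{le:SVD_at} the optimal attitude is $\hat{R}=UV^T$, so $\hat{R}^T\tilde{L}=VU^TUSV^T=VSV^T$. Hence $\mathrm{tr}(\hat{R}^T\tilde{L})=\mathrm{tr}(S)=s_1+s_2+s_3$.

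Substituting into the definition of $B$ gives
\begin{equation*}
    B=(s_1+s_2+s_3)I_3-VSV^T=V\bigl[(s_1+s_2+s_3)I_3-S\bigr]V^T=V\,\mathrm{diag}[s_2+s_3,\ s_1+s_3,\ s_1+s_2]\,V^T,
\end{equation*}
so that $\det(B)=(s_2+s_3)(s_1+s_3)(s_1+s_2)$, since $\det V=1$.

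It remains to show this product is strictly positive. From $s_1\ge s_2\ge|s_3|$ one gets $s_1+s_2\ge s_1+s_3\ge s_2+s_3\ge 0$, and the hypothesis $s_2+s_3\ne 0$ (equivalently, uniqueness of the Wahba solution) upgrades the last inequality to $s_2+s_3>0$. Therefore all three factors are positive, $\det(B)>0$, and $B$ is invertible. I do not anticipate a genuine obstacle here; the only point requiring a little care is confirming that the same singular-value ordering used in Proposition~\ref{pro:right_inv} applies verbatim after replacing $\tilde{L}\hat{R}^T$ by $\hat{R}^T\tilde{L}$, which it does because conjugating $S$ by $U$ versus by $V$ leaves the spectrum unchanged.
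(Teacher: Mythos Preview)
Your proof is correct and follows essentially the same route as the paper: both compute $\hat{R}^T\tilde{L}=VSV^T$ from the proper SVD, diagonalize $B$ as $V\,\mathrm{diag}[s_2+s_3,\,s_1+s_3,\,s_1+s_2]\,V^T$, and then invoke the ordering $s_1\ge s_2\ge|s_3|$ together with $s_2+s_3\ne 0$ to conclude the determinant is positive. If anything, your write-up is slightly more explicit about why $\hat{R}^T\tilde{L}$ has the form $VSV^T$.
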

\begin{proof}
    Because $s_1 \geq s_2 \geq |s_3|$ and $s_2 +s_3 \neq 0$, $s_1+s_2 \geq s_1+s_3 \geq s_2+s_3 > 0$. Furthermore, we have
\begin{align*}
    \text{det}\left(B \right) &= \text{det}\left( {\rm tr}\left( \hat{R}^T \tilde{L} \right)I_3 - 
\hat{R}^T \tilde{L} \right) \nonumber\\
                        &= \text{det} \left( V\ {\rm tr}[s_2+s_3, s_1+s_3, s_1+s_2]\ V^T \right)\nonumber\\
                        &= (s_2+s_3)(s_1+s_3)(s_1+s_2)>0.\nonumber
\end{align*}
Therefore, the matrix $B$ is invertible.
\end{proof}
Using (\ref{eq:lin_mean_l}), the covariance matrix of $\xi^m_k$ is given by
\begin{equation*}
    P^m_k\! =\! \sum_i^n (w^i)^2 \!\left(\!A_k^{-1} \left((\hat{R}_k^m)^Te_k^i\right)^{\wedge}\!\right) \!G^i_k \!\left(\!A_k^{-1} \left((\hat{R}_k^m)^Te_k^i\!\right)^{\wedge}\!\right)^T.
\end{equation*}
\subsubsection{Posterior}
Similarly to the proposed filter with the right-invariant error, the following theorem is presented to compute the parameters of the posterior distribution.
\begin{theorem}
    Suppose that the a priori attitude distribution of $R$ is given by $\mathcal{P_L M}(R^-,N^-)$ with a central attitude $R^- \in SO(3)$ and $   N^- \in \mathbb{R}^{3 \times 3}$. Consider an attitude measurement $ R^m$ satisfying  $ {R}^m | R \sim \mathcal{P_R M}(R, N^m)$ with $N^m = (N^m)^T \in \mathbb{R}^{3 \times 3}$. Then, the posterior attitude distribution with a measurement $R^m = \tilde{R}^m$ is characterized by
    \begin{equation}
     R|\tilde{R}^m \sim \mathcal{P_L M}(R^+,N^+),\label{eq:left_post}
    \end{equation}
    where $R^+$ and  $N^+$ are the polar and the elliptical components from the proper right polar decomposition of  $F = R^-N^-+\tilde{R}^mN^m$ respectively.
\end{theorem}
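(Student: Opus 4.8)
The plan is to mirror the proof of Theorem~\ref{th:right_post}: applying Bayes' rule to a product of two matrix Fisher densities again yields a matrix Fisher density, so the whole argument reduces to converting the two invariant-error models to ordinary MFDs and then keeping track of the order of the matrix products.

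First I would put the prior into standard form. Writing $R=R^-\delta R$ with $\delta R\sim\mathcal{M}(N^-)$ and applying Lemma~\ref{le:rotation} (equivalently Proposition~\ref{pro:rela_MF}), one gets $R\sim\mathcal{M}(R^-N^-)$, so $p(R)\propto\operatorname{etr}\!\big[(R^-N^-)^{T}R\big]$. I would handle the likelihood the same way, using the left-invariant model $\hat{R}^m_k\mid R_k\sim\mathcal{P_L M}(R_k,N_k^m)$ that is actually constructed in Section~\ref{sec:left_filter}, namely (\ref{eq:Left_mea}) --- for the stated expression of $F$ to come out correctly, the likelihood in the theorem should read $\mathcal{P_L M}(R,N^m)$ rather than $\mathcal{P_R M}(R,N^m)$. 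By Lemma~\ref{le:rotation}, $R^m\mid R\sim\mathcal{M}(RN^m)$, hence $p(\tilde{R}^m\mid R)\propto\operatorname{etr}\!\big[(RN^m)^{T}\tilde{R}^m\big]$.

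Next I would rewrite the likelihood exponent as a linear functional of $R$. Using the cyclic invariance of the trace, $\operatorname{tr}(A)=\operatorname{tr}(A^{T})$, and the symmetry $N^m=(N^m)^{T}$, one obtains $\operatorname{tr}\!\big[(RN^m)^{T}\tilde{R}^m\big]=\operatorname{tr}\!\big[N^mR^{T}\tilde{R}^m\big]=\operatorname{tr}\!\big[(\tilde{R}^mN^m)^{T}R\big]$. Multiplying the prior and likelihood densities then gives $p(R\mid\tilde{R}^m)\propto\operatorname{etr}\!\big[(R^-N^-+\tilde{R}^mN^m)^{T}R\big]$, that is, $R\mid\tilde{R}^m\sim\mathcal{M}(F)$ with $F=R^-N^-+\tilde{R}^mN^m$. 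Finally I would take the proper right polar decomposition $F=R^+N^+$ as in (\ref{eq:polar_r}), with $R^+\in SO(3)$ the polar component and $N^+$ the symmetric elliptical component, and apply Proposition~\ref{pro:rela_MF} once more (now reading off a $\mathcal{P_L M}$ from a proper right polar decomposition) to conclude $R\mid\tilde{R}^m\sim\mathcal{P_L M}(R^+,N^+)$, which is the claim.

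The argument is short and presents no real obstacle; the two points that need care are (i) the order of the matrix products in $F=R^-N^-+\tilde{R}^mN^m$: the measurement term has $\tilde{R}^m$ on the \emph{left} of $N^m$, which is exactly why the symmetry of $N^m$ is invoked in the trace manipulation and why the likelihood must be the left-invariant model; and (ii) the tacit nondegeneracy assumption that $F$ admits a proper right polar decomposition with $N^+$ positive definite, the same standing assumption already used in Theorem~\ref{th:right_post} and Proposition~\ref{pro:rela_MF}, which holds whenever the prior and likelihood parameters are nondegenerate.
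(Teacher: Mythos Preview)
Your proposal is correct and follows essentially the same route as the paper's proof: convert the prior and likelihood to ordinary MFDs, multiply the densities via Bayes' rule, collapse the exponent into $\operatorname{etr}\!\big[(R^-N^-+\tilde{R}^mN^m)^{T}R\big]$, and read off the $\mathcal{P_L M}$ form from the proper right polar decomposition. You are also right that the likelihood must be $\mathcal{P_L M}(R,N^m)$ rather than $\mathcal{P_R M}(R,N^m)$ for the stated $F$ to emerge; the paper's own computation uses the term $(RN^m)^T\tilde{R}^m$, which corresponds to the left-invariant model (\ref{eq:Left_mea}), so the $\mathcal{P_R M}$ in the theorem statement is a typo, as is the $\mathcal{P_R M}$ that appears in the final line of the paper's proof.
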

\begin{proof}
    Substituting $p(R)$ and $p(\tilde{R}^m|R)$ into Bayes' rule and similarly to the proof of Theorem \ref{th:right_post}, we can obtain
    \begin{align*}
        p\left( R|\tilde{R}^m  \right)  & \propto \text{etr} \left[ \left( 
R^- N^- \right)^T R+ \left( R N^m \right)^T\tilde{R}^m\right]\nonumber\\
                               &= \text{etr} \left[ \left( R^- N^-  +\tilde{R}^m N^m  \right)^T R \right], 
    \end{align*}
  which implies that $R|\tilde{R}^m \sim \mathcal{M}(R^- N^-  + \tilde{R}^m N^m)$. Let $F =  R^- N^-  + \tilde{R}^m N^m$ and denote the right polar decomposition of $F$ by $F=R^+ N^+$. According to (\ref{eq:left_MF}), we have $R|R^m  \sim \mathcal{P_R M}(R^+, N^+)$, which gives (\ref{eq:left_post}).
\end{proof}

 The pseudo code for the filter with the left-invariant error is summarized in Algorithm \ref{alg：left_est}.

\begin{algorithm}
	\caption{Filter with the left-invariant error}
	\begin{algorithmic}[1] 
		\Procedure{Estimation $R_k$}{} 
  
		\State Let $k=0$,$R_0 \sim \mathcal{P_LM}(\hat{R}_0,N_0)$
		\Loop
            \State $k=k+1$
    		\State $\hat{R}_{k|k-1} = \hat{R}_{k-1} \exp\{ (h\tilde{\omega}_k)^{\wedge} \}$
            \State Compute $P_{k-1}$ with $N_{k-1}$ in Proposition \ref{pro:GaApp} and $P_{k|k-1}$ with $P_{k-1}$ and $Q_k$ from ( \ref{eq:L_pro_P}) 
            \State Compute the proper SVD of $P_{k|k-1}$ as $P_{k|k-1} = U_{k|k-1} \Lambda_{k|k-1} U_{k|k-1}^T$
            \State $ N_{k|k-1} = V_{k|k-1} \left( \frac{1}{2}tr(\Lambda_{k|k-1}^{-1})I_3 - \Lambda_{k|k-1}^{-1} \right) V_{k|k-1}^T $
    		\If{$\tilde{B}_k$ is available}
                \State $\tilde{L}_k=E_k W_k \tilde{B}_k^T$
                \State Compute the proper left polar decomposition of $\tilde{L}_k$ as $\tilde{L}_k=K^m_k M^m_k$ and $\hat{R}_k^m = U_k^m(V_k^m)^T$
                \State $P_k^m = \sum_i^n w_i^2 \left(A^{-1} (e^i)^{\wedge} \hat{R}\right) G_k^i \left(A^{-1} (e^i)^{\wedge} \hat{R}\right)^T$
                \State Compute the proper SVD of $P_k^m$ as $P_k^m = V_k^m \Lambda_k^m (V_k^m)^T$
                \State $ N_k^m = V_k^m \left( \frac{1}{2}{\rm tr}\left((\Lambda_k^m)^{-1}\right)I_3 - (\Lambda_k^m)^{-1} \right) (V_k^m)^T $
    \Statex
                \State $F_k = \hat{R}_k^m N_k^m  + \hat{R}_{k|k-1} N_{k|k-1} $
                \State Compute $N_k$ and $\hat{R}_k$ by the proper right polar decomposition of $F_k$ as $F_k = \hat{R}_k N_k$
            \Else
                \State $\hat{R}_k = \hat{R}_{k|k-1},\ N_k=N_{k|k-1}$ 
            \EndIf
        \EndLoop
		\EndProcedure
        
	\end{algorithmic}\label{alg：left_est}
\end{algorithm}

\subsection{Discussions}
The two proposed estimation methods are partially similar to the IEKF in \cite{barrauInvariantKalmanFiltering2018} and the Bayesian estimator in \cite{leeBayesianAttitudeEstimation2018} but differ greatly in some critical steps. More precisely, our methods utilize linearized error equations and two newly defined probability models to approximate the underlying prior and likelihood distributions on $SO(3)$. As a result, they are closed-form and avoid solving nonlinear equations to obtain parameters from the first moment, significantly reducing computational complexity compared to the MFD-based filter in \cite{leeBayesianAttitudeEstimation2018} that requires moment matching or unscented transformation. 

The computational complexity of the proposed filter in Algorithm \ref{alg:right_est} is contributed by the proper SVD, matrix exponential function, addition, and multiplication. Since the proper SVD for $3\times3$ symmetric matrix and the matrix exponential function for $3\times3$ anti-symmetric matrix admit closed-form solutions, their computational costs are constant, i.e. $O(1)$. Therefore, the overall per-step complexity is mainly dominated by the computation of $\tilde{L}_k$, which scales linearly as $O(n)$, where $n$ is the number of reference vectors. For the first-order estimator, however, additional complexity is caused by solving
\begin{equation}\label{eq:moment_pro}
    \frac{1}{c(S)}\frac{\partial c(S)}{\partial s_i} -d_i =0,\ for\ i=\{1,2,3\},
\end{equation}
with the Newton-Armijo iteration, as present in \cite{Lee2017matrix}, in moment matching. Each iteration needs to compute $c(S),\ \partial c(S)/\partial s_i,\ \partial^2 c(S)/\partial^2 s_is_j$ by the integration of modified Bessel functions, whose complexity scales linearly with the number of quadrature points, denoted by $n_q$. If the average number of trials to satisfy the Armijo line search is denoted by $n_L$, and assuming that the solver converges within $n_I$ iterations, the per-step computational complexity of the first-order estimator is $O(n_I(1+n_L)n_q)+O(n_z)$, where the term $O(1)$ represents the cost of computing the descent direction. Thus, the extra computational cost of the first-order estimator is $O(n_I(1+n_L)n_q)$. It is notable that there is \textit{no} theory guarantees that the Newton-Armijo iteration terminates in a finite number of steps and $n_I$, as a result, can be very large. Even if $n_I$ is prespecified as a $n_I^{max}$, such as $n^{max}_I =100$ in \cite{Lee2017matrix}, $O(n_I(1+n_L)n_q)$ is still larger than $O(n_z)$ because $n_q$ is usually set to $50\sim 100$ to promise accuracy, but $n_z$ can be a single digit.


It should be acknowledged that linearization, while improving computational speed, results in a theoretical loss of accuracy. For the estimator in \cite{leeBayesianAttitudeEstimation2018}, $\mathcal{O}(h^{1.5})$ is omitted in moment propagation. In error equations (\ref{eq:right_err_ki}) and (\ref{eq:lin_mean}), instead, high order terms $\mathcal{O}(\|\xi\|^2)$ and $\mathcal{O}(h^2)$ are omitted. Since $\xi$ represents estimation error, $\mathcal{O}(\|\xi\|^2)$ can be non-negligible when the uncertainty is large, even if $h$ is small. However, this does not imply that the estimator in \cite{leeBayesianAttitudeEstimation2018}
is always more accurate than the proposed filter. The numerical solution of (\ref{eq:moment_pro}) cannot guarantee convergence within the specified $n_I^{max}$, leading to accuracy loss that cannot be precisely computed. Moreover, leveraging the closed-form iteration and the two key properties discussed in Section \ref{sec:c_and_d}, the filter guarantees the theoretical stability of the proposed filters, as discussed in the next section, while the estimator in  \cite{leeBayesianAttitudeEstimation2018} cannot. The two estimators with MFDs are further compared with simulations in Section \ref{sec: Simulation}.

While linearized equations are used to propagate uncertainty along the kinematics, the proposed filters fuse measurements and extract the posterior estimates in a global fashion. Extracting estimates of the posterior MFDs is transformed into solving a classic Wahba problem \cite{Wahba1965ALS} by the SVD method, where the weight matrices of the cost are parameters $N$ of the prior and likelihood MFDs. In contrast, the Bayesian filters with CGDs do not naturally admit a similar global optimization formulation since local coordinates are used to represent uncertainty. Moreover, due to the introduction of MFDs, the weight matrices establish a quantitative relationship with the process and measurement noises, thereby endowing the solution to the Wahba problem with probabilistic optimality. Instead, Kalman filters, such as the IEKF, perform only a local Gauss--Newton step in the correction with measurements. The effect of the weight matrices given by MFDs on performance is demonstrated by ablation experiments in the next section.

\section{The proposed Filter as a Stable Observer} \label{sec:stability}

\subsection{Stability analysis on $SO(3)$}
The aim of this subsection is to prove the stability of Algorithm \ref{alg:right_est} around any trajectory on $SO(3)$ as a deterministic observer.  Since vector measurements can be reconstructed into attitude measurements by Lemma \ref{le:SVD_at} and Theorem \ref{th:lin_opti}, the filter is supposed to estimate attitudes using noise-free angular velocity measurements as well as direct attitude measurements. The measurement noise-related parameters $N^m_k$ and $Q_k$ are now viewed as gains that can be adjusted \textit{a priori}. The following theorem proves the almost global asymptotic stability of the proposed filter.
\begin{theorem}\label{th:global_stb}
    Consider the discrete attitude kinematics $R_k=R_{k-1}\exp(h\omega_k^{\wedge})$ with the noise-free angular velocity measurement $\tilde{w}_k$ and direct attitude measurement $\hat{R}_k^m$. Let $ \alpha_1I_3 \preceq Q_k \preceq \alpha_2I_3$ and $\beta_1I_3\preceq N^m_k=n^m_kI_3\preceq\beta_2I_3$ for arbitrary $k$, where $\alpha_2\geq\alpha_1>0$ and $\beta_2\geq\beta_1>0$. Then the filter presented in Algorithm \ref{alg:right_est} is almost globally asymptotically stable, i.e., it is asymptotically stable when the initial attitude satisfies $\| R_0 -\hat{R}_0 \|_F^2\leq4-\epsilon$ for any $\epsilon \in (0,4]$.
\end{theorem}
\begin{proof}
    Consider a candidate Lyapunov function as
    \begin{align*}
        V_k = {\rm tr} \left( I_3 - R_k\hat R_k^T \right) ,
    \end{align*}
    and define $ V_{k|k-1} = {\rm tr} \left( I_3 - R_k\hat R_{k|k-1}^T \right) $. Since angular velocity measurements are noise free, there is ${R_{k - 1}}\hat R_{k - 1}^T = {R_k}\hat R_{k|k - 1}^T$, and thus $V_{k-1} =V_{k|k-1}$.

    Since $\hat{R}_k$ is the mean attitude of $\mathcal{M}(F_k)$, we have ${{\hat R}_k} = \mathop {{\rm{max}}}\limits_{R \in SO(3)} {\rm{tr}}(F_k^TR)$, as presented in Theorem 2.3 in \cite{leeBayesianAttitudeEstimation2018}. Substituting $F_k = N_k^m\hat{R}_k^m+ N_{k|k-1}\hat{R}_{k|k-1}$, the following inequality is obtained as
    \begin{align*}
       \begin{array}{l}
{\rm{tr}}\left[ {{{\left( {{{\hat R}_{k|k - 1}}} \right)}^T}{N_{k|k - 1}}{{\hat R}_k}} \right] + {\rm{tr}}\left( {{{\hat R}_k}^T{N^m_k}\hat R_k^m} \right)\\
 \ge {\rm{tr}}\left[ {{{\left( {{{\hat R}_{k|k - 1}}} \right)}^T}{N_{k|k - 1}}{{\hat R}_{k|k - 1}}} \right] + {\rm{tr}}\left( {{{\hat R}_{k|k - 1}}^T{N^m_k}\hat R_k^m} \right)\\
 = {\rm{tr}}\left[ {{N_{k|k - 1}}} \right] + {\rm{tr}}\left( {{N^m_k}\hat R_k^m} {{\hat R}_{k|k - 1}}^T\right),
\end{array}
    \end{align*}
    which can be rewritten as
    \begin{align*}
      \begin{array}{l}
{\rm{tr}}\left( {{N^m_k}\hat R_k^m{{\hat R}_k}^T} \right) - {\rm{tr}}\left( {{N^m_k}\hat R_k^m{{\hat R}_{k|k - 1}}^T} \right)\\
 \ge {\rm{tr}}\left[ {{N_{k|k - 1}}} \right] - {\rm{tr}}\left[ {{N_{k|k - 1}}{{\hat R}_k}}{{\left( {{{\hat R}_{k|k - 1}}} \right)}^T} \right],
\end{array}
    \end{align*}
    where the equality holds when $\hat{R}_k=\hat{R}_{k|k-1}$. Using Lemma 2.2.6 in\cite{berkane2017hybridlong}, there is
    \begin{align*}
        &{\rm{tr}}\left[ {{N_{k|k - 1}}} \right] - {\rm{tr}}\left[ {{N_{k|k - 1}}{{\hat R}_k}}{{\left( {{{\hat R}_{k|k - 1}}} \right)}^T} \right]\geq \\
         &\lambda_{\min}^{\bar{N}_{k|k-1}} {\rm tr}\left[I_3-{{\hat R}_k}{{\left( {{{\hat R}_{k|k - 1}}} \right)}^T} \right],
    \end{align*}
where $\bar{N}_{k|k-1}:= {\rm tr}(N_{k|k-1})I_3-{N}_{k|k-1} = P_{k|k-1}^{-1}$, and $\lambda_{\min}^{\bar{N}_{k|k-1}}$ is its minimum eigenvalue. Denote the proper SVD of $N_{k-1}$ as $N_{k-1}=U_{k-1}S_{k-1}U_{k-1}^T$ and $S_{k-1}={\rm diag}[s_{k-1,1},s_{k-1,2},s_{k-1,3}]$, which satisfies $s_{k-1,1}\geq s_{k-1,2}\geq| s_{k-1,3}|\geq 0$. Thus, $P_{k-1}$ given by Proposition \ref{pro:GaApp} is positive definite. Since $Q_k\succeq \alpha_1I_3 $ is a positive definite matrix, $P_{k|k-1}^{-1}=(P_{k-1}+\hat{R}_{k-1}Q_k\hat{R}_{k-1}^T)^{-1}$ is also positive definite. As a result, we have
${\rm{tr}}\left[ {{N_{k|k - 1}}} \right] - {\rm{tr}}\left[ {{N_{k|k - 1}}{{\hat R}_k}}{{\left( {{{\hat R}_{k|k - 1}}} \right)}^T} \right]\geq 0$, where the equality also holds when $\hat{R}_k=\hat{R}_{k|k-1}$. Thus, there is
\begin{align*}
    &V_k- V_{k-1} =  V_k- V_{k|k-1}\\
    &=\frac{1}{n^m_k}
\left[ {\rm{tr}}\left( {{N^m_k}\hat R_k^m{{\hat R}_{k|k - 1}}^T} \right) - {\rm{tr}}\left( {{N^m_k}\hat R_k^m{{\hat R}_k}^T} \right) \right]\\
 &\le \frac{1}{n^m_k}\left\{ {\rm{tr}}\left[ {{N_{k|k - 1}}{{\hat R}_k}}{{\left( {{{\hat R}_{k|k - 1}}} \right)}^T} \right] - {\rm{tr}}\left[ {{N_{k|k - 1}}} \right] \right\} \le 0,
\end{align*}
which indicates that $V_k$ is monotonically decreasing as $k$ increases, and $\hat{R}_k=\hat{R}_{k|k-1}$ gives equilibrium points. Substituting $\hat{R}_k=\hat{R}_{k|k-1}$ into $F_k=N^m_k\hat{R}_k^m+N_{k|k-1}\hat{R}_{k|k-1}$, we have
\begin{align*}
  {N_k}{\hat R_{k}} = {N_{k|k - 1}}{\hat R_{k}} + {N^m_k}\hat R_k^m,
\end{align*}
which can be rewritten as ${N_k} = {N_{k|k - 1}} + {N^m_k}\hat R_k^m {\hat R_{k}}^T$. Since both $N_k$ and $N_{k|k-1}$ are symmetric, ${N^m_k}\hat R_k^m {\hat R_{k}}^T$ is also symmetric. Moreover, because $N_k^m=n^m_kI_3$ and measurements are noise-free, $\hat R_k^m {\hat R_{k}}^T$ is a diagonal matrix belonging to $SO(3)$ and is equal to $R_k {\hat R_{k}}^T$, indicating that $\| R_k -\hat{R}_k\|_F^2\in \{0,4\}$. Since $\| R_k -\hat{R}_k\|_F^2 =2{\rm tr}(I_3-R_k\hat R_k^T)=2V_k$ and $V_k$ is monotonically decreasing, $\| R_k -\hat{R}_k\|_F^2\leq \| R_0 -\hat{R}_0\|_F^2<4$. Thus, $\hat{R}_k=R_k$ is the unique equilibrium point, and $V_k$ is strictly decreasing when the initial error satisfies $\| R_0 -\hat{R}_0 \|_F^2\leq4-\epsilon$. As a result, the filter is asymptotically stable on $SO(3)$ except for the set determined by $\| R_0 -\hat{R}_0 \|_F^2=4$, which is a measure-zero set. Thus, the filter is almost globally asymptotically stable.
\end{proof}
The theorem demonstrates that the proposed filter has a larger region of attraction than the IEKF, which, as presented in \cite{barrau2016invariant}, possesses local asymptotic stability as an observer. The local linear approximations are introduced in the IEKF to compute Kalman gains in the correction step using measurements, which results in a local region of attraction. In contrast, the fusion in the proposed filter is nonlinear and does not involve local coordinate representations, which enables the global asymptotic of the filter. Moreover, none of the existing attitude estimators with directional statistics, such as those in \cite{leeBayesianAttitudeEstimation2018}, provide theoretical stability because they rely on numerical solutions, which lack proven finite-step convergence. Subsequently, we focus on an example of single-axis rotations to further reveal how the nonlinear mechanism presented in Section \ref{sec:filter_mec} influences the convergence rate of the filter with MFDs.

\subsection{Convergence rate analysis} \label{sec:single_axis}
We consider a rigid body rotating about $b_3$ with an initial attitude $R_0=\exp({\theta}_0 b_3^{\wedge})$, i.e., $R_k \in \mathcal{S}_{I_3}(b_3)$. Other single-axis rotational motions can be transformed to be about $b_3$ through coordinate transformation.  Then, the filter can be rewritten in one-dimensional form on $\mathcal{S}_{I_3}(b_3)$ using the following theorem.

\begin{theorem}\label{th:uniaxial}
      Consider the discrete attitude kinematics $R_k=R_{k-1}\exp(hw_k^{\wedge}b_3)$ for $w_k\in \mathbb{R}$ with the noise-free angular velocity measurement $\tilde{w}_k$ and direct attitude measurement $\hat{R}_k^m$ with measurement parameters $Q_k ={\rm diag}[\sigma_k^*,\sigma_k^*,\sigma_k]$ and $N_k^m ={\rm diag}[\kappa_k^m,\kappa_k^m,\kappa_k^{m*}]$ for $\sigma_k,\sigma_k^*,\kappa_k^m,\kappa_k^{m*}>0$. Given initial parameters $\hat{R}_0 =\exp(\hat{\theta}_0 b_3^{\wedge})$, $N_0 = {\rm diag}[\kappa_0,\kappa_0,\kappa_0^*]$ for $\kappa_0,\ \kappa_0^*>0$, the filter in Algorithm \ref{alg:right_est} can be rewritten as
    \begin{align}
        &\hat{R}_{k|k-1} = \exp\left( {\hat{\theta}_{k|k-1} b_3^{\wedge}} \right),\\
        &\ N_{k|k-1}={\rm diag}[\kappa_{k|k-1},\kappa_{k|k-1},\kappa^{*}_{k|k-1}],\\
        &\hat{R}_{k} =\exp\left( {\hat{\theta}_{k} b_3^{\wedge}} \right),\ N_{k}={\rm diag}[\kappa_{k},\kappa_{k},\kappa^{*}_{k}],
    \end{align}
    where
    \begin{align}
       & \hat{\theta}_{k|k-1} = \hat{\theta}_{k-1} +\tilde{w}_k h,\ \kappa_{k|k-1}={\frac{{\kappa _{k - 1} }}{{1 + 2\kappa _{k - 1} \sigma _k}}},\\
       &\kappa _{k}  = \sqrt {{{(\kappa _{k|k-1} )}^2} + {{(\kappa _k^m)}^2} + 2\kappa  _{k|k-1} \kappa _k^m\cos ({\theta}_k^m-\hat{\theta}_{k|k-1})} ,\\
     & \hat{\theta}_{k}  = \hat{\theta} _{k|k-1}+\nonumber  \\
     & {\rm{sign}}({\theta}_k^m-\hat{\theta}_{k|k-1})\arccos \left[({{{\kappa  _{k|k-1}  + \kappa _k^m\cos ({\theta}_k^m-\hat{\theta}_{k|k-1} )}}/{{\kappa  _{k} }}} \right],
    \end{align}
    and
    \begin{align}
        &\kappa^{*}_{k|k-1} ={\frac{{\kappa _{k - 1|k-1}  + \kappa _{k - 1|k-1}^*}}{{1 + {\sigma _k^*}\left( {\kappa_{k - 1|k-1}  + \kappa _{k - 1|k-1}^*} \right)}} - \frac{{\kappa_{k - 1|k-1}}}{{1 + 2\kappa_{k - 1|k-1} \sigma _k}}},\\
        &\kappa^{*}_{k|k}=\kappa^{*}_{k|k-1}+\kappa^{m*}_k.
    \end{align}
\end{theorem}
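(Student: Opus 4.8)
The plan is to prove Theorem~\ref{th:uniaxial} by induction on $k$, showing that the ``single‑axis'' structure of the initial data is preserved by every line of Algorithm~\ref{alg:right_est} so that all $3\times3$ matrix operations collapse to the stated scalar recursions. The induction hypothesis I would carry is: $\hat R_{k-1}=\exp(\hat\theta_{k-1}b_3^{\wedge})$, $N_{k-1}={\rm diag}[\kappa_{k-1},\kappa_{k-1},\kappa^{*}_{k-1}]$, and the reconstructed attitude measurement is a rotation $\hat R_k^m=\exp(\theta^m_k b_3^{\wedge})$ about $b_3$ with $N_k^m={\rm diag}[\kappa^m_k,\kappa^m_k,\kappa^{m*}_k]$ (these last facts are among the theorem's hypotheses; Lemma~\ref{le:SVD_at} and Theorem~\ref{th:lin_opti} are what reduce the vector measurements to this form). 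The base case is exactly the given $\hat R_0$, $N_0$. Two elementary facts make the induction run: every rotation about $b_3=e_3$ commutes with every matrix of the form ${\rm diag}[a,a,b]$, and the proper SVD and proper polar decomposition of a matrix that is block‑diagonal with a scaled planar rotation in the $(1,2)$ block and a scalar in the $(3,3)$ slot can be read off by inspection, because the first two singular values coincide and the ordering constraint $s_1\ge s_2\ge|s_3|$ is then harmless.

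For the prediction step, (\ref{eq:pro_R}) immediately gives $\hat R_{k|k-1}=\exp((\hat\theta_{k-1}+h\tilde w_k)b_3^{\wedge})$, hence $\hat\theta_{k|k-1}=\hat\theta_{k-1}+\tilde w_k h$. For the covariance I would apply Proposition~\ref{pro:GaApp} to $N_{k-1}$ to obtain $P_{k-1}={\rm diag}[(\kappa_{k-1}+\kappa^*_{k-1})^{-1},(\kappa_{k-1}+\kappa^*_{k-1})^{-1},(2\kappa_{k-1})^{-1}]$, use the commutation fact $\hat R_{k-1}Q_k\hat R_{k-1}^T=Q_k$ in (\ref{eq:R_pro_P}) so that $P_{k|k-1}=P_{k-1}+Q_k$ is still of the form ${\rm diag}[a,a,b]$, and then substitute into (\ref{eq:N2P}) and simplify. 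That last simplification is a short rational computation whose $(1,1)$ entry yields $\kappa_{k|k-1}=\kappa_{k-1}/(1+2\kappa_{k-1}\sigma_k)$ and whose $(3,3)$ entry yields the stated expression for $\kappa^{*}_{k|k-1}$.

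For the update step, I would form $F_k=N_k^m\hat R_k^m+N_{k|k-1}\hat R_{k|k-1}$ block‑wise: the $(3,3)$ entry is $\kappa^{m*}_k+\kappa^{*}_{k|k-1}$, and the $(1,2)$ block is a sum of two scaled planar rotations, which by the cosine addition formula is again a scaled planar rotation, with radius $\kappa_k=\sqrt{\kappa_{k|k-1}^2+(\kappa^m_k)^2+2\kappa_{k|k-1}\kappa^m_k\cos(\theta^m_k-\hat\theta_{k|k-1})}$ and some angle $\hat\theta_k$. Since $F_k$ is invertible its proper left polar decomposition $F_k=N_k\hat R_k$ is unique, and by the inspection fact above $N_k={\rm diag}[\kappa_k,\kappa_k,\kappa^{m*}_k+\kappa^{*}_{k|k-1}]$ and $\hat R_k=\exp(\hat\theta_k b_3^{\wedge})$; this already gives $\kappa^{*}_{k|k}=\kappa^{*}_{k|k-1}+\kappa^{m*}_k$ and closes the induction. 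To pin down $\hat\theta_k$ I would left‑multiply the $(1,2)$ block by the planar rotation of angle $-\hat\theta_{k|k-1}$; the cosine addition formula then gives $\kappa_k\cos(\hat\theta_k-\hat\theta_{k|k-1})=\kappa_{k|k-1}+\kappa^m_k\cos(\theta^m_k-\hat\theta_{k|k-1})$ together with $\kappa_k\sin(\hat\theta_k-\hat\theta_{k|k-1})=\kappa^m_k\sin(\theta^m_k-\hat\theta_{k|k-1})$, so that $\hat\theta_k=\hat\theta_{k|k-1}+{\rm sign}(\theta^m_k-\hat\theta_{k|k-1})\arccos\!\big[(\kappa_{k|k-1}+\kappa^m_k\cos(\theta^m_k-\hat\theta_{k|k-1}))/\kappa_k\big]$, using ${\rm sign}(\sin\alpha)={\rm sign}(\alpha)$ for $\alpha\in(-\pi,\pi)$.

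Most of this is routine bookkeeping; the two places that need care are (a) verifying that the proper SVD and proper polar decomposition invoked inside Algorithm~\ref{alg:right_est} really do respect the block structure at every step, i.e.\ that the singular‑value ordering and the sign of the third diagonal entry never break the ${\rm diag}[a,a,b]$ form, which holds precisely because the first two entries stay equal throughout (one should also keep track of the generic positivity of $\kappa_k$ and the starred parameters so that all the decompositions stay in the well‑behaved regime), and (b) composing the two conversions $N\mapsto P$ of Proposition~\ref{pro:GaApp} and $P\mapsto N$ of (\ref{eq:N2P}) in the prediction step without sign or factor‑of‑two errors. I expect (b) to be the main computational obstacle and the uniqueness/properness of the polar decomposition in the update step to be the main conceptual one; none of this requires ideas beyond those already established in Sections~\ref{sec:filter_mec} and \ref{sec:Filter}.
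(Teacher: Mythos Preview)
Your proposal is correct and follows essentially the same approach as the paper's proof in Appendix~A: induction on $k$, with the prediction step handled via the commutation $\hat R_{k-1}Q_k\hat R_{k-1}^T=Q_k$ and the explicit diagonal computation through Proposition~\ref{pro:GaApp} and (\ref{eq:N2P}), and the update step handled by writing $F_k$ as a block matrix whose $(1,2)$ block is a sum of two scaled planar rotations collapsed via the cosine addition formula into a single scaled rotation, from which the left polar decomposition is read off. Your device of left-multiplying by the rotation through $-\hat\theta_{k|k-1}$ is equivalent to the paper's substitution $\theta_k^m=\hat\theta_{k|k-1}+\Delta\theta_{k|k-1}$ followed by expansion, and your flagged cautions (a) and (b) are exactly the bookkeeping the paper carries out implicitly.
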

\begin{proof}
    See Appendix A.
\end{proof}
For single-axis rotational motion, Theorem \ref{th:uniaxial} provides a one-dimensional parameter representation of the filter with $\kappa_{k|k-1},\hat{\theta}_{k|k-1},\kappa_k,\hat{\theta}_k$ on $\mathcal{S}_{I_3}(b_3)$, and indicates that the attitude estimates are independent of $\kappa_{k|k-1}^*,\kappa_k^*,\kappa_k^{m*}$ and $\sigma_k^*$. Moreover, the posterior parameters $\kappa_k$ and $\hat{\theta}_k$ are the same as the posterior mean angle $\bar{\theta}^+$ and concentration parameter $\kappa^+$, implying that the nonlinear filtering mechanism discussed in Section \ref{sec:filter_mec} is faithfully reflected in the filter for single-axis rotation. Next, we focus on the one-dimensional filter on $\mathcal{S}_I(b_3)$ and show how the filtering mechanism guarantees the local exponential stability.

For single-axis rotation, the error defined by the Frobenius norm can be rewritten as $\| R_k -\hat{R}_k \|_F^2 =2[1-\cos(\theta_k-\hat{\theta}_k)] \in [0,4]$. Combining the concentration parameter and estimation error, define a candidate Lyapunov function by 
\begin{equation}\label{eq:de_ly}
    V_k= \kappa_k\left[1-\cos(\theta_k-\hat{\theta}_k)\right].
\end{equation}
The following theorem proves that $V_k$ is bounded by two positive definite functions and has exponential decay for all $\| R_k -\hat{R}_k \|_F^2<4$ and $k\geq 0$.
\begin{theorem}\label{th:cov}
If there exists $\alpha_1,\alpha_2,\beta_1,\beta_2>0$ such that $\alpha_1  <\sigma_k<\alpha_2$ and $\beta_1<\kappa_m<\beta_2$,  $\forall k \geq 0$, then the one-dimensional filter in Theorem \ref{th:uniaxial} is  locally exponentially stable. More specifically, $\forall \hat{R}_0 \in \mathcal{S}_I(b_3)$ such that $\| R_0 -\hat{R}_0 \|_F^2\leq4-\epsilon$ for $\epsilon \in (0,4)$, there is 
    \begin{equation}\label{eq:cov_speed}
     {\| R_k -\hat{R}_k\|_F^2}\leq C f(\epsilon)^k\| R_0 -\hat{R}_0 \|_F^2,
    \end{equation}
where $f(\epsilon) = \left( 1+2\alpha_1\beta_1\sqrt{\epsilon-{\epsilon^2}/{4}} \right)^{-1}$ is a function with respect to $\epsilon$ and independent of $k$ and $C>0$ is a constant.
\end{theorem}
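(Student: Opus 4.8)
The plan is to pass to the scalar error dynamics on $\mathcal{S}_{I_3}(b_3)$ and exhibit $V_k$ as a Lyapunov function with one‑step geometric decay. First I would set $e_k=\theta_k-\hat{\theta}_k$ and $e_{k|k-1}=\theta_k-\hat{\theta}_{k|k-1}$. Since the gyro is noise‑free, $\theta_k=\theta_{k-1}+\tilde w_k h$ and $\hat{\theta}_{k|k-1}=\hat{\theta}_{k-1}+\tilde w_k h$, so $e_{k|k-1}=e_{k-1}$, and since the attitude measurement is noise‑free, the argument $\theta^m_k-\hat{\theta}_{k|k-1}$ in Theorem~\ref{th:uniaxial} equals $e_{k-1}$. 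Writing $\phi_k=\arccos\!\big((\kappa_{k|k-1}+\kappa^m_k\cos e_{k-1})/\kappa_k\big)\in[0,\pi]$, the update reads $e_k=e_{k-1}-\mathrm{sign}(e_{k-1})\phi_k$. Expanding $\cos(e_{k-1}\mp\phi_k)$, using $\sin\phi_k=\kappa^m_k|\sin e_{k-1}|/\kappa_k$ (which follows from $\kappa_k^2-(\kappa_{k|k-1}+\kappa^m_k\cos e_{k-1})^2=(\kappa^m_k)^2\sin^2 e_{k-1}$ by the definition of $\kappa_k$), and matching the sign factor, I obtain the ``vector‑addition'' identities $\kappa_k\cos e_k=\kappa^m_k+\kappa_{k|k-1}\cos e_{k-1}$ and $\kappa_k\sin e_k=\kappa_{k|k-1}\sin e_{k-1}$; equivalently the error vector $\kappa_k(\cos e_k,\sin e_k)$ equals $\kappa^m_k(1,0)+\kappa_{k|k-1}(\cos e_{k-1},\sin e_{k-1})$. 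Since this resultant lies in the cone spanned by its two summands, $e_k$ is between $0$ and $e_{k-1}$, so $|e_k|\le|e_{k-1}|$ (strictly unless $e_{k-1}\in\{0,\pm\pi\}$); in particular $|e_k|\le|e_0|$ for all $k$ once $|e_0|<\pi$.

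Next I would establish confinement and uniform bounds on $\kappa_k$. From $\|R_0-\hat R_0\|_F=2(1-\cos e_0)\le 4-\epsilon$ we get $\cos e_0\ge-1+\epsilon/2$, i.e. $|e_0|\le\pi-\delta$ with $\cos\delta=1-\epsilon/2$ and $\sin\delta=\sqrt{\epsilon(1-\epsilon/4)}>0$; by the monotonicity above, $|e_k|\le\pi-\delta$ and thus $\cos e_{k-1}\ge-\cos\delta$ for all $k$. Then $\kappa_k^2=\kappa_{k|k-1}^2+(\kappa^m_k)^2+2\kappa_{k|k-1}\kappa^m_k\cos e_{k-1}\ge(\kappa_{k|k-1}-\kappa^m_k\cos\delta)^2+(\kappa^m_k)^2\sin^2\delta\ge(\kappa^m_k\sin\delta)^2$, so $\kappa_k\ge\beta_1\sin\delta$ for $k\ge1$; with $\kappa_0>0$ this gives $\kappa_k\ge\underline c:=\min(\kappa_0,\beta_1\sin\delta)>0$ for all $k$. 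Also $\kappa_{k|k-1}=\kappa_{k-1}/(1+2\kappa_{k-1}\sigma_k)\le 1/(2\alpha_1)$ and $\kappa_k\le\kappa_{k|k-1}+\kappa^m_k\le 1/(2\alpha_1)+\beta_2=:\bar c$. Consequently $V_k=\kappa_k(1-\cos e_k)$ satisfies $\tfrac{\underline c}{2}\|R_k-\hat R_k\|_F\le V_k\le\tfrac{\bar c}{2}\|R_k-\hat R_k\|_F$, the asserted sandwich by positive‑definite functions of the estimation error.

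Then I would prove one‑step geometric decay. Using the first identity of Step~1, $V_k=\kappa_k-\kappa_k\cos e_k=\kappa_k-\kappa^m_k-\kappa_{k|k-1}\cos e_{k-1}$, and since $\kappa_k\le\kappa_{k|k-1}+\kappa^m_k$ this is $\le\kappa_{k|k-1}(1-\cos e_{k-1})$. Because $e_{k|k-1}=e_{k-1}$ and $\kappa_{k|k-1}=\kappa_{k-1}/(1+2\kappa_{k-1}\sigma_k)$, the right‑hand side equals $V_{k-1}/(1+2\kappa_{k-1}\sigma_k)$; with $\kappa_{k-1}\ge\underline c$ and $\sigma_k>\alpha_1$ it is $\le\mu V_{k-1}$, where $\mu:=1/(1+2\alpha_1\underline c)<1$. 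Iterating, $V_k\le\mu^k V_0$; combining with the Step~2 sandwich and $V_0=\tfrac{\kappa_0}{2}\|R_0-\hat R_0\|_F$ yields $\|R_k-\hat R_k\|_F=\tfrac{2}{\kappa_k}V_k\le\tfrac{2}{\underline c}\mu^k V_0=\tfrac{\kappa_0}{\underline c}\,\mu^k\|R_0-\hat R_0\|_F$, i.e. (\ref{eq:cov_speed}) with $C=\kappa_0/\underline c$ and $f(\epsilon)=\mu<1$, both independent of $k$. Finally, the only initial condition on $\mathcal{S}_I(b_3)$ not covered by some $\epsilon\in(0,4)$ is $\|R_0-\hat R_0\|_F=4$, i.e. $e_0=\pm\pi$ — a single point, hence of measure zero — so exponential convergence holds for all but a measure‑zero set of initial attitudes, which is almost‑global exponential stability.

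The step I expect to be the main obstacle is the one‑step decay $V_k\le\mu V_{k-1}$ with $\mu$ \emph{uniformly} below $1$, because it rests on two ingredients that must be assembled in the right order: (i) extracting the closed‑form identity $\kappa_k\cos e_k=\kappa^m_k+\kappa_{k|k-1}\cos e_{k-1}$ from the $\arccos$‑based update — routine trigonometry, but one must track the sign of $\sin\phi_k$, the $\mathrm{sign}(e_{k-1})$ factor, and the branch of $\arccos$; and (ii) the uniform lower bound $\kappa_k\ge\underline c>0$, which needs $|e_k|$ to stay away from $\pi$, which in turn needs the monotonicity $|e_k|\le|e_{k-1}|$. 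Thus the logical chain is monotonicity $\Rightarrow$ confinement $|e_k|\le\pi-\delta$ $\Rightarrow$ $\kappa_k\ge\underline c$ $\Rightarrow$ $\mu<1$, and a careless ordering would make the argument circular. The antipodal configuration $e_0=\pm\pi$ (prior and measurement mean angles exactly opposite, where the update can stall) must be excluded up front; this is exactly why the result is ``almost'' global rather than global.
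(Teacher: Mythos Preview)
Your proposal is correct and follows essentially the same approach as the paper's Appendix~B: the same Lyapunov function $V_k=\kappa_k(1-\cos e_k)$, the same logical chain (error monotonicity $\Rightarrow$ confinement away from $\pi$ $\Rightarrow$ uniform lower bound on $\kappa_k$ $\Rightarrow$ one-step geometric decay of $V_k$), and the same final translation back to $\|R_k-\hat R_k\|_F$. Your vector-addition identities $\kappa_k(\cos e_k,\sin e_k)=\kappa^m_k(1,0)+\kappa_{k|k-1}(\cos e_{k-1},\sin e_{k-1})$ package the trigonometry more cleanly than the paper does, and your use of $\alpha_1$ (not $\alpha_2$) in the contraction factor together with $\underline c=\min(\kappa_0,\beta_1\sin\delta)$ for the lower bound actually repair small slips in the paper's write-up.
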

\begin{proof}
    See Appendix the B. 
\end{proof}
Theorem \ref{th:uniaxial} and \ref{th:cov} combine the analysis in Section \ref{sec:filter_mec} with the convergence rate of the proposed filter with MFDs. Together with Theorem \ref{th:global_stb}, the results in this section provide further theoretical support for the performance of MFD-based filters in the presence of large initial errors.

\section{Numerical Simulations}
\label{sec: Simulation}
This section compares the proposed fast nonlinear filters, the MFD-based filters in \cite{leeBayesianAttitudeEstimation2018} and \cite{leeBayesianAttitudeEstimation2018b}, and the IEKF \cite{barrauInvariantKalmanFiltering2018}. For simplicity, we abbreviate our fast nonlinear filters with MFDs derived from the right- and left-invariant error by FNF-R and FNF-L, the Bayesian filter with MFDs in \cite{leeBayesianAttitudeEstimation2018} by BF-MFD, the Bayesian filter with approximated MFDs by BF-AMFD \cite{leeBayesianAttitudeEstimation2018b}.  A filter, denoted by FNF-A, is designed to be the same as the FNF-R, except that the weight matrices $N_{k|k-1}$ and $N_k^m$, are no longer derived from the propagation and approximation with MFDs but are instead set to identity matrices. FNF-A is compared with other filters as ablation experiments to demonstrate how much of the performance of the proposed filters comes from the optimal Wahba attitude solution rather than from the introduction of MFDs. Similarly to \cite{leeBayesianAttitudeEstimation2018}, a 3D pendulum is adopted to generate true attitude and angular velocity for numerical simulations. The initial attitude and angular velocity are given as $R_0 = I_{3\times 3}$ and $\Omega_0= 4.14 \times [1,1,1]^T rad/s$ respectively. The attitude is assumed to be estimated with angular velocity and vector measurements at 50 Hz and 10 Hz respectively. The white noise of angular velocity measurements is Gaussian with $H = \sigma I_{3 \times 3}$ and $\sigma = 1\  deg/\sqrt{s}$. The vector measurement errors follow the Gaussian distribution with zero mean and a covariance that varies by each example. For all examples, the simulation time is $T = 60s$ and the time step is $h = 0.02s$. The SVD method is utilized to estimate the attitude directly from vector measurements and calculate the measurement error. For each case, 50 Monte Carlo simulations (with respect to the random noise) are carried out. All simulations are conducted on a computer owning a 6-core Intel i9-9900 3.10Ghz CPU, and 32G RAM.

Note that the two MFD-based Bayesian filters in \cite{leeBayesianAttitudeEstimation2018} and \cite{leeBayesianAttitudeEstimation2018b} can not directly deal with common vector measurement errors characterized by Gaussian distributions. To compare the four filters, we use a method presented in \cite{loveGaussianStatisticsPalaeomagnetic2003} to approximate the directional distribution of vector measurements with Gaussian noise. For a random vector $x\in \mathbb{R}^3$ characterized by $x\sim \mathcal{N}(\mu, k^2I_3)$, the possibility distribution density of  $x_1=x/{\Vert x \Vert}$ is given by
\begin{equation}
    p(x_1) = \frac{\kappa}{4\pi \sinh \kappa} \exp(\kappa \mu_1 x_1),
\end{equation}
where $\kappa = \Vert \mu \Vert^2 / k^2$ and $\mu_1 = \mu / \Vert \mu \Vert$. For a more general case with $x\sim  \mathcal{N}(\mu,\Sigma)$, $\kappa$ is given by $3\Vert \mu \Vert^2/\text{tr}(\Sigma)$ to approximate the average dispersion characteristic of the directional distribution of $x$.

\subsection{Small initial errors}\label{sec:small_ie}
In this subsection, measurement noises are chosen to be isotropic with $G_k^i = \sigma_m^2 I_3$ and $\sigma_m \in \mathbb{R}$. Two different levels of covariance are tested: (i) $\sigma_m = 0.24$; (ii) $\sigma_m = 0.04$. The initial parameter of the four filters with MFDs is $F_0 = 10 \exp\left( {\pi e_1^{\wedge}}/{18} \right),$ implying that the initial attitude information has a small error and high confidence. For the IEKF method, the corresponding initial attitude is $R_0 = \exp\left( {\pi e_1^{\wedge}}/{18} \right)$ with a covariance matrix $P_0 = \text{diag}[1/(s_2+s_3), 1/(s_1+s_3),1/(s_1+s_2)] = 0.05 I_3$, where $s_i$ for $i\in\{1,2,3\}$ is the proper singular value of $F_0$. The results are provided in Table \ref{tab:1} for comparison. The CPU time of one simulation is also averaged across all simulations. The attitude error at the $k$-th step of the $i$-th simulation is defined as $E^i_k = \Vert \log(\hat{R}_k^i R_k) ^{\vee}\Vert,$ where $\hat{R}^i_k$ is the attitude estimate at the $k$-th step in the $i$-th simulation and $R_k$ is the true attitude at the $k$-th step. The average attitude error and the standard deviation are defined, respectively, as
\begin{equation*}
    AE = \frac{h}{MT} \sum_{i=1}^{M} \sum_{k=1}^{T/h} E^i_k,
\end{equation*}
\begin{equation*}
    SD = \sqrt{\frac{1}{M-1} \sum_{i=1}^{M} (\frac{h}{T}\sum_{k=1}^{T/h} E^i_k - AE)^2}.
\end{equation*}

\begin{table*}[htbp]
\centering
\caption{AE (±SD) and time consumption: small initial errors}

\label{tab:1}
\begin{tabular}{c|c|cccccc}
\hline
Measurement Model     &             & FNF-R& FNF-L &FNF-A& BF-MFD&BF-AMFD& IEKF         \\ \hline
{(i)}& est. err(°) & 5.6586±0.4980& 5.6722±0.5075&20.2196±0.5842& 5.6247±0.5106&5.5945±0.4943& 5.7272±0.5541\\
                      & CPU time(s) & 0.4993±0.0441& 0.8141±0.0759&-& 15.4467±1.0391&5.1117±0.4598& 0.3239±0.0498\\ \hline
{(ii)}& est. err(°) & 3.8480±0.2586& 3.8482±0.2597&7.6259±0.1811& 3.8432±0.2474&3.8387±0.2456& 3.8427±0.2552\\
                      & CPU time(s) & 0.6823±0.3429& 1.1648±0.6515&-& 65.3059±5.0346&5.0466±5077& 0.3577±0.0933\\ \hline
\end{tabular}
\end{table*}

The results show that the accuracy and convergence rates of these five filters are almost identical. These results corroborate the analyses presented in Section \ref{sec:filter_mec} for a small initial error, in which the differences in $\kappa^+_{MFD}$ and $\kappa^+_{CGD}$, and in $\bar{\theta}^+_{MFD}$ and $\bar{\theta}^+_{CGD}$ are minor since the mean attitudes of the prior and the likelihood distributions are almost identical. However, it is noteworthy that the average time consumption of the two proposed filters is only $1/20 \sim 1/100$ of BF-MFD and $1/5 \sim 1/10$ of BF-AMFD. Moreover, the AE of FNF-A indicates that the ablation of weight matrices calculated by MFDs significantly affects the accuracy of the estimates.

\subsection{Large initial errors}
This subsection considers three cases of measurement noises as (i) $G_k^i = 0.24I_3,\  F_0 = 10^{-3}\exp \left( \pi e_1^{\wedge} \right)$; (ii) $G_k^i = 0.04I_3,  F_0 = \exp \left( \pi e_1^{\wedge} \right)$; (iii) $G_k^i = \text{diag}[0.3\ 0.01\ 0.01], F_0 = \exp \left( \pi e_1^{\wedge} \right)$. The first two cases are chosen to be isotropic while the third case is chosen to be non-isotropic. The initial parameter of for the four MFD-based filters is
\begin{equation}
    F_0 = \exp \left( \pi e_1^{\wedge} \right),
\end{equation}
which implies that the initial estimate rotates 180° around $e_1$ compared to the true initial attitude and the initial uncertainties are large. The corresponding initial parameters are set as $R_0 = \exp \left( \pi e_1^{\wedge} \right)$ and $P_0 = 0.5 I_3$. The results are provided in Table \ref{tab:2} and Fig. \ref{fig:large_error}.
\begin{figure}[htbp]
	\begin{center}
		\includegraphics[width=\linewidth]{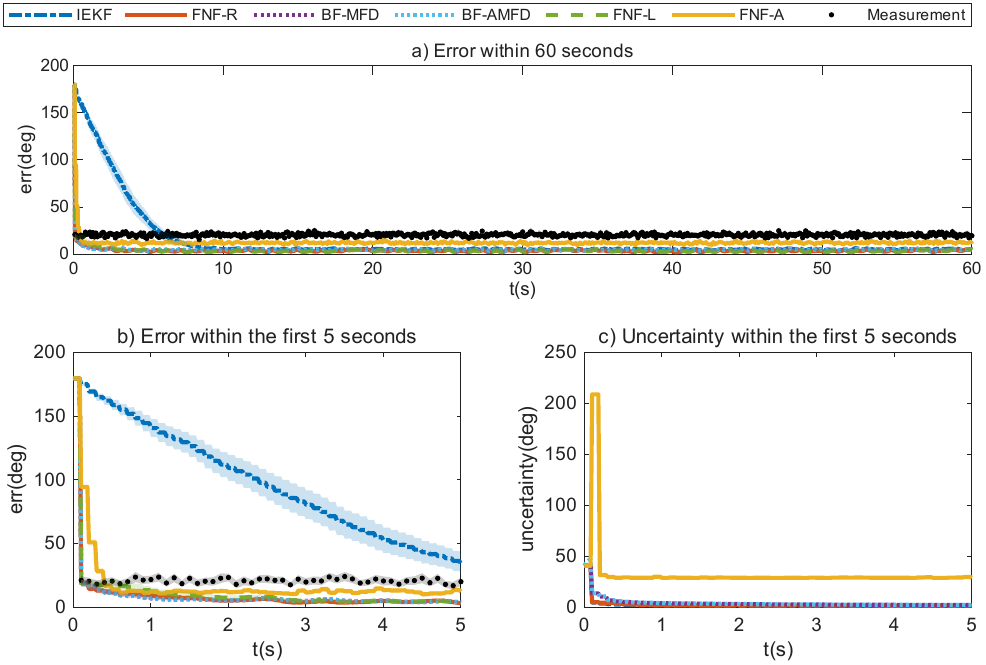}
		\caption{Average error for the case of large initial errors with non-isotropic measurement errors. The shadow represents an envelope of 95\% confidence. The attitude uncertainty is calculated as the square root of the first diagonal term of the attitude covariance matrix in the inertial frame as \cite{wang2020matrix}.}\label{fig:large_error}
	\end{center}
\end{figure}

\begin{table*}[htbp]
\centering
\caption{AE (±SD) and time consumption: large initial errors}

\label{tab:2}
\begin{tabular}{c|c|cccccc}
\hline
Measurement Model     &             & FNF-R& FNF-L &FMF-A& BF-MFD&BF-AMFD& IEKF         \\ \hline
{(i)}& est. err(°) & 6.0477±0.4795& 6.0512±0.4818&20.7879±0.6656& 5.9893±0.4353&5.9660±0.4236& 18.5777±15.4668\\
                      & CPU time(s) & 0.5646±0.1670& 0.8381±0.0889&-& 15.7117±1.0018&5.0603±0.3695& 0.3229±0.0373\\ \hline
{(ii)}& est. err(°) & 4.1318±0.1611& 4.132±01605&8.20547±0.1972& 4.1279±0.1597&4.1228±0.1597& 17.1230±8.4887\\
                      & CPU time(s) & 0.5429±0.1600& 0.8224±0.0611&-& 62.4333±4.9024&4.7197±0.3973& 0.4520±0.0412\\ \hline
{(iii)}& est. err(°) & 4.1889±0.2782& 4.4379±0.4843&12.4253±0.8342& 5.2067±0.4618&5.1971±0.4524& 12.8828±3.2472\\
                      & CPU time(s) & 0.4853±0.0486& 0.7852±0.0513&-& 15.0536±1.2947&4.9301±0.4447& 0.3118±0.0499\\ \hline
\end{tabular}
\end{table*}

As shown in Fig. \ref{fig:large_error}, the two proposed filters achieve, in cases i) and ii), almost identical estimation accuracy and convergence rate with BF-MFD and BF-AMFD. In case iii), their discrepancy in accuracy is mainly caused by different approximation methods for likelihood distributions. All MFD-based filters perform better than IEKF in all three cases. More importantly, both proposed filters consume significantly less time than BF-MFD and BF-AMFD, which is shown in Fig. \ref{fig:time}. The two proposed filters consume slightly more time than IEKF due to the involvement of matrix inversion in calculating the prior distribution, and much less time than BF-MFD and BF-AMFD (only $1/20 \sim 1/100$ of BF-MFD and $1/5\sim 1/10$ of BF-AMFD) due to skipping the step of solving complex nonlinear equations in the moment matching as needed by BF-MFD and BF-AMFD. When comparing the results of cases (i) and (ii), it can be seen that a larger initial and measurement uncertainty leads to a slightly greater loss of accuracy for the FNF-R than for the BF-MFD and BF-AMFD. More specifically, accuracy loss increases from $1\%$ for case(ii) and $2\%$ for case (i), which is insignificant in practice.

In Section \ref{sec:filter_mec}, we prove that the uncertainty of MFD-based Bayesian filters (measured by $1/\kappa^+$) is correlated with $\Delta \bar{\theta}$, while the uncertainty of filters with CGDs is independent of the estimation error. Therefore, in Fig. \ref{fig:large_error}, the uncertainties of the four filters with MFDs decrease in a manner compatible with estimation errors, while the uncertainty of the IEKF decreases rapidly even if the estimation error is still large. Additionally, the better transient response of the FNF-R compared with the IEKF for large initial errors is supported by the stability analysis in Section \ref{sec:stability}, which indicates that the proposed filter has a larger attraction region.

Compared with the FNF-R and FNF-L, the estimation accuracy of the FNF-A displays a significant decline. The simulation of case iii) shows that the convergence rate is slightly slower than those of FNF-R and FNF-L, and, more importantly, the estimation errors of the FNF-A cannot converge to the same level as the FNF-R and FNF-L, as shown in Fig. \ref{fig:large_error}. The ablation experiments with the FNF-A demonstrate the importance of introducing the MFD to represent uncertainty and assigning an appropriate weight to the Wahba problem when dealing with noisy measurements.

\begin{figure}[htbp]
	\begin{center}
		\includegraphics[width=\linewidth]{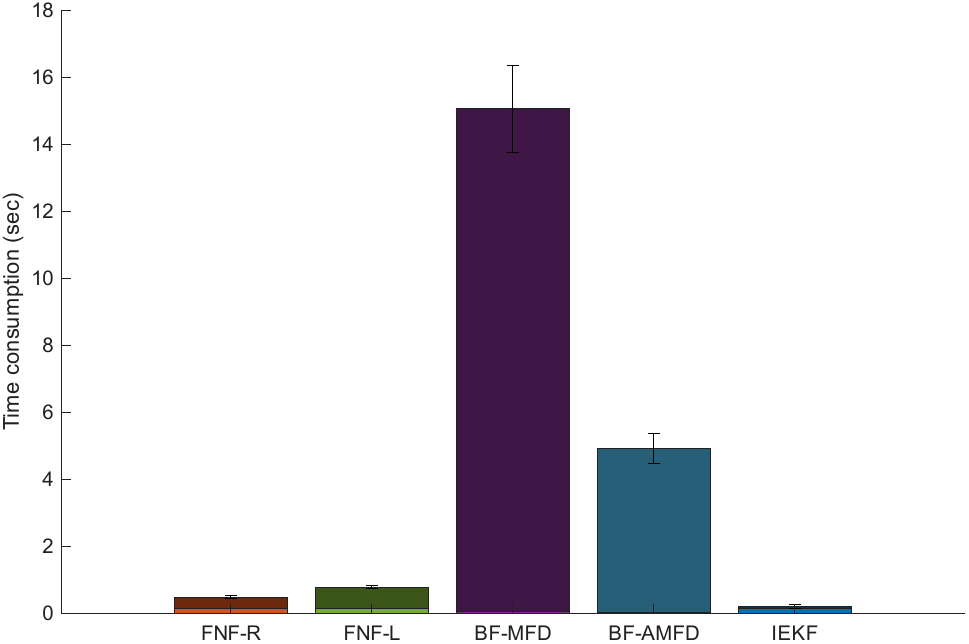}
		\caption{Time consumption: large initial error with non-isotropic measurement error. For each bar, the darker part on the top represents the time consumption of the prior for the proposed filters or propagation for BF-MFD and IEKF, and the lighter part on the bottom represents the time consumption of the likelihood and the posterior for the proposed filters or the measurement update for BF-MFD and IEKF.}\label{fig:time}
	\end{center}
\end{figure}





 \subsection{Direct attitude measurements with non-isotropic noise}

Finally, we test cases of direct attitude measurements with non-isotropic noise. The direct attitude measurement model for the four filters with MFDs is given by
\begin{equation}\label{eq:MF_dmea}
    R^m_k = R_k \delta R,\ \delta R \sim \mathcal{M}(N^m),
\end{equation}
where $N^m \in \mathbb{R}^{3\times 3}$ and $(N^m)^T = N^m$. The corresponding attitude measurement model for IEKF is given by
\begin{equation}\label{eq:ga_dmea}
     R^m_k = R_k \exp(\xi^{\wedge}),\ \xi \sim \mathcal{N}(0,P^m).
\end{equation}
All five filters will be tested with noisy measurements generated by (\ref{eq:MF_dmea}) and (\ref{eq:ga_dmea}), respectively. The measurement noises are set as: (i) $N^m = {\rm diag}[100, 0,0]$; (ii) $P^m = \text{diag} [10, 0.01, 0.01]$. The parameters $P^m$ and $N^m$ are chosen to be similar and correspond to the case that the rotation about the $b_1$ axis of $\mathcal{F_B}$ is completely unknown. The rest parameter settings are the same as Section \ref{sec:small_ie}. The results are summarised in Table \ref{tab:4} and Fig. \ref{fig:non_isotripic}.
\begin{figure}[htbp]
	\begin{center}
		\includegraphics[width=\linewidth]{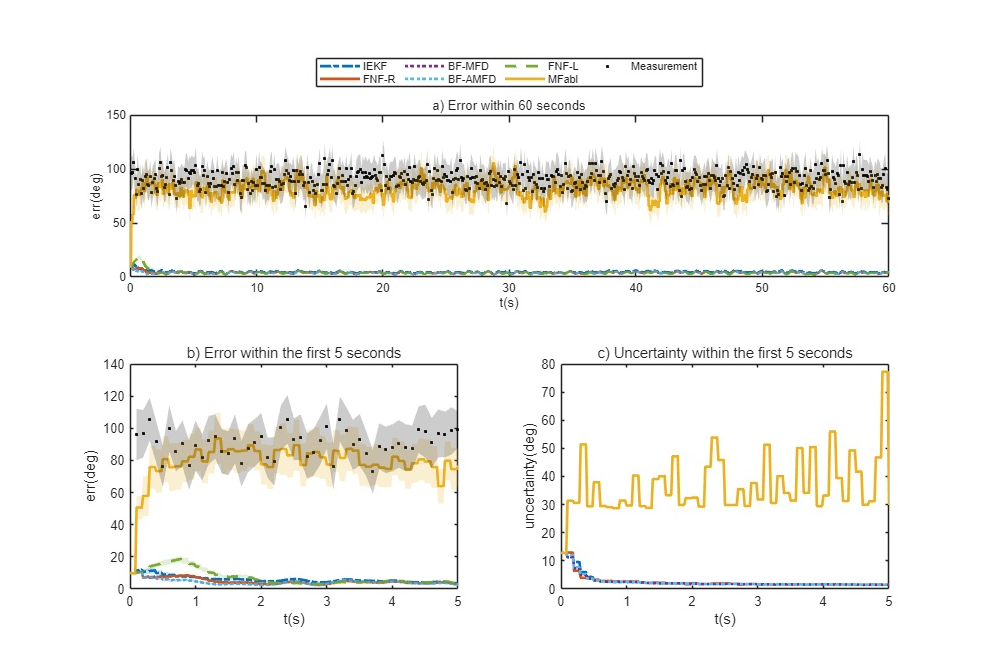}
		\caption{Average error for the case of direct attitude measurements with non-isotropic noise: $N^m = {\rm diag}[100, 0,0]$.}\label{fig:non_isotripic}
	\end{center}
\end{figure}

\begin{table*}[htbp]
\centering
\caption{AE (±SD) and time consumption: direct attitude measurements with non-isotropic noise}

\label{tab:4}
\begin{tabular}{c|c|cccccc}
\hline
Measurement Model     &             & FNF-R& FNF-L &FNF-A& BF-MFD&BF-AMFD& IEKF         \\ \hline
{(i)}& est. err(°) & 3.9009±0.2270& 4.0630±0.2800 &81.3657±3.9599& 3.8311±0.2090&3.8279±0.2074& 4.7197±0.3888\\
                      & CPU time(s) & 
0.3685±0.0685& 0.7335±0.1455&-& 60.6803±4.3024&4.6566±0.4562& 0.4423±0.1347\\ \hline
 {(ii)}& est. err(°) 
& 3.2629±0.1230& 3.3728±0.1293&82.6802±1.8590& 3.2524±0.1164&3.2480±0.1160& 3.7206±0.1964\\ & CPU time(s) & 0.4878±0.2684& 0.8722±0.3697&-& 60.7077±4.8772&4.6037±0.4732& 0.4641±0.1959\\ \hline
\end{tabular}
\end{table*}
The two proposed filters maintain almost the same estimation accuracy as, and have significantly faster computation speeds than BF-MFD and BF-AMFD. All four MFD-based filters achieve smaller steady-state errors compared to IEKF. This indicates that although FNF-R and FNF-L utilize the linearized error systems, they still perform much better than IEKF when dealing with large measurement uncertainty. The ablation experiments show that the accuracy reduction of the FNF-A is more obvious when dealing with very large measurement uncertainty, which highlights the importance of MFDs in such situations.
\section{Conclusion}
\label{sec: Conclusion}
This paper addressed the fast nonlinear filtering problem with MFDs on $SO(3)$. Two key properties associated with MFD-based attitude filters are revealed by analyzing the evolution of the distributions on $SO(3)$ along Bayes' rule: (i) the difference in mean attitudes of the prior and the likelihood distributions influence the uncertainty of 
the posterior distribution for rotations around a specific axis; (ii) MFD-based filters can filter out wrong information with low confidence faster than filters with CGDs. This finding is based on analytic expressions instead of a limited number of numerical examples or intuitive inferences. Therefore, it underpins the general validity and advantage of attitude filters with MFDs for estimation on $SO(3)$.

Two filters with MFDs are thereby proposed, and they retain the aforementioned properties but reduce the computational burden by a significant amount compared to previous MFD-based attitude filters. The proposed filters represent the attitude uncertainty by the left- and right-invariant errors characterized by MFDs respectively. Moreover, the filter with right-invariant error is proved to be almost globally asymptotically stable as an observer on $SO(3)$, and to be locally exponentially stable for the case of single-axis rotation, which demonstrates the relationship between the two properties and the convergence rate. The filter with right-invariant errors performs as well as the Bayesian filter with MFDs presented in \cite{leeBayesianAttitudeEstimation2018} in challenging simulation examples but consumes much less computation time (about $1/5\sim 1/100$ of previous MFD-based attitude filters). The filter with left-invariant errors is slightly less accurate than the previous Bayesian filter with MFDs. Therefore, the two proposed filters are efficient alternatives to the existing Bayesian filters with MFDs for attitude estimation requiring high computational efficiency.

\section*{Appendix A}\label{app:A}
\setcounter{equation}{0}
\renewcommand{\theequation}{A\arabic{equation}}
The proof is based on the mathematical induction. Assume that the posterior estimates at $t_{k-1}$ are $\hat{R}_{k-1} =\exp\left( {\hat{\theta}_{k-1} b_3^{\wedge}} \right),\ N_{k-1}={\rm diag}[\kappa_{k-1},\kappa_{k-1},\kappa^{*}_{k-1}]$. The assumption is clearly valid at $t_1=1$. For $k>1$, the prior estimates provided by Algorithm \ref{alg:right_est} are
\begin{align*}
    &\hat{R}_{k|k-1}\!=\!\hat{R}_{k-1}\!\exp\!{\left( \tilde{w}_kb_3^{\wedge} \right)}\!=\! \exp\left[\!  {\left(\hat{\theta}_{k-1}\! +\!\tilde{w}_k h\right) b_3^{\wedge}} \!\right],\\
    &P_{k|k-1}  = P_{k - 1}  + \hat R_{k - 1} {Q_k}{\left( {\hat R_{k - 1} } \right)^T}\\
    &={\rm diag}\!\left[ \frac{1}{{\kappa _{k - 1}+ \kappa _{k - 1}^{ *}}}, \frac{1}{{\kappa _{k - 1}+ \kappa _{k - 1}^{ *}}},\frac{1}{2\kappa_{k-1}}\! \right]\!+\!Q_k,\\
\end{align*}
and $N_{k|k-1}$ can then be derived by (\ref{eq:N2P}) as
\begin{align*}
    N_{k|k-1}  ={\rm diag}[\kappa_{k|k-1},\kappa_{k|k-1},\kappa^{*}_{k|k-1}],
\end{align*}
where
\begin{align*}
    &\kappa_{k|k-1}=\frac{{\kappa _{k - 1} }}{{1 + 2\kappa _{k - 1} \sigma _k}},\\
    &\kappa_{k|k-1}^*= \frac{{\kappa _{k - 1}  + \kappa _{k - 1}^{ *}}}{{1 + {\sigma _k}\left( {\kappa _{k - 1}^ +  + \kappa _{k - 1}^{ *}} \right)}} - \frac{{\kappa _{k - 1}^ + }}{{1 + 2\kappa _{k - 1}^ + \sigma _k}}.
\end{align*}
For measurement update, $F_k$ is calculated as
\begin{align*}
 &F_k = N_k^m {R}_k^m + N_{k|k-1} \hat{R}_{k|k-1}\\
 &= N_k^m \exp \left( \theta^m_k b_3 \right)+ N_{k|k-1} \exp \left( \hat{\theta} _{k|k - 1} b_3 \right)\\
 &=\left[ {\begin{array}{*{20}{c}}
{\kappa _k^m\cos \theta _k^m}&{ - \kappa _k^m\sin \theta _k^m}&{}\\
{\kappa _k^m\sin \theta _k^m}&{\kappa _k^m\cos \theta _k^m}&{}\\
{}&{}&{\kappa _k^{m*}}
\end{array}} \right] +\\
&\left[ {\begin{array}{*{20}{c}}
{{\kappa _{k|k - 1}}\cos {\hat{\theta} _{k|k - 1}}}&{ - {\kappa _{k|k - 1}}\sin {\hat{\theta} _{k|k - 1}}}&{}\\
{{\kappa _{k|k - 1}}\sin {\hat{\theta} _{k|k - 1}}}&{{\kappa _{k|k - 1}}\cos {\hat{\theta} _{k|k - 1}}}&{}\\
{}&{}&{\kappa _{k|k - 1}^*}
\end{array}} \right].
\end{align*}
Denote by $\Delta \theta_{k|k-1}=\theta^m_k-\theta_{k|k-1}$, and we have
\begin{align*}
 \begin{array}{l}
\kappa _k^m\cos \theta _k^m + {\kappa _{k|k - 1}}\cos {{\hat \theta }_{k|k - 1}} \\
= \kappa _k^m\cos \left( {{{\hat \theta }_{k|k - 1}} + \Delta {\theta _{k|k-1}}} \right) + {\kappa _{k|k - 1}}\cos {{\hat \theta }_{k|k - 1}}\\
 = \left( {{\kappa _{k|k - 1}} + \kappa _k^m\cos \Delta {\theta _{k|k-1}}} \right)\cos {{\hat \theta }_{k|k - 1}} - \kappa _k^m\sin \Delta {\theta _{k|k-1}}\sin {{\hat \theta }_{k|k - 1}}\\
 = {\kappa _k}\cos \left( {{{\hat \theta }_{k|k - 1}} + \Delta {{\hat \theta }_k}} \right)
\end{array},\\
\begin{array}{l}
\kappa _k^m\sin {\theta _k} + {\kappa _{k|k - 1}}\sin {{\hat \theta }_{k|k - 1}} \\
= \kappa _k^m\sin \left( {{{\hat \theta }_{k|k - 1}} + \Delta {\theta _{k|k-1}}} \right) + {\kappa _{k|k - 1}}\sin {{\hat \theta }_{k|k - 1}}\\
 = \left( {{\kappa _{k|k - 1}} + \kappa _k^m\cos \Delta {\theta _{k|k-1}}} \right)\sin {{\hat \theta }_{k|k - 1}} + \kappa _k^m\sin \Delta {\theta _{k|k-1}}\cos {{\hat \theta }_{k|k - 1}}\\
 = \kappa _k \sin \left( {{{\hat \theta }_{k|k - 1}} + \Delta \hat{\theta }_k } \right),
\end{array}
\end{align*}
where
\begin{align*}
    &\kappa_k = \sqrt {{{(\kappa _{k|k-1} )}^2} + {{(\kappa _k^m)}^2} + 2\kappa  _{k|k-1} \kappa _k^m\cos (\Delta \theta_k)} ,\\
    &\Delta \hat{\theta}_k = {\rm{sign}}(\Delta \theta_k)\arccos \left( {\frac{{\kappa  _{k|k-1}  + \kappa _k^m\cos (\Delta \theta_k )}}{{\kappa  _{k} }}} \right).
\end{align*}
Then, $F_k$ can be rewritten as
\begin{equation*}
    \begin{array}{l}
{F_k} \\
 = \left[ {\begin{array}{*{20}{c}}
{{\kappa _k}{I_2}}&{}\\
{}&{\kappa _k^{m*} + \kappa _{k|k - 1}^*}
\end{array}} \right]\exp \left[ {\left( {{{\hat \theta }_{k|k - 1}} + \Delta {{\hat \theta }_k}} \right){b_3}^ \wedge } \right]
\end{array},
\end{equation*}
equivalent to the proper left polar decomposition of $F_k$. Thus, the posterior estimates are given by
\begin{align*}
    &N_k = {\rm diag}[\kappa_k,\kappa _k^{m*} + \kappa _{k|k - 1}^* ],\\
    &\hat{R}_k = \exp \left[ {\left( {{{\hat \theta }_{k|k - 1}} + \Delta {{\hat \theta }_k}} \right){b_3}^ \wedge } \right],
\end{align*}
which, under the induction assumption, preserves the forms of the diagonal matrix and the uniaxial rotation, respectively. By the mathematical induction, Theorem \ref{th:uniaxial} is valid for all positive integers $k$.

\section*{Appendix B}\label{app:B}
\setcounter{equation}{0}
\renewcommand{\theequation}{B\arabic{equation}}
\subsubsection{The monotonicity of $V_k$}
We first prove the monotonicity of $V_k$. Since the angular velocity measurements are noise-free, we have $\Delta \theta_{k|k-1}=\theta_k-\hat{\theta}_{k|k-1} =\Delta \theta_{k-1}= \theta_k -\hat{\theta}_{k-1}$, and thus
\begin{equation*}
    \frac{V_{k|k-1}}{V_{k-1}} =\frac{\kappa_{k|k-1}}{\kappa_{k-1}}=\frac{1}{1+2\kappa_{k-1}\sigma_k}<1.
\end{equation*}
Defining an intermediate variable $\tau = \kappa^m_k/\kappa_{k|k-1}$, there is
\begin{align*}
&\frac{{{V_k}}}{{{V_{k|k - 1}}}} = \frac{{{\kappa _k}\left[ {1 - \cos \left( {{\theta _k} - {{\hat \theta }_k}} \right)} \right]}}{{{\kappa _{k|k-1} }\left[ {1 - \cos \left( {\Delta {\theta _{k|k - 1}}} \right)} \right]}} = \\
&\sqrt {1 + {\tau^2} + 2\tau\cos \left( {\Delta {\theta _{k|k-1}}} \right)} \frac{{1 - \cos \left( {\Delta {\theta _{k|k-1}} - \Delta {{\hat \theta }_k}} \right)}}{{1 - \cos \left( {\Delta {\theta _{k|k - 1}}} \right)}},
\end{align*}
where $\Delta \hat{\theta}_k=\hat{\theta}_{k|k-1}-\hat{\theta}_k$. Invoking the noise-free measurement $R_k^m = R_k$, we have $\theta^m_k -\hat{\theta}_{k|k-1} = \Delta \theta_{k|k-1}$, and $\cos(\Delta {{\hat \theta }_k})$ as well as $\sin(\Delta {{\hat \theta }_k})$ can be rewritten as
\begin{align*}
 &
\cos \left[ {{\rm{sign}}(\Delta {\theta _{k|k - 1}})\arccos \left( {\frac{{{\kappa _{k|k - 1}} + \kappa _k^m\cos (\Delta {\theta _{k|k - 1}})}}{{{\kappa _k}}}} \right)} \right]\\
& = \frac{{{\kappa _{k|k - 1}} + \kappa _k^m\cos (\Delta {\theta _k})}}{{{\kappa _k}}} = \frac{{1 + k\cos (\Delta {\theta _{k|k - 1}})}}{{\sqrt {1 + {\tau^2} + 2\tau\cos \left( {\Delta {\theta _{k|k - 1}}} \right)} }},
\\
&\sin \left[ {{\rm{sign}}(\Delta {\theta _{k|k - 1}})\arccos \left( {\frac{{{\kappa _{k|k - 1}} + \kappa _k^m\cos (\Delta {\theta _{k|k - 1}})}}{{{\kappa _k}}}} \right)} \right]\\
&=\frac{{\tau{{\sin }}\Delta {\theta _{k|k-1}}}}{{\sqrt {1 + {\tau^2} + 2\tau\cos \left( {\Delta {\theta _{k|k-1}}} \right)} }}.
\end{align*}
The above equalities can be used to derive
\begin{align}\label{app_B:eq_1}
  & \cos \!\left( {\Delta {\theta _{k|k - 1}} \! - \! \Delta {{\hat \theta }_k}} \! \right) \! = \! \cos \! \Delta {\theta _{k|k - 1}} \!\cos \! \Delta {\hat \theta _k} \! + \! \sin  \!\Delta {\theta _{k|k - 1}} \!\sin \! \Delta {\hat \theta _k}\nonumber\\
 & =\frac{{\cos \Delta {\theta _{k|k - 1}} + \tau}}{{\sqrt {1 + {\tau^2} + 2\tau\cos \left( {\Delta {\theta _{k|k - 1}}} \right)} }},
\end{align}
 and thus
 \begin{align*}
     &\frac{{{V_k}}}{{{V_{k|k - 1}}}} = \frac{{\sqrt {1 + {\tau^2} + 2\tau\cos \left( {\Delta {\theta _{k|k - 1}}} \right)}  - \tau - \cos \Delta {\theta _{k|k - 1}}}}{{1 - \cos \left( {\Delta {\theta _{k|k - 1}}} \right)}}\\
     &\leq \frac{{1 + \tau -\tau - \cos \Delta {\theta _{k|k-1}}}}{{1 - \cos \Delta {\theta _{k|k-1}}}} = 1,
 \end{align*}
 where the equality holds if and only if $\Delta\theta_{k|k-1} = 0$, i.e., the error converges to zero. Therefore, we have
 \begin{equation}\label{app_B:eq_2}
     V_k \leq V_{k|k-1}=\frac{V_{k-1}}{1+2\kappa_{k-1}\sigma_k}<V_{k-1}.
 \end{equation}

\subsubsection{ Verification of local exponential stability}
 Next, we examine the boundedness of $V_k$. Using (\ref{app_B:eq_1}), $\cos(\theta_k-\hat{\theta}_{k})/\cos(\theta_k-\hat{\theta}_{k|k-1})$ can be rewritten as
\begin{equation}\label{app_B:eq_3}
    \frac{{1 + \tau/\cos(\Delta\theta_{k|k-1})}}{{{\sqrt {1 + {\tau^2} + 2\tau\cos \left( {\Delta {\theta _{k|k - 1}}} \right)} }}}\geq 1,
\end{equation}
where the equality holds if and only if $\Delta\theta_{k|k-1} = 0$. Equation (\ref{app_B:eq_3}) implies that the estimation error monotonically decreases as $k$ increases, i.e.,
\begin{equation}\label{app_B:eq_4}
   1-\cos(\theta_k-\hat{\theta}_{k}) < 1-\cos(\theta_k-\hat{\theta}_{k|k-1})=1-\cos(\theta_{k-1}-\hat{\theta}_{k-1}).
\end{equation}
At each step, we have
\begin{align*}
    &0\leq\kappa_{k|k-1}=\frac{\kappa_{k-1}}{1+2\kappa_{k-1}\sigma_k}<\frac{1}{2\alpha_1}.
\end{align*}
View $\kappa_k$ as a quadratic function of $\kappa_{k|k-1}$, and thereby its upper bound is given by the bounds of $\kappa_{k|k-1}$, i.e.,
\begin{equation}
    \max \left\{ \kappa_k^m,\ \sqrt{\frac{1}{4\alpha_1^2}+(\kappa_k^m)^2+\frac{\kappa_m}{\alpha_1}\cos\Delta {\theta _{k|k-1}}} \right\}  \geq {\kappa _k}. 
\end{equation}
  For the lower bound of $\kappa_k$, there are two scenarios. When ${1\geq\cos \Delta {\theta _{k|k-1}} > 0}$, we have $\kappa_k \geq \kappa_{k}^m$; when ${0\geq\cos \Delta {\theta _{k|k-1}} \geq -1}$, the lower bound is determined by the maximum of quadratic function at $\kappa_{k|k-1} = -\kappa_m\cos \Delta\theta_{k|k-1}$ as $\kappa_k \geq {\kappa _k^m{\rm{ |}}\sin \Delta {\theta _{k|k-1}}{\rm{|}}}$, which, admittedly, could be not tight due to the bounds of $\kappa_{k|k-1}$. Thus, it follows that for all $k>0$,  
\begin{align*}
  &\max \left\{ \kappa_k^m,\ \sqrt{\frac{1}{4\alpha_1^2}+(\kappa_k^m)^2+\frac{\kappa_m}{\alpha_1}\cos\Delta {\theta _{k|k-1}}} \right\}  \geq {\kappa _k}\\ 
  &\geq \left\{ {\begin{array}{*{20}{l}}
{\kappa _k^m,}&{1\geq\cos \Delta {\theta _{k|k-1}} > 0,}\\
{\kappa _k^m{\rm{ |}}\sin \Delta {\theta _{k|k-1}}{\rm{|}},}&{-1<\cos \Delta {\theta  _{k|k-1}} \le 0.}
\end{array}} \right.
\end{align*}
Since $\| R_0 -\hat{R}_0 \|_F^2\leq4-\epsilon$, i.e., $\cos(\Delta \theta_0)>-1+{\epsilon}/{2}$, and the estimation error monotonically decreases as (\ref{app_B:eq_4}), we have
\begin{equation*}
    \kappa_k\geq\kappa_k^m\sqrt{1-\cos^2\Delta\theta_{k|k-1}}\geq \beta_1\sqrt{\epsilon-\frac{\epsilon^2}{4}}
\end{equation*}
when $-1<\cos \Delta {\theta  _{k|k-1}} \le 0$, and
\begin{equation*}
    \kappa_k \geq\kappa_k^m\geq\beta_1
\end{equation*}
when $0<\cos\theta _{k|k-1}\leq 1$. Thus, a lower bound of $V_k$ is given by
\begin{equation}
    V_k\geq \beta_1\sqrt{\epsilon-\frac{\epsilon^2}{4}}\left(1-\cos\Delta\theta_k\right).
\end{equation}
For the upper bound, we have
\begin{align*}
    &\sqrt{\frac{1}{4\alpha_1^2}\!+\!(\kappa_k^m)^2\!+\!\frac{\kappa_k^m}{\alpha_1}\cos\Delta {\theta _{k|k-1}}}\! \leq\! \sqrt{\left( \frac{1}{2\alpha_1}\!-\!\kappa^m_k \right)^2\!+\!\frac{\kappa_k^m\epsilon}{2\alpha_1}}\! \leq\! \gamma,
\end{align*}
where $\gamma\! =\!\max\! \left\{\! \sqrt{\left( \frac{1}{2\alpha_1}-\beta_1 \right)^2+\frac{\beta_1\epsilon}{2\alpha_1}},\!\sqrt{\left( \frac{1}{2\alpha_1}-\beta_2 \right)^2+\frac{\beta_2\epsilon}{2\alpha_1}} \right\}$.

Thus, $V_k$ is bounded by
\begin{equation}
    \gamma_1 \left(1-\cos\Delta\theta_k\right)\leq V_k\leq\gamma_2\left(1-\cos\Delta\theta_k\right),
\end{equation}
where  $\gamma_1=\beta_1\sqrt{\epsilon-{\epsilon^2}/{4}}$ and $  \gamma_2\!=\! \max\! \left\{\beta_2, \gamma \right\}$.
 Finally, we prove that the estimation error decreases exponentially. Using (\ref{app_B:eq_2}), we have
 \begin{equation*}
     V_k\leq\frac{V_{k-1}}{1+2\kappa_{k-1}\sigma_k}\leq{V_{k-1}}f(\epsilon),
 \end{equation*}
 and thus
 \begin{equation}\label{app_B:eq_5}
     V_k \leq {V_{0}}{[f(\epsilon)]^k}.
 \end{equation}
 According to (\ref{eq:de_ly}), $V_k$ can be rewritten as $V_k = (\kappa_k/2)\| R_k-\hat{R}_k \|_F$ and (\ref{app_B:eq_5}) can be used to show
\begin{align*}
   &\| R_k-\hat{R}_k \|_F \leq \frac{\kappa_0}{\kappa_k}[f(\epsilon)]^k\| R_0-\hat{R}_0 \|_F\leq \frac{\kappa_0}{\gamma_2}[f(\epsilon)]^k\| R_0-\hat{R}_0 \|_F,
\end{align*}
which verifies (\ref{eq:cov_speed}).

Evidently, $f(\epsilon)=1$ when $\epsilon =0 $ and thus (\ref{eq:cov_speed}) does not necessarily guarantee the convergence of the estimation error. Note that $\epsilon =0$ corresponds to $\| R_0 -\hat{R}_0 \|_F^2 =4$, i.e. $\hat{\theta}_0 = \pm \pi$, which forms a zero measure subset of $\mathcal{S}_I(b_3)$.
\section*{References}
\bibliographystyle{ieeetr}        
\bibliography{IEEEabrv,citedpaper} 

\begin{thebibliography}{10}

\bibitem{mahony2008nonlinear}
R.~Mahony, T.~Hamel, and J.-M. Pflimlin, ``Nonlinear complementary filters on the special orthogonal group,'' {\em IEEE Transactions on automatic control}, vol.~53, no.~5, pp.~1203--1218, 2008.

\bibitem{zlotnik2018higher}
D.~E. Zlotnik and J.~R. Forbes, ``Higher order nonlinear complementary filtering on lie groups,'' {\em IEEE Transactions on Automatic Control}, vol.~64, no.~5, pp.~1772--1783, 2018.

\bibitem{wu2015globally}
T.-H. Wu, E.~Kaufman, and T.~Lee, ``Globally asymptotically stable attitude observer on so (3),'' in {\em 2015 54th IEEE Conference on Decision and Control (CDC)}, pp.~2164--2168, IEEE, 2015.

\bibitem{berkane2017hybrid}
S.~Berkane, A.~Abdessameud, and A.~Tayebi, ``Hybrid global exponential stabilization on so (3),'' {\em Automatica}, vol.~81, pp.~279--285, 2017.

\bibitem{berkane2017hybridbias}
S.~Berkane, A.~Abdessameud, and A.~Tayebi, ``Hybrid attitude and gyro-bias observer design on $ so (3) $,'' {\em IEEE Transactions on Automatic Control}, vol.~62, no.~11, pp.~6044--6050, 2017.

\bibitem{wang2020hybrid}
M.~Wang and A.~Tayebi, ``Hybrid nonlinear observers for inertial navigation using landmark measurements,'' {\em IEEE Transactions on Automatic Control}, vol.~65, no.~12, pp.~5173--5188, 2020.

\bibitem{wang2023nonlinear}
M.~Wang and A.~Tayebi, ``Nonlinear attitude estimation using intermittent and multirate vector measurements,'' {\em IEEE Transactions on Automatic Control}, vol.~69, no.~8, pp.~5231--5245, 2023.

\bibitem{leffertsKalmanFilteringSpacecraft1982}
E.~Lefferts, F.~Markley, and M.~Shuster, ``Kalman {{Filtering}} for {{Spacecraft Attitude Estimation}},'' {\em Journal of Guidance, Control, and Dynamics}, vol.~5, pp.~417--429, Sept. 1982.

\bibitem{markley2003attitude}
F.~L. Markley, ``Attitude error representations for kalman filtering,'' {\em Journal of guidance, control, and dynamics}, vol.~26, no.~2, pp.~311--317, 2003.

\bibitem{barrauIntrinsicFilteringLie2015}
A.~Barrau and S.~Bonnabel, ``Intrinsic {{Filtering}} on {{Lie Groups With Applications}} to {{Attitude Estimation}},'' {\em IEEE Transactions on Automatic Control}, vol.~60, pp.~436--449, Feb. 2015.

\bibitem{barrauInvariantExtendedKalman2017}
A.~Barrau and S.~Bonnabel, ``The {{Invariant Extended Kalman Filter}} as a {{Stable Observer}},'' {\em IEEE Transactions on Automatic Control}, vol.~62, pp.~1797--1812, Apr. 2017.

\bibitem{barrauStochasticObserversLie2018}
A.~Barrau and S.~Bonnabel, ``Stochastic observers on {{Lie}} groups: A tutorial,'' in {\em 2018 {{IEEE Conference}} on {{Decision}} and {{Control}} ({{CDC}})}, pp.~1264--1269, IEEE, 2018.

\bibitem{barrau2016invariant}
A.~Barrau and S.~Bonnabel, ``The invariant extended kalman filter as a stable observer,'' {\em IEEE Transactions on Automatic Control}, vol.~62, no.~4, pp.~1797--1812, 2016.

\bibitem{barrauExtendedKalmanFiltering2020}
A.~Barrau and S.~Bonnabel, ``Extended {{Kalman Filtering With Nonlinear Equality Constraints}}: {{A Geometric Approach}},'' {\em IEEE Transactions on Automatic Control}, vol.~65, pp.~2325--2338, June 2020.

\bibitem{barrau2022geometry}
A.~Barrau and S.~Bonnabel, ``The geometry of navigation problems,'' {\em IEEE Transactions on Automatic Control}, vol.~68, no.~2, pp.~689--704, 2022.

\bibitem{guiQuaternionInvariantExtended2018}
H.~Gui and A.~H.~J. {de Ruiter}, ``Quaternion {{Invariant Extended Kalman Filtering}} for {{Spacecraft Attitude Estimation}},'' {\em Journal of Guidance, Control, and Dynamics}, vol.~41, pp.~863--878, Apr. 2018.

\bibitem{barrau2015ekf}
A.~Barrau and S.~Bonnabel, ``An {EKF-SLAM} algorithm with consistency properties,'' {\em arXiv preprint arXiv:1510.06263}, 2015.

\bibitem{liuInGVIOConsistentInvariant2023}
C.~Liu, C.~Jiang, and H.~Wang, ``{{InGVIO}}: {{A Consistent Invariant Filter}} for {{Fast}} and {{High-Accuracy GNSS-Visual-Inertial Odometry}},'' {\em IEEE Robotics and Automation Letters}, vol.~8, pp.~1850--1857, Mar. 2023.

\bibitem{songRightInvariantExtended2022}
Y.~Song, Z.~Zhang, J.~Wu, Y.~Wang, L.~Zhao, and S.~Huang, ``A {{Right Invariant Extended Kalman Filter}} for {{Object Based SLAM}},'' {\em IEEE Robotics and Automation Letters}, vol.~7, pp.~1316--1323, Apr. 2022.

\bibitem{chirikjian2011stochastic}
G.~S. Chirikjian, {\em Stochastic models, information theory, and Lie groups, volume 2: Analytic methods and modern applications}, vol.~2.
\newblock Springer Science \& Business Media, 2011.

\bibitem{barrauInvariantKalmanFiltering2018}
A.~Barrau and S.~Bonnabel, ``Invariant {{Kalman Filtering}},'' {\em Annual Review of Control, Robotics, and Autonomous Systems}, vol.~1, pp.~237--257, May 2018.

\bibitem{Wang2006ErrorPO}
Y.~Wang and G.~S. Chirikjian, ``Error propagation on the {{Euclidean}} group with applications to manipulator kinematics,'' {\em IEEE Transactions on Robotics}, vol.~22, no.~4, pp.~591--602, 2006.

\bibitem{wang2008nonparametric}
Y.~Wang and G.~S. Chirikjian, ``Nonparametric second-order theory of error propagation on motion groups,'' {\em The International journal of robotics research}, vol.~27, no.~11-12, pp.~1258--1273, 2008.

\bibitem{wolfe2011bayesian}
K.~C. Wolfe, M.~Mashner, and G.~S. Chirikjian, ``Bayesian fusion on lie groups,'' {\em Journal of Algebraic Statistics}, vol.~2, no.~1, 2011.

\bibitem{Bourmaud2014ContinuousDiscreteEK}
G.~Bourmaud, R.~M{\'e}gret, M.~Arnaudon, and A.~Giremus, ``Continuous-discrete extended kalman filter on matrix lie groups using concentrated gaussian distributions,'' {\em Journal of Mathematical Imaging and Vision}, vol.~51, pp.~209 -- 228, 2014.

\bibitem{wang2021matrix}
W.~Wang and T.~Lee, ``Matrix {{Fisher}}--{{Gaussian}} distribution on $\mathrm{SO}(3)\times\mathbb{R}^n$ and {{Bayesian}} attitude estimation,'' {\em IEEE Transactions on Automatic Control}, vol.~67, no.~5, pp.~2175--2191, 2021.

\bibitem{mardiaDirectionalStatistics1999}
K.~V. Mardia and P.~E. Jupp, eds., {\em Directional {{Statistics}}}.
\newblock Wiley {{Series}} in {{Probability}} and {{Statistics}}, Hoboken, NJ, USA: John Wiley \& Sons, Inc., Jan. 1999.

\bibitem{leyModernDirectionalStatistics2017}
C.~Ley and T.~Verdebout, {\em Modern {{Directional Statistics}}}.
\newblock {Chapman and Hall/CRC}, 1~ed., Aug. 2017.

\bibitem{leyAppliedDirectionalStatistics2019}
C.~Ley and T.~Verdebout, eds., {\em Applied Directional Statistics}.
\newblock Boca Raton, FL: CRC Press, 2019.

\bibitem{downsOrientationStatistics1972}
T.~D. Downs, ``Orientation statistics,'' {\em Biometrika}, vol.~59, no.~3, pp.~665--676, 1972.

\bibitem{khatriMisesFisherMatrixDistribution1977}
C.~G. Khatri and K.~V. Mardia, ``The von {{Mises-Fisher Matrix Distribution}} in {{Orientation Statistics}},'' {\em Journal of the Royal Statistical Society: Series B (Methodological)}, vol.~39, pp.~95--106, Sept. 1977.

\bibitem{binghamAntipodallySymmetricDistribution1974}
C.~Bingham, ``An antipodally symmetric distribution on the sphere,'' {\em The Annals of Statistics}, pp.~1201--1225, 1974.

\bibitem{leeBayesianAttitudeEstimation2018}
T.~Lee, ``Bayesian attitude estimation with the matrix {{Fisher}} distribution on {{SO}} (3),'' {\em IEEE Transactions on Automatic Control}, vol.~63, no.~10, pp.~3377--3392, 2018.

\bibitem{kurzUnscentedMisesFisher2016}
G.~Kurz, I.~Gilitschenski, and U.~D. Hanebeck, ``Unscented von {{Mises}}--{{Fisher Filtering}},'' {\em IEEE Signal Processing Letters}, vol.~23, pp.~463--467, Apr. 2016.

\bibitem{wangBinghamGaussianDistributionBackslash2021}
W.~Wang and T.~Lee, ``Bingham-{{Gaussian Distribution}} on $\mathbb {S}^{3}\times\mathbb {R}^{n} $ for {{Unscented Attitude Estimation}},'' in {\em 2021 {{IEEE International Conference}} on {{Multisensor Fusion}} and {{Integration}} for {{Intelligent Systems}} ({{MFI}})}, pp.~1--7, IEEE, 2021.

\bibitem{wang2020matrix}
W.~Wang and T.~Lee, ``Matrix {{Fisher}}-{{Gaussian}} distribution on $\mathrm {SO}(3)\times\mathbb {R}^ n $ for attitude estimation with a gyro bias,'' {\em arXiv preprint arXiv:2003.02180}, 2020.

\bibitem{leeBayesianAttitudeEstimation2018b}
T.~Lee, ``Bayesian {{Attitude Estimation}} with {{Approximate Matrix Fisher Distributions}} on {{SO}}(3),'' in {\em 2018 {{IEEE Conference}} on {{Decision}} and {{Control}} ({{CDC}})}, (Miami Beach, FL), pp.~5319--5325, IEEE, Dec. 2018.

\bibitem{markleyAttitudeDeterminationUsing1988}
F.~L. Markley, ``Attitude determination using vector observations and the singular value decomposition,'' {\em Journal of the Astronautical Sciences}, vol.~36, no.~3, pp.~245--258, 1988.

\bibitem{chirikjianEngineeringApplicationsNoncommutative2000}
G.~S. Chirikjian, {\em Engineering Applications of Noncommutative Harmonic Analysis: With Emphasis on Rotation and Motion Groups}.
\newblock CRC press, 2000.

\bibitem{fisherDispersionSphere1953}
R.~A. Fisher, ``Dispersion on a sphere,'' {\em Proceedings of the Royal Society of London. Series A. Mathematical and Physical Sciences}, vol.~217, no.~1130, pp.~295--305, 1953.

\bibitem{sanyal2006globaloptimalattitudeestimation}
A.~K. Sanyal, T.~Lee, M.~Leok, and N.~H. McClamroch, ``Global optimal attitude estimation using uncertainty ellipsoids,'' 2006.

\bibitem{Lee2017matrix}
T.~Lee, ``Matrix {Fisher} distribution.'' Available: \url{https://github.com/fdcl-gwu/Matrix-Fisher-Distribution}, 2017.
\newblock [Online].

\bibitem{Wahba1965ALS}
G.~Wahba, ``A least squares estimate of satellite attitude,'' {\em Siam Review}, vol.~7, pp.~409--409, 1965.

\bibitem{berkane2017hybridlong}
S.~Berkane, {\em Hybrid attitude control and estimation on SO (3)}.
\newblock PhD thesis, The University of Western Ontario (Canada), 2017.

\bibitem{loveGaussianStatisticsPalaeomagnetic2003}
J.~J. Love and C.~G. Constable, ``Gaussian statistics for palaeomagnetic vectors,'' {\em Geophysical Journal International}, vol.~152, pp.~515--565, Mar. 2003.

\end{thebibliography}

\end{document}